\newtheorem{theorem}{Theorem}
\newtheorem{definition}[theorem]{Definition}
\newtheorem{example}[theorem]{Example}
\newtheorem{lemma}[theorem]{Lemma}
\newtheorem{notation}[theorem]{Notation}
\newtheorem{proposition}[theorem]{Proposition}
\newtheorem{remark}[theorem]{Remark}
\newenvironment{proof}[1][Proof]{\textbf{#1.} }{\ \rule{0.5em}{0.5em}}
\newcommand \ucomment[1]{}
\newcommand \citep[1]{\cite{#1}}
\newcommand \LongVersion[1]{}
\newcommand \tif {\text{ if }}
\newcommand \tand {\textrm{ and }}
\newcommand \tor {\text{ or }}
\newcommand \fail {\mathbb{F}}
\newcommand \terms {\mathcal{T}(\Sigma,\mathcal{X})}
\begin{document}

\title{Computer-Aided Derivation of Multi-scale Models: A Rewriting Framework \thanks{%
This work is partially supported by the European Territorial Cooperation
Programme INTERREG IV A France-Switzerland 2007-2013.}
}
\author{ Bin Yang \\
{\small Department of Applied Mathematics}\\
[-0.8ex] {\small Northwestern Polytechnical University }\\
[-0.8ex] {\small 710129 Xi'an Shaanxi, China }\\
{\small University of Franche-Comt\'{e},}\\
[-0.8ex] {\small 26 chemin de l'Epitaphe, 25030 Besan\c{c}on Cedex, France}\\
[-0.8ex] {\small \texttt{bin.yang@femto-st.fr}}\\
\and Walid Belkhir \\
{\small INRIA Nancy - Grand Est, }\\
[-0.8ex]{\small  CASSIS project, }\\
[-0.8ex] {\small 54600 Villers-l\`es-Nancy, France }\\
{\small \texttt{walid.belkhir@inria.fr}} 
\and Michel Lenczner \\
{\small FEMTO-ST, D\'{e}partement Temps-Fr\'{e}quence,}\\
[-0.8ex] {\small University of Franche-Comt\'{e},}\\
[-0.8ex] {\small 26 chemin de l'Epitaphe, 25030 Besan\c{c}on Cedex, France}\\
[-0.8ex] {\small \texttt{michel.lenczner@utbm.fr} }}
\maketitle


\abstract{
We introduce a framework for computer-aided derivation of multi-scale
models. It relies on a combination of an asymptotic method used in the field
of partial differential equations with term rewriting techniques coming from
computer science.
  In our approach, a multi-scale model derivation is
characterized by the features taken into account in the asymptotic analysis.
Its formulation consists in a derivation of a reference model\textit{\ }associated to an elementary nominal model, and in a set of transformations
to apply to this proof until it takes into account the wanted features. In
addition to the reference model proof, the framework includes first order
rewriting principles designed for asymptotic model derivations, and second
order rewriting principles dedicated to transformations of model
derivations. We apply the method to generate a family of homogenized models
for second order elliptic equations with periodic coefficients that could be
posed in multi-dimensional domains, with possibly multi-domains and/or thin
domains.

}

\section{Introduction}

There is a vast literature on multi-scale methods for partial differential
equations both in applied mathematics and in many modeling areas. Among all
developed methods, asymptotic methods occupy a special place because they
have rigorous mathematical foundations and can lead to error estimates based
on the small parameters involved in the approach. This is a valuable aspect
from the model reliability point of view. They have been applied when a
physical problem depends on one or more small parameters which can be some
coefficients or can be related to the geometry. Their principle is to
identify the asymptotic model obtained when the parameters tend to zero. For
instance, this method applies in periodic homogenization, i.e. to systems
consisting of a large number of periodic cells, the small parameter being
the ratio of the cell size over the size of the complete system, see for
instance \cite{BenLio, CioDon, JikKoz}. Another well-developed case is when
parts of a system are thin, e.g. thin plates as in [Cia], that is to say
that some of their dimensions are small compared to others. A third kind of
use is that of strongly heterogeneous systems e.g. \cite{BouBel}, i.e. when
equation coefficients are much smaller in some parts of a system than in
others. These three cases can be combined in many ways leading to a broad
variety of configurations and models. In addition, it is possible to take
into account several nested scales and the asymptotic characteristics can be
different at each scale: thin structures to a scale, periodic structures to
another, etc.... It is also possible to cover cases where the asymptotic
phenomena happen only in certain regions or even are localized to the
boundary. Moreover, different physical phenomena can be taken into account:
heat transfer, solid deformations, fluid flow, fluid-structure interaction
or electromagnetics. In each model, the coefficients can be random or
deterministic. Finally, different operating regimes can be considered as the
static or the dynamic regimes, or the associated spectral problems. Today,
there exists a vast literature covering an impressive variety of
configurations.

\bigskip

Asymptotic methods, considered as model reduction techniques, are very
useful for complex system simulation and are of great interest in the software design
 community. They enjoy a number of advantages. The resulting models
are generally much faster (often by several order of magnitude -- depending
on the kind of model simplification --) to simulate than the original one
and are fully parameterized.\ In addition, they do not require any long
numerical calculation for building them, so they can be inserted into
identification and optimization loops of a design process.\ Finally, they
are of general use and they can be rigorously applied whenever a model
depends on one or several small parameters and the error between their
solution and nominal model solution can be estimated.

\bigskip

Despite these advantages, we observe that the asymptotic modeling techniques
have almost not been transferred in general industrial simulation software
 while numerical techniques, as for instance the Finite
Element Method, have been perfectly integrated in many design tools. The
main limitation factor for their dissemination is that each new problem
requires new long hand-made calculations that may be based on a relatively
large variety of techniques. In the literature, each published paper focus
on a special case regarding geometry or physics, and no work is oriented in
an effort to deal with a more general picture. Moreover, even if a large
number of models combining various features have already been derived, the
set of already addressed problems represents only a tiny fraction of those
that could be derived from all possible feature combinations using existing
techniques.

\bigskip

Coming to this conclusion, we believe that what prevents the use of
asymptotic methods by non-specialists can be formulated as a scientific
problem that deserves to be posed. It is precisely the issue that we discuss
in this paper. We would like to establish a mathematical framework for
combining asymptotic methods of different nature and thus for producing a
wide variety of models. This would allow the derivation of complex
asymptotic models are made by computers. In this paper, we present first
elements of a solution by combining some principles of asymptotic model
derivations and \textit{rewriting} methods issued from computer science.

\bigskip

In computer science equational reasoning is usually described by rewrite
rules, see \cite{BaaNip} for a classical reference. A rewrite rule $%
t\rightarrow u$ states that every occurrence of an instance of $t$ in a term
can be replaced with the corresponding instance of $u$. Doing so, a proof
based on a sequence of equality transformations is reduced to a series of
rewrite rule applications. Rules can have further conditions and can be
combined by specifying strategies which specify where and when to apply
them, see for instance \cite{Terese03, rhoCalIGLP-I+II-2001, Cirstea200551,
BKK+99, RewriteStrat_CHK2003}

\bigskip

The method developed in this paper is led by the idea of derivating models
by generalization. For this purpose, it introduces a \textit{reference model}
with its derivation and a way to generate generalizations to cover more
cases. The level of detail in the representation of mathematical objects
should be carefully chosen. On the one hand it should have enough precision
to cover fairly wide range of models and on the other hand calculations
should be reasonably sized. The way the generalizations are made is important
so that they could be formulated in a single framework.

\bigskip

In this paper, we select as \textit{reference problem} that of the
periodic homogenization of a scalar second order elliptic equation posed in
a one-dimension domain and with Dirichlet boundary conditions. Its
derivation is based on the use of the two-scale transform operator
introduced in \cite{ArbDou}, and reused in \cite{BouLuc}. We quote that
homogenization of various problems using this transformation was performed
according to different techniques in \cite{Len97, Len06, LenSmi07, Cas00,
CioDam02, CioDam08}. Here, we follow that  of \cite{LenSmi07}, so a number of
basic \textit{properties} coming from this paper are stated and considered
as the building blocks of the proofs. The complete derivation of the model
is organized into seven \textit{lemmas} and whose proof is performed by a
sequence of applications of these \textit{properties}. Their generalization
to another problem requires generalization of certain properties, which is
assumed to be made independently. It may also require changes in the path
of the proof, and even adding new lemmas. The mathematical concepts are
common in the field of partial differential equations: geometric domains,
variables defined on these domains, functions of several variables,
operators (e.g. derivatives, integrals, two-scale transform, etc..).
Finally, the proofs of Lemmas are designed to be realizable by rewriting.

\bigskip

Then, we presents a computational framework based on the theory of rewriting
to express the above method. Each \textit{property} is expressed as a
rewrite rule that can be conditional, so that it can be applied or not
according to a given logical formula. A step in a lemma proof is realized by
a strategy that expresses how the rule applies. The complete proof of a
lemma is then a sequence of such strategies. Ones we use have been developed
in previous work \cite{BGL-JSC10} that is implemented in Maple$^{\registered %
}$, here we provide its formalization. To allow the successful application
of rewriting strategies to an expression that contains associative and/or
commutative operations, such as $+,*,\cup,\cap$, etc, we use the concept of
rewriting modulo an equational theory \cite[\S 11]{BaaNip}. Without such
concept one needs to duplicate the rewriting rules.

In this work, rewriting operates on expressions whose level of abstraction
accurately reflects the mathematical framework. Concrete descriptions of
geometric domains, functions or operators are not provided. Their
description follows a grammar that has been defined in order that they carry
enough information allowing for the design of the rewriting rules and the
strategies. In some conditions of rewriting rules, the set of variables on
which an expression depends is required. This is for example the case for
the linearity property of the integral. Rather than introducing a typing
system, which would be cumbersome and restrictive, we introduced a specific
functionality in the form of a $\lambda $-term (i.e. a program). The
language of strategy allows this use. Put together all these concepts can
express a lemma proof as a strategy, i.e. a first order strategy, and
therefore provide a framework of symbolic computation. The concept of
generalization of a proof is introduced as second order rewrite strategies,
made with second order rewriting rules, operating on first order strategies.
They can transform first order rewrite rules and strategies and, where
appropriate, remove or add new ones. This framework has been implemented in
the software Maple$^{\registered }$. We present its application to the
complete proof of the reference problem and also to the generalizations of
the first lemma, by applying second order strategies, to multi-dimensional
geometrical domains, multi-dimensional thin domains and multi-domains.

The paper is organized as follows. Section \ref{Skeleton:Sec} is devoted to
all mathematical aspects. This includes all definitions and properties, the
lemmas and their proof. The principles of rewrite rules and strategies are
formulated in Section \ref{rewriting_stratgeies:sec}. Section \ref{A
framework for modelderivation} is devoted to the theoretical framework that
allows to derive a model and its generalizations. Implementation results are
described in section \ref{Implementation and Experiments}.


\section{Skeleton of two-scale modeling\label{Skeleton:Sec}}

We recall the framework of the two-scale convergence as presented in \cite%
{LenSmi07}, and the proof of the \textit{reference model} whose
implementation and extension under the form of algorithms of symbolic
computation are discussed in Section \ref{Implementation and Experiments}.
The presentation is divided into three subsections.\ The first one is
devoted to basic definitions and properties, stated as \textit{Propositions}%
. The latter are admitted without proof because they are assumed to be
prerequisites, or building blocks, in the proofs. They are used as
elementary steps in the two other sections detailing the proof of the
convergence of the two-scale transform of a derivative, and the homogenized
model derivation. The main statements of these two subsections are also
stated as \textit{Propositions} and their proofs are split into numbered
blocks called lemmas. Each lemma is decomposed into steps refering to the
definitions and propositions. All components of the \textit{reference model }%
derivation, namely the definitions, the propositions, the lemmas and the
proof steps are designed so that to be easily implemented and also to be
generalized for more complex models. We quote that a number of elementary
properties are used in the proof but are not explicitely stated nor cited.

\subsection{Notations, Definitions and Propositions\label{Notations,
Definitions and Propositions}}

\noindent Note that the functional framework used in this section is not as
precise as it should be for a usual mathematical work. The reason is that
the functional analysis is not covered by our symbolic computation.\ So,
precise mathematical statements and justifications are not in the focus of
this work.

\bigskip

\noindent In the sequel, $A\subset \mathbb{R}^{n}$ is a bounded open set$,$
with measure $|A|,$ having a "sufficiently"\textit{\ }regular boundary $%
\partial A$ and with unit outward normal denoted by $n_{\partial A}$. We
shall use the set $L^{1}(A)$ of integrable functions and the set $L^{p}(A)$,
for any $p>0$, of functions $f$ such that $f^{p}\in L^{1}(A),$ with norm $%
||v||_{L^{p}(A)}=(\int_{A}|v|^{p}$ $dx)^{1/p}.$ The Sobolev space $H^{1}(A)$
is the set of functions $f\in L^{2}(A)$ whose  gradient $\nabla f\in
L^{2}(A)^{n}.$ The set of $p$ times differentiable functions on $A$ is
denoted by $\mathcal{C}^{p}(A)$, where $p$ can be any integer or $\infty $.
Its subset $\mathcal{C}_{0}^{p}(A)$ is composed of functions whose  partial
derivatives are vanishing on the boundary $\partial A$ of $A$ until the
order $p$. For any integers $p$ and $q,$ $\mathcal{C}^{q}(A)\subset L^{p}(A)$%
. When $A=(0,a_{1})\times ...\times (0,a_{n})$ is a cuboid (or rectangular
parallelepiped) we say that a function $v$ defined in $\mathbb{R}^{n}$ is $A$%
-periodic if for any $\ell \in \mathbb{Z}^{n},$ $v(y+\sum_{i=1}^{n}\ell
_{i}a_{i}e_{i})=v(y)$ where $e_{i}$ is the $i^{th}$ vector of the canonical
basis of $\mathbb{R}^{n}$. The set of $A$-periodic functions which are $%
\mathcal{C}^{\infty }$ is denoted by $\mathcal{C}_{\sharp }^{\infty }(A)$
and those which are in $H^{1}(A)$ is denoted by $H_{\sharp }^{1}(A)$. The
operator $tr$ (we say \textit{trace}) can be defined as the restriction
operator from functions defined on the closure of $A$ to functions defined
on its boundary $\partial A$. Finally, we say that a sequence $%
(u^{\varepsilon })_{\varepsilon >0}\in L^{2}(A)$ converges strongly in $%
L^{2}(A)$ towards $u^{0}\in L^{2}(A)$ when $\varepsilon $ tends to zero if $%
\lim_{\varepsilon \rightarrow 0}||u^{\varepsilon }-u^{0}||_{L^{2}(A)}=0$.
The convergence is said to be weak if $\lim_{\varepsilon \rightarrow
0}\int_{A}(u^{\varepsilon }-u^{0})v$ $dx=0$ for all $v\in L^{2}(A)$. We
write $u^{\varepsilon }=u^{0}+O_{s}(\varepsilon )$ (respectively $%
O_{w}(\varepsilon )$)$,$ where $O_{s}(\varepsilon )$ (respectively $%
O_{w}(\varepsilon )$) represents a sequence tending to zero strongly
(respectively weakly) in $L^{2}(A)$. Moreover, the simple notation $%
O(\varepsilon )$ refers to a sequence of numbers which simply tends to zero.
We do not detail the related usual computation rules.

\begin{proposition}
\label{Interpretation of a weak equality}\textbf{[Interpretation of a weak
equality] }For $u\in L^{2}(A)$ and for any $v\in \mathcal{C}_{0}^{\infty }(A)$,%
\begin{equation*}
\text{if }\int_{A}u(x)\text{ }v(x)\text{ }dx=0\text{ then }u=0
\end{equation*}%
in the sense of $L^{2}(A)$ functions.
\end{proposition}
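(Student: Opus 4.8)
The plan is to recognize this statement as the classical fundamental lemma of the calculus of variations (the du Bois-Reymond lemma) and to prove it by exploiting the density of smooth test functions in $L^2(A)$. The essential observation is that the hypothesis says $u$ is $L^2$-orthogonal to every element of $\mathcal{C}_0^\infty(A)$, and the only vector orthogonal to a \emph{dense} subspace of a Hilbert space is the zero vector. So the whole proof reduces to one density fact plus a one-line continuity argument.

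First I would invoke the density of $\mathcal{C}_0^\infty(A)$ in $L^2(A)$: since $A$ is a bounded open set with a sufficiently regular boundary, the compactly supported smooth functions — which are contained in $\mathcal{C}_0^\infty(A)$ — form a dense subspace of $L^2(A)$, the standard proof proceeding by truncation followed by mollification. Consequently I can choose a sequence $(v_n)_n \subset \mathcal{C}_0^\infty(A)$ with $\|v_n - u\|_{L^2(A)} \to 0$.

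Next, the hypothesis yields $\int_A u\, v_n\, dx = 0$ for every $n$. Passing to the limit and using the continuity of the $L^2$ inner product (Cauchy--Schwarz) gives
\[
\int_A u\, v_n\, dx \longrightarrow \int_A u\, u\, dx = \|u\|_{L^2(A)}^2,
\]
since $\bigl| \int_A u\,(v_n - u)\, dx \bigr| \le \|u\|_{L^2(A)}\, \|v_n - u\|_{L^2(A)} \to 0$. Because the left-hand side is identically zero, I conclude $\|u\|_{L^2(A)}^2 = 0$, that is, $u = 0$ as an element of $L^2(A)$.

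The only nontrivial ingredient is the density statement, which is where all the genuine work of the lemma resides; the rest is bookkeeping. Since the paper explicitly places functional analysis outside the scope of the symbolic computation, I expect the density of $\mathcal{C}_0^\infty(A)$ in $L^2(A)$ to be assumed as a prerequisite, so the main obstacle is simply verifying that the approximating sequence one selects genuinely lies in $\mathcal{C}_0^\infty(A)$ rather than in some larger class of functions.
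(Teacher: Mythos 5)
Your argument is correct: the density of compactly supported smooth functions in $L^{2}(A)$ (truncation plus mollification), their containment in the paper's class $\mathcal{C}_{0}^{\infty }(A)$ (functions all of whose derivatives vanish on $\partial A$), and Cauchy--Schwarz give $\Vert u\Vert _{L^{2}(A)}^{2}=\lim_{n}\int_{A}u\,v_{n}\,dx=0$, hence $u=0$ in $L^{2}(A)$; this is the standard Hilbert-space form of the fundamental lemma of the calculus of variations, and no step fails. There is, however, nothing in the paper to compare it against: this proposition sits in the subsection of notations, definitions and propositions whose statements are, in the paper's own words, ``admitted without proof because they are assumed to be prerequisites, or building blocks, in the proofs.'' In the paper it functions purely as an elementary rewrite step --- it is invoked, for instance, in Step 7 of the first block, Step 6 of the second block, and Step 2 of the fourth block --- and is never established mathematically, since functional analysis is explicitly placed outside the scope of the symbolic framework. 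So your closing guess was nearly right but slightly off: it is not merely the density of $\mathcal{C}_{0}^{\infty }(A)$ in $L^{2}(A)$ that is taken as a prerequisite, it is the entire proposition. Your proof is a perfectly sound way to discharge that prerequisite; what the paper's treatment buys instead is that the statement can be used as an atomic rewriting rule without importing any analysis into the computational framework, while what your treatment buys is an actual mathematical justification of that building block.
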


\begin{proposition}
\label{Interpretation of a periodic boundary condition}\textbf{%
[Interpretation of a periodic boundary condition] }For $u\in H^{1}(A)$ and
for any $v\in \mathcal{C}_{\#}^{\infty }\left( A\right) $,%
\begin{equation*}
\text{if }\int_{\partial A}u(x)\ v(x)\ n_{\partial A}(x)\ dx=0\text{ then }%
u\in H_{\sharp }^{1}\left( A\right) .
\end{equation*}
\end{proposition}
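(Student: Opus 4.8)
The plan is to exploit the vector character of the outward normal $n_{\partial A}$ to split the single boundary identity into a family of scalar identities, one for each pair of opposite faces of the cuboid $A=(0,a_{1})\times\cdots\times(0,a_{n})$, and then to invoke Proposition \ref{Interpretation of a weak equality} on each face. First I would record the characterization that is really at stake: for a cuboid, $u\in H_{\sharp}^{1}(A)$ is equivalent to the trace $tr(u)$ taking equal values on opposite faces of $\partial A$, since matching of these traces is exactly what allows the periodic extension of $u$ to stay in $H^{1}$ across the cell interfaces. Hence it suffices to prove, for each coordinate direction $i$, that the trace of $u$ on the face $\{x_{i}=a_{i}\}$ coincides with its trace on the face $\{x_{i}=0\}$.

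Next I would project the hypothesis onto the $i$-th coordinate. On the two faces normal to $e_{i}$ the outward normal equals $+e_{i}$ and $-e_{i}$ respectively, while on every other face its $i$-th component vanishes; this is the step where the vector-valued normal is essential, as it isolates a single face-pair. Writing $F_{i}$ for the $(n-1)$-dimensional face and $x'$ for the remaining coordinates, the $i$-th component of the vanishing integral reads
\begin{equation*}
\int_{F_{i}}\bigl(u(x',a_{i})\,v(x',a_{i})-u(x',0)\,v(x',0)\bigr)\,dx'=0 .
\end{equation*}
Because $v$ is $A$-periodic its traces on the two opposite faces coincide, say $v(x',a_{i})=v(x',0)=\bar v(x')$, so the identity collapses to $\int_{F_{i}}\bigl(u(x',a_{i})-u(x',0)\bigr)\bar v(x')\,dx'=0$.

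The last step is to upgrade this weak identity to a pointwise equality of traces. Every smooth function $\bar v$ that is compactly supported in the open face $F_{i}$ arises as such a trace: take $v(x)=\bar v(x')$, independent of $x_{i}$, which is smooth, $A$-periodic, and has trace $\bar v$ on both $x_{i}$-faces. The difference $g_{i}:=u(\cdot,a_{i})-u(\cdot,0)$ lies in $L^{2}(F_{i})$ by the trace theorem, and it satisfies $\int_{F_{i}}g_{i}\,\bar v\,dx'=0$ for all such $\bar v$; applying Proposition \ref{Interpretation of a weak equality} on the face $F_{i}$ gives $g_{i}=0$. Repeating the argument for each $i$ yields matching traces on all pairs of opposite faces, hence $u\in H_{\sharp}^{1}(A)$.

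I expect the main obstacle to be the trace-theoretic bookkeeping rather than any single computation: one must use that the trace of an $H^{1}$ function is a well-defined $L^{2}$ function on each face, that equality of these face traces characterizes $H_{\sharp}^{1}(A)$, and that the traces of smooth periodic functions are rich enough to feed the fundamental-lemma argument. Since, as the authors note, the functional-analytic layer is deliberately left informal, I would treat the trace theorem and this density fact as admitted background and keep the emphasis on the algebraic separation of the face-pairs through the components of $n_{\partial A}$, which is the genuinely load-bearing idea.
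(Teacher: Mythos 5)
Your proof cannot be compared with the paper's, because the paper does not contain one: this proposition belongs to Section \ref{Notations, Definitions and Propositions}, whose propositions are, in the authors' words, ``admitted without proof because they are assumed to be prerequisites, or building blocks, in the proofs'' — the functional-analytic layer is deliberately left outside the symbolic framework. Judged on its own merits, your argument is correct and is the standard one. The restriction to cuboids is implicit in the statement itself, since the paper only defines $A$-periodicity, $\mathcal{C}_{\#}^{\infty}(A)$ and $H_{\sharp}^{1}(A)$ when $A$ is a cuboid. The genuinely load-bearing step is the one you identify: the $i$-th component of the vector-valued boundary integral isolates the pair of faces normal to $e_{i}$, because the $i$-th component of $n_{\partial A}$ vanishes on all other faces. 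Periodicity of $v$ then collapses the two boundary terms onto the trace jump $g_{i}=u(\cdot,a_{i})-u(\cdot,0)$, and your test-function construction is genuinely admissible: for $\bar v\in\mathcal{C}_{0}^{\infty}(F_{i})$, the function $v(x)=\bar v(x')$, extended periodically in $x'$ and constantly in $x_{i}$, is smooth (because $\bar v$ vanishes near $\partial F_{i}$, the periodic extension has no regularity defect across cell boundaries) and $A$-periodic, so Proposition \ref{Interpretation of a weak equality}, applied on the $(n-1)$-dimensional bounded open set $F_{i}$, yields $g_{i}=0$. The two facts you leave as admitted background — the $L^{2}$ trace theorem and the characterization of $H_{\sharp}^{1}(A)$ by matching traces on opposite faces — are precisely at the level of rigor at which the paper itself operates, so treating them as prerequisites is consistent with the authors' conventions.
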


\noindent In the remainder of this section, only the dimension $n=1$ is
considered, the general definitions being used for the generalizations
discussed in Section \ref{Implementation and Experiments}.

\begin{notation}
\label{Physical and microscopic Domains}\textbf{[Physical and microscopic
Domains] }We consider an interval $\Omega
=\bigcup\limits_{c=1}^{N(\varepsilon )}\Omega _{c}^{1,\varepsilon }\subset
\mathbb{R}$ divided into $N(\varepsilon )$ periodic cells (or intervals) $%
\Omega _{c}^{1,\varepsilon }$, of size $\varepsilon >0$, indexed by $c$, and
with center $x_{c}.$ The translation and magnification $(\Omega
_{c}^{1,\varepsilon }-x_{c})/\varepsilon $ is called the \textit{unit cell }%
and is denoted by $\Omega ^{1}$. The variables in $\Omega $ and in $\Omega
^{1}$ are denoted by $x^{\varepsilon }$ and $x^{1}.$
\end{notation}

\noindent The two-scale transform $T$ is an operator mapping functions
defined in the physical domain $\Omega $ to functions defined in the
two-scale domain $\Omega ^{\sharp }\times \Omega ^{1}$ where for the \textit{%
reference model} $\Omega ^{\sharp }=\Omega $. In the following, we shall
denote by $\Gamma ,$ $\Gamma ^{\sharp }$ and $\Gamma ^{1}$ the boundaries of
$\Omega ,$ $\Omega ^{\sharp }$ and $\Omega ^{1}$.

\begin{definition}
\label{Two-Scale Transform}\textbf{[Two-Scale Transform]} The two-scale
transform $T$ is the linear operator defined by
\begin{equation}
(Tu)(x_{c},x^{1})=u(x_{c}+\varepsilon x^{1})  \label{Def_TS_1D:nosum}
\end{equation}%
and then by extension $T(u)(x^{\sharp },x^{1})=u(x_{c}+\varepsilon x^{1})$
for all $x^{\sharp }\in \Omega _{c}^{1,\varepsilon }$ and each $c$ in $%
1,..,N(\varepsilon ).$
\end{definition}

\begin{notation}
\label{Measure of Domains}\textbf{[Measure of Domains] }$\kappa ^{0}=\frac{1%
}{|\Omega |}$ and $\kappa ^{1}=\frac{1}{|\Omega ^{\sharp }\times \Omega ^{1}|%
}.$
\end{notation}

\noindent The operator $T$ enjoys the following properties.

\begin{proposition}
\label{Product Rule}\textbf{[Product Rule] }For two functions $u,$ $v$
defined in $\Omega ,$%
\begin{equation}
T(uv)=(Tu)(Tv).  \label{MultiRuleOfT}
\end{equation}
\end{proposition}

\begin{proposition}
\label{Derivative Rule}\textbf{[Derivative Rule] }If $u$ and its derivative
are defined in $\Omega $ then%
\begin{equation}
T\left( \frac{du}{dx}\right) =\frac{1}{\varepsilon }\frac{\partial (Tu)}{%
\partial x^{1}}.  \label{PartialRuleOfT}
\end{equation}
\end{proposition}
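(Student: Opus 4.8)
The plan is to reduce the identity \eqref{PartialRuleOfT} to a single application of the chain rule, working pointwise from the defining formula \eqref{Def_TS_1D:nosum} of the two-scale transform. First I would fix a cell index $c$ and a point $x^1 \in \Omega^1$, and evaluate each side of the claimed equality as a function of the pair $(x_c, x^1)$, the key observation being that the cell center $x_c$ does not depend on the fast variable $x^1$.

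For the left-hand side, I set $w = du/dx$ and apply the definition of $T$ directly to $w$, which gives
\[
T\!\left(\frac{du}{dx}\right)(x_c, x^1) = w(x_c + \varepsilon x^1) = \frac{du}{dx}(x_c + \varepsilon x^1).
\]
For the right-hand side, I would differentiate $(Tu)(x_c, x^1) = u(x_c + \varepsilon x^1)$ with respect to $x^1$, holding $x_c$ fixed. Since the inner map $x^1 \mapsto x_c + \varepsilon x^1$ has derivative $\varepsilon$, the chain rule yields
\[
\frac{\partial (Tu)}{\partial x^1}(x_c, x^1) = \varepsilon\, \frac{du}{dx}(x_c + \varepsilon x^1),
\]
so that dividing by $\varepsilon$ reproduces exactly the left-hand side. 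The passage to the extended form $T(u)(x^\sharp, x^1)$ is immediate, since the same computation is valid for every $x^\sharp \in \Omega_c^{1,\varepsilon}$ within each cell.

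The computation itself is routine; the only point that genuinely requires care is the status of $x_c$ as a quantity independent of $x^1$. This is precisely what makes the partial derivative $\partial/\partial x^1$ see only the $\varepsilon x^1$ contribution, and it is the origin of the scaling factor $1/\varepsilon$ that distinguishes this rule from a naive transport of the derivative through $T$. I would therefore emphasize this independence as the single substantive step, the remainder being the elementary one-dimensional chain rule.
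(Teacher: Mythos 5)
Your proof is correct, but there is nothing in the paper to compare it against: Proposition \ref{Derivative Rule} belongs to the block of basic properties in Section \ref{Notations, Definitions and Propositions} that the authors explicitly admit \emph{without proof}, treating them as prerequisites or building blocks taken from the two-scale framework of \cite{LenSmi07}. The paper only ever \emph{uses} this rule (e.g.\ implicitly through the related Proposition \ref{Derivation Rule for B} and in the block proofs), so your contribution is to supply the missing elementary justification. Your argument is the standard and natural one: applying Definition \ref{Two-Scale Transform} to $w=du/dx$ gives the left-hand side pointwise, while differentiating $(Tu)(x_c,x^1)=u(x_c+\varepsilon x^1)$ in $x^1$ with $x_c$ held fixed produces the factor $\varepsilon$ by the chain rule, and division by $\varepsilon$ closes the identity. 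Your emphasis on the independence of $x_c$ from $x^1$ is exactly the right point to isolate; in the extended form $T(u)(x^\sharp,x^1)=u(x_c+\varepsilon x^1)$ the assignment $x^\sharp\mapsto x_c$ is piecewise constant (it only records which cell $\Omega_c^{1,\varepsilon}$ contains $x^\sharp$), so fixing $x^\sharp$ indeed fixes $x_c$, and the computation transfers cell by cell as you claim. The only caveat worth stating explicitly is that the hypothesis ``$u$ and its derivative are defined in $\Omega$'' is what licenses the chain rule at the points $x_c+\varepsilon x^1\in\Omega_c^{1,\varepsilon}\subset\Omega$, which your write-up uses but does not flag.
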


\begin{proposition}
\label{Integral Rule}\textbf{[Integral Rule] }If a function $u\in
L^{1}(\Omega )$ then $Tu\in L^{1}(\Omega ^{\sharp }\times \Omega ^{1})$ and%
\begin{equation}
\kappa ^{0}\int_{\Omega }u\text{ }dx=\kappa ^{1}\int_{\Omega ^{\sharp
}\times \Omega ^{1}}(Tu)\text{ }dx^{\sharp }dx^{1}.
\label{IntegralPropertyOfT}
\end{equation}
\end{proposition}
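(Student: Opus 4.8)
The plan is to evaluate the right-hand side directly by decomposing the two-scale domain over the periodic cells, performing an affine change of variables on each cell, and finally reconciling the two normalizing constants. Throughout I work in dimension $n=1$ with $\Omega^{\sharp}=\Omega=\bigcup_{c=1}^{N(\varepsilon)}\Omega_c^{1,\varepsilon}$, as fixed in Notation~\ref{Physical and microscopic Domains}, noting that $N(\varepsilon)=|\Omega|/\varepsilon$ is finite so that no convergence issue arises in the sums below.

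First I would exploit the defining feature of $T$ from Definition~\ref{Two-Scale Transform}: for $x^{\sharp}$ ranging in a given cell $\Omega_c^{1,\varepsilon}$ the value $(Tu)(x^{\sharp},x^1)=u(x_c+\varepsilon x^1)$ does not depend on $x^{\sharp}$. Splitting the integral over the cells therefore gives
\begin{equation*}
\int_{\Omega^{\sharp}\times\Omega^1}(Tu)\ dx^{\sharp}dx^1
=\sum_{c=1}^{N(\varepsilon)}\int_{\Omega_c^{1,\varepsilon}}\Big(\int_{\Omega^1}u(x_c+\varepsilon x^1)\ dx^1\Big)\,dx^{\sharp},
\end{equation*}
and since the inner integral is constant in $x^{\sharp}$, the outer integration merely contributes the factor $|\Omega_c^{1,\varepsilon}|=\varepsilon$.

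Next I would apply on each cell the change of variables $x=x_c+\varepsilon x^1$, which maps $\Omega^1=(\Omega_c^{1,\varepsilon}-x_c)/\varepsilon$ onto $\Omega_c^{1,\varepsilon}$ with $dx=\varepsilon\,dx^1$, so that $\int_{\Omega^1}u(x_c+\varepsilon x^1)\,dx^1=\frac{1}{\varepsilon}\int_{\Omega_c^{1,\varepsilon}}u\,dx$. Multiplying by the factor $\varepsilon$ from the previous step cancels the Jacobian, and summing the cell contributions reassembles $\sum_c\int_{\Omega_c^{1,\varepsilon}}u\,dx=\int_{\Omega}u\,dx$. The same computation carried out with $|u|$ in place of $u$ yields $\int_{\Omega^{\sharp}\times\Omega^1}|Tu|\,dx^{\sharp}dx^1=\int_{\Omega}|u|\,dx<\infty$, which is exactly the asserted membership $Tu\in L^1(\Omega^{\sharp}\times\Omega^1)$; in the present framework no finer measurability discussion is needed.

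It remains to match the constants of Notation~\ref{Measure of Domains}. Because each cell has length $\varepsilon$, its rescaling $\Omega^1$ has length $1$, that is $|\Omega^1|=1$; combined with $\Omega^{\sharp}=\Omega$ this gives $|\Omega^{\sharp}\times\Omega^1|=|\Omega|\,|\Omega^1|=|\Omega|$, hence $\kappa^1=\kappa^0$. Multiplying the established identity $\int_{\Omega^{\sharp}\times\Omega^1}(Tu)\,dx^{\sharp}dx^1=\int_{\Omega}u\,dx$ by this common constant yields the claim. The only point deserving care is this bookkeeping of the normalizations: the volume factor $|\Omega^1|$ introduced by the change of variables is precisely what is absorbed by the ratio $\kappa^0/\kappa^1=1$, and it is the equality $|\Omega^1|=1$ that makes the two sides coincide rather than differ by a spurious factor of $|\Omega^1|$.
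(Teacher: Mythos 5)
Your proof is correct, but there is nothing in the paper to compare it against: Proposition \ref{Integral Rule} belongs to the block of statements in Section \ref{Notations, Definitions and Propositions} that the authors explicitly admit \emph{without proof}, as prerequisites or ``building blocks'' imported from \cite{LenSmi07}. What you have done is supply the missing elementary verification, and your argument is the natural one: decompose $\Omega^{\sharp}\times\Omega^{1}$ over the cells $\Omega_{c}^{1,\varepsilon}$, use that $(Tu)(x^{\sharp},x^{1})$ is constant in $x^{\sharp}$ on each cell (which is exactly how Definition \ref{Two-Scale Transform} extends $T$ from cell centers to all of $\Omega^{\sharp}$), cancel the factor $|\Omega_{c}^{1,\varepsilon}|=\varepsilon$ coming from the outer integration against the Jacobian $\varepsilon^{-1}$ of the change of variables $x=x_{c}+\varepsilon x^{1}$, and then observe that $|\Omega^{1}|=1$ together with $\Omega^{\sharp}=\Omega$ forces $\kappa^{0}=\kappa^{1}$, so the normalizing constants match. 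Two minor points deserve attention. First, the logical order: to justify splitting the double integral into iterated integrals you should run the computation with $|u|$ first (Tonelli), conclude $\int_{\Omega^{\sharp}\times\Omega^{1}}|Tu|\,dx^{\sharp}dx^{1}=\int_{\Omega}|u|\,dx<\infty$, and only then repeat it with $u$ itself (Fubini); you do carry out the $|u|$ computation, so this is a matter of presentation rather than substance. Second, measurability of $Tu$ is waved away, which is acceptable here given the paper's own disclaimer that its functional framework ``is not as precise as it should be'' and that functional analysis lies outside the scope of the symbolic computation. With these caveats, your verification is complete and consistent with the role the proposition plays in the paper.
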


\noindent The next two properties are corollaries of the previous ones.

\begin{proposition}
\label{Inner Product Rule}\textbf{[Inner Product Rule] }For two functions $%
u, $ $v\in L^{2}(\Omega ),$%
\begin{equation}
\kappa ^{0}\int_{\Omega }u\text{ }v\text{ }dx=\kappa ^{1}\int_{\Omega
^{\sharp }\times \Omega ^{1}}(Tu)\text{ }(Tv)\text{ }dx^{\sharp }dx^{1}.
\end{equation}
\end{proposition}

\begin{proposition}
\label{Norm Rule}\textbf{[Norm Rule] }For a function $u\in L^{2}(\Omega ),$%
\begin{equation}
\kappa ^{0}\left\Vert u\right\Vert _{L^{2}(\Omega )}^{2}=\kappa
^{1}\left\Vert Tu\right\Vert _{L^{2}(\Omega ^{\sharp }\times \Omega
^{1})}^{2}.  \label{NormRuleOfT}
\end{equation}
\end{proposition}

\begin{definition}
\label{Two-Scale Convergence}\textbf{[Two-Scale Convergence] }A sequence $%
u^{\varepsilon }\in L^{2}(\Omega )$ is said to be two-scale strongly
(respect. weakly) convergent in $L^{2}(\Omega ^{\sharp }\times \Omega ^{1})$
to a limit $u^{0}(x^{\sharp },x^{1})$ if $Tu^{\varepsilon }$ is strongly
(respect. weakly) convergent towards $u^{0}$ in $L^{2}(\Omega ^{\sharp
}\times \Omega ^{1}).$
\end{definition}

\begin{definition}
\label{Adjoint or Dual of T}\textbf{[Adjoint or Dual of T] }As $T$ is a
linear operator from $L^{2}(\Omega )$ to $L^{2}(\Omega ^{\sharp }\times
\Omega ^{1}),$ its adjoint $T^{\ast }$ is a linear operator from $%
L^{2}(\Omega ^{\sharp }\times \Omega ^{1})$ to $L^{2}(\Omega )$ defined by
\begin{equation}
\kappa ^{0}\int_{\Omega }T^{\ast }v\text{ }u\text{ }dx=\kappa
^{1}\int_{\Omega ^{\sharp }\times \Omega ^{1}}v\text{ }Tu\text{ }dx^{\sharp
}dx^{1}.  \label{Def_TSAdj_1D}
\end{equation}
\end{definition}

\noindent The expression of $T^{\ast }$ can be explicited, it maps regular
functions in $\Omega ^{\sharp }\times \Omega ^{1}$ to piecewise-constant
functions in $\Omega $. The next definition introduce an operator used as a
smooth approximation of $T^{\ast }$.

\begin{definition}
\label{Regularization of T*}\textbf{[Regularization of T}$^{\ast }$\textbf{]
}The operator $B$ is the linear continuous operator defined from $%
L^{2}(\Omega ^{\sharp }\times \Omega ^{1})$ to $L^{2}(\Omega )$ by%
\begin{equation}
Bv=v(x,\frac{x}{\varepsilon }).  \label{Def_Bar_1D}
\end{equation}
\end{definition}

\noindent The nullity condition of a function $v(x^{\sharp },x^{1})$ on the
boundary $\partial \Omega ^{\sharp }\times \Omega ^{1}$ is transferred to
the range $Bv$ as follows.

\begin{proposition}
\label{Boundary Conditions of Bv}\textbf{[Boundary Conditions of Bv] }If $%
v\in \mathcal{C}_{0}^{\infty }(\Omega ^{\sharp };\mathcal{C}^{\infty
}(\Omega ^{1}))$ then $Bv\in \mathcal{C}_{0}^{\infty }(\Omega )$.
\end{proposition}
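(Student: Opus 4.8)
The plan is to write $w = Bv$, so that $w(x) = v(x, x/\varepsilon)$ with $v$ understood to be extended $\Omega^{1}$-periodically in its second argument (so that the composition is well defined for every $x \in \Omega$), and then to establish the two defining properties of membership in $\mathcal{C}_{0}^{\infty}(\Omega)$: that $w$ is infinitely differentiable on $\Omega$, and that $w$ together with all its derivatives vanishes on the boundary $\partial\Omega = \Gamma$. Throughout I use that for the \textit{reference model} $\Omega^{\sharp} = \Omega$, so that $\Gamma^{\sharp} = \Gamma$; this is what lets a boundary condition on $\partial\Omega^{\sharp}$ be read as a boundary condition on $\partial\Omega$.

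First I would settle smoothness. The inner map $x \mapsto (x, x/\varepsilon)$ is affine, hence $\mathcal{C}^{\infty}$, and $v$ is jointly smooth in $(x^{\sharp}, x^{1})$; by the chain rule the composition $w$ is $\mathcal{C}^{\infty}(\Omega)$. Moreover, since the inner map is affine, the total derivative is governed by the first-order operator $D = \partial_{x^{\sharp}} + \varepsilon^{-1}\partial_{x^{1}}$ acting on $v$ and then evaluated at $(x, x/\varepsilon)$, so that for every $k \geq 0$
\begin{equation*}
\frac{d^{k} w}{dx^{k}}(x) = \sum_{j=0}^{k} \binom{k}{j}\, \varepsilon^{-j}\, \left(\frac{\partial^{k} v}{\partial (x^{\sharp})^{k-j}\,\partial (x^{1})^{j}}\right)(x, x/\varepsilon).
\end{equation*}
This formula reduces the whole statement to controlling the mixed partial derivatives of $v$ on $\partial\Omega^{\sharp}$.

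The key step is the boundary analysis. The hypothesis $v \in \mathcal{C}_{0}^{\infty}(\Omega^{\sharp}; \mathcal{C}^{\infty}(\Omega^{1}))$ says that, for each fixed $x^{1}$, the function $x^{\sharp} \mapsto v(x^{\sharp}, x^{1})$ and all of its $x^{\sharp}$-derivatives vanish on $\partial\Omega^{\sharp}$. I would first observe that this class is stable under $\partial_{x^{1}}$: differentiating the identities $\partial_{x^{\sharp}}^{m} v(x^{\sharp}, x^{1}) = 0$ (valid for $x^{\sharp} \in \partial\Omega^{\sharp}$ and all $x^{1}$) with respect to $x^{1}$ shows that $\partial_{x^{1}}^{j} v$ again lies in $\mathcal{C}_{0}^{\infty}(\Omega^{\sharp}; \mathcal{C}^{\infty}(\Omega^{1}))$ for every $j$. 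Consequently each mixed derivative $\partial_{x^{\sharp}}^{k-j}\partial_{x^{1}}^{j} v(x^{\sharp}, x^{1})$ vanishes whenever $x^{\sharp} \in \partial\Omega^{\sharp}$. Evaluating the displayed formula at a point $x \in \partial\Omega = \partial\Omega^{\sharp}$ puts the first argument $x^{\sharp} = x$ on the boundary, so every summand vanishes; here it is essential that each term vanishes individually, since the weights $\varepsilon^{-j}$ would otherwise be uncontrolled. Hence $\frac{d^{k} w}{dx^{k}}(x) = 0$ on $\partial\Omega$ for all $k \geq 0$ (the case $k = 0$ being $w = 0$ on $\partial\Omega$), which together with the smoothness gives $Bv = w \in \mathcal{C}_{0}^{\infty}(\Omega)$.

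The main obstacle is precisely this interaction between the chain rule and the boundary condition: differentiating $w$ mixes $x^{\sharp}$- and $x^{1}$-derivatives and introduces the singular weights $\varepsilon^{-j}$, so one cannot argue at the level of a single estimate but must verify that $x^{1}$-differentiation preserves the vanishing of the $x^{\sharp}$-derivatives on $\partial\Omega^{\sharp}$, term by term (which is clean here because $\partial\Omega^{\sharp}$ consists of the endpoints and $x^{1}$ ranges freely over $\Omega^{1}$). A secondary point to keep straight is the well-definedness of the composition $v(x, x/\varepsilon)$, i.e. the tacit $\Omega^{1}$-periodic extension of $v$ in its second variable, which is harmless for the conclusion since only the $x^{\sharp}$-boundary behaviour enters.
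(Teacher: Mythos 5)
You have supplied a proof where the paper supplies none: Proposition \ref{Boundary Conditions of Bv} sits in Subsection \ref{Notations, Definitions and Propositions}, whose propositions the authors explicitly admit without proof as prerequisites or building blocks; it is then invoked as a black box in Step 1 of Lemma \ref{First Block} and Step 1 of Lemma \ref{Fifth Block}. Your argument is correct and is the natural one. Note that its first-order instance is exactly the paper's Proposition \ref{Derivation Rule for B}, $\frac{d(Bv)}{dx}=B(\frac{\partial v}{\partial x^{\sharp }})+\varepsilon^{-1}B(\frac{\partial v}{\partial x^{1}})$, so your binomial formula is just the $k$-fold iteration of a rule the paper already states; the genuinely new content in your proof is the stability observation — differentiating the identities $\partial _{x^{\sharp }}^{m}v=0$ on $\partial \Omega ^{\sharp }\times \Omega ^{1}$ in the free variable $x^{1}$ shows that every mixed derivative $\partial _{x^{\sharp }}^{k-j}\partial _{x^{1}}^{j}v$ vanishes for $x^{\sharp }\in \partial \Omega ^{\sharp }$ — and your insistence that each summand vanish individually is exactly right, since the $\varepsilon ^{-j}$ weights preclude any argument by estimation.

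One caveat, which you only half-acknowledge. The smoothness of $Bv$ on all of $\Omega $, as opposed to cell by cell, is not a consequence of the chain rule alone: writing $Bv=v(x,x/\varepsilon )$ requires the $\Omega ^{1}$-periodic extension of $v$ in its second argument, and that extension is $\mathcal{C}^{\infty }$ only if all $x^{1}$-derivatives of $v$ agree at the two endpoints of $\Omega ^{1}$. The bare hypothesis $v\in \mathcal{C}_{0}^{\infty }(\Omega ^{\sharp };\mathcal{C}^{\infty }(\Omega ^{1}))$ does not guarantee this, so in general $Bv$ is only piecewise smooth, with possible jumps of derivatives across the interfaces of the $\varepsilon $-cells. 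This is an imprecision of the proposition as stated rather than of your proof — the paper itself warns that its functional framework is deliberately loose — and it is harmless in every actual use: in Lemma \ref{First Block} the inner space is $\mathcal{C}_{0}^{\infty }(\Omega ^{1})$ and in Lemma \ref{Fifth Block} it is $\mathcal{C}_{\sharp }^{\infty }(\Omega ^{1})$, and in both cases the periodic extension is smooth, so your argument applies verbatim.
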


\begin{proposition}
\label{Derivation Rule for B}\textbf{[Derivation Rule for B] }If $v$ and its
partial derivatives are defined on $\Omega ^{\sharp }\times \Omega ^{1}$ then%
\begin{equation}
\frac{d(Bv)}{dx}=B(\frac{\partial v}{\partial x^{\sharp }})+\varepsilon
^{-1}B(\frac{\partial v}{\partial x^{1}}).  \label{partial:Bv:skeleton}
\end{equation}
\end{proposition}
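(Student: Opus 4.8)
The plan is to recognize $B$ as a substitution operator and to differentiate by the chain rule. From Definition~\ref{Regularization of T*}, $(Bv)(x)=v(x,x/\varepsilon)$; that is, $Bv$ is the composition of $v$ with the map $x\mapsto(x^{\sharp},x^{1})=(x,x/\varepsilon)$, whose two components both depend on $x$. This is precisely the situation in which the multivariate chain rule produces two terms, one for each slot of $v$.

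First I would differentiate this composition with respect to $x$. Since the first component $x^{\sharp}=x$ has derivative $1$ and the second component $x^{1}=x/\varepsilon$ has derivative $\varepsilon^{-1}$, the chain rule gives
\begin{equation*}
\frac{d(Bv)}{dx}(x)=\frac{\partial v}{\partial x^{\sharp}}(x,x/\varepsilon)+\varepsilon^{-1}\,\frac{\partial v}{\partial x^{1}}(x,x/\varepsilon).
\end{equation*}
The hypothesis that $v$ and its partial derivatives are defined on $\Omega^{\sharp}\times\Omega^{1}$ is exactly what legitimizes this step.

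The remaining step is purely a matter of re-reading each summand through the definition of $B$. Applying $B$ to the function $\frac{\partial v}{\partial x^{\sharp}}$ means substituting $(x,x/\varepsilon)$ into it, so $B\bigl(\frac{\partial v}{\partial x^{\sharp}}\bigr)(x)=\frac{\partial v}{\partial x^{\sharp}}(x,x/\varepsilon)$, and likewise $B\bigl(\frac{\partial v}{\partial x^{1}}\bigr)(x)=\frac{\partial v}{\partial x^{1}}(x,x/\varepsilon)$. Substituting these two identifications into the display yields exactly \eqref{partial:Bv:skeleton}.

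There is no genuine obstacle here; the only subtlety is bookkeeping. One must be careful to read $B\bigl(\frac{\partial v}{\partial x^{1}}\bigr)$ as ``differentiate $v$ in its second argument, then perform the substitution'' rather than as a differentiation of $Bv$ itself, and to remember that every partial derivative is evaluated at the moving point $(x,x/\varepsilon)$. The factor $\varepsilon^{-1}$ in the second term, exactly as in the Derivative Rule (Proposition~\ref{Derivative Rule}), is the signature of the fast variable $x^{1}=x/\varepsilon$ and is the only quantitatively nontrivial output of the computation.
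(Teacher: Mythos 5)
Your proof is correct. Note that the paper itself gives no proof of this proposition: it sits in the block of statements that are ``admitted without proof because they are assumed to be prerequisites, or building blocks'' of the two-scale calculus borrowed from \cite{LenSmi07}, so there is no in-paper argument to compare against. Your chain-rule computation --- differentiating $x\mapsto v(x,x/\varepsilon)$, collecting one term per slot of $v$ with the factor $\varepsilon^{-1}$ coming from the fast component, and then re-reading each summand through Definition~\ref{Regularization of T*} --- is exactly the standard justification and uses nothing beyond that definition. The only caveat, which the paper explicitly waives for the whole section (``the functional framework used in this section is not as precise as it should be''), is that for $v(x,x/\varepsilon)$ to make sense one implicitly extends $v$ in its second argument (e.g.\ by $\Omega^{1}$-periodicity), since $x/\varepsilon$ need not lie in $\Omega^{1}$; your argument operates at the same level of rigor as the paper itself, so this is not a gap.
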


\noindent The next proposition states that the operator $B$ is actually an
approximation of the operator $T^{\ast }$ for $\Omega ^{1}$-periodic
functions.

\begin{proposition}
\label{Approximation between T* by B}\textbf{[Approximation between T}$%
^{\ast }$ \textbf{and B] }If $v(x^{\sharp },x^{1})$ is continuous,
continuously differentiable in $x^{\sharp }$ and $\Omega ^{1}$-periodic in $%
x^{1}$ then%
\begin{equation}
T^{\ast }v=Bv-\varepsilon B(x^{1}\frac{\partial v}{\partial x^{\sharp }}%
)+\varepsilon O_{s}(\varepsilon ).  \label{Inversion_Formula_1D}
\end{equation}%
Conversely,%
\begin{equation}
Bv=T^{\ast }(v)+\varepsilon T^{\ast }(x^{1}\frac{\partial v}{\partial
x^{\sharp }})+\varepsilon O_{s}(\varepsilon ).
\label{First_Order_Approx_of_Bar}
\end{equation}
\end{proposition}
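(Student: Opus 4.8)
The plan is to first make the adjoint $T^{\ast}$ explicit as a cell-averaging operator, then to compare it with $B$ by a first-order Taylor expansion in the macroscopic variable over a cell of size $\varepsilon$; the converse identity (\ref{First_Order_Approx_of_Bar}) is then deduced from (\ref{Inversion_Formula_1D}) by a self-substitution that exchanges $B$ and $T^{\ast}$ at order $\varepsilon$.

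First I would compute $T^{\ast}$ from its defining duality (\ref{Def_TSAdj_1D}). Splitting the integral over $\Omega^{\sharp}=\Omega$ into the cells $\Omega_{c}^{1,\varepsilon}$ and performing, in the inner $x^{1}$-integral, the change of variable $y=x_{c}+\varepsilon x^{1}$ (so $dx^{1}=\varepsilon^{-1}dy$ with $y$ ranging over the physical cell), the right-hand side of (\ref{Def_TSAdj_1D}) becomes $\kappa^{0}\int_{\Omega}(\cdots)\,u\,dy$. Since this holds for all test functions $u$ (Proposition \ref{Interpretation of a weak equality} applied to the difference), one reads off the piecewise-constant expression
\[
(T^{\ast}v)(y)=\frac{1}{|\Omega_{c}^{1,\varepsilon}|}\int_{\Omega_{c}^{1,\varepsilon}}v\!\left(x^{\sharp},\tfrac{y-x_{c}}{\varepsilon}\right)dx^{\sharp}\quad\text{for }y\in\Omega_{c}^{1,\varepsilon},
\]
where I have used $\kappa^{1}/\kappa^{0}=1/|\Omega^{1}|$ and $|\Omega_{c}^{1,\varepsilon}|=\varepsilon|\Omega^{1}|$. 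Thus, on each cell, $T^{\ast}v$ freezes the microscopic coordinate at $x^{1}=(y-x_{c})/\varepsilon$ and averages $v(\cdot,x^{1})$ over the cell in its first argument, whereas $Bv(y)=v(y,(y-x_{c})/\varepsilon)$ (the reduction of $v(y,y/\varepsilon)$ by $\Omega^{1}$-periodicity) evaluates the first argument exactly at $y$.

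The comparison is then a Taylor expansion of $v$ in its first argument around $x^{\sharp}=y$, at fixed $x^{1}=(y-x_{c})/\varepsilon$:
\[
v(x^{\sharp},x^{1})=v(y,x^{1})+(x^{\sharp}-y)\,\frac{\partial v}{\partial x^{\sharp}}(y,x^{1})+R(x^{\sharp},x^{1}).
\]
Averaging over $x^{\sharp}\in\Omega_{c}^{1,\varepsilon}$, the zeroth-order term reproduces $Bv(y)$, while the first-order term equals $\frac{\partial v}{\partial x^{\sharp}}(y,x^{1})$ times the cell average of $x^{\sharp}-y$, which is $x_{c}-y=-\varepsilon x^{1}$ precisely because $x_{c}$ is the cell center. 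This produces the announced correction $-\varepsilon\,x^{1}\frac{\partial v}{\partial x^{\sharp}}(y,x^{1})=-\varepsilon\,B\!\left(x^{1}\frac{\partial v}{\partial x^{\sharp}}\right)(y)$. Since $|x^{\sharp}-y|\le\varepsilon$ on a cell, the remainder $R$ is $O(\varepsilon^{2})$ uniformly, so its contribution has $L^{2}(\Omega)$-norm $O(\varepsilon^{2})$ and is written $\varepsilon\,O_{s}(\varepsilon)$, giving (\ref{Inversion_Formula_1D}).

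For the converse, I would rewrite (\ref{Inversion_Formula_1D}) as $Bv=T^{\ast}v+\varepsilon B(x^{1}\frac{\partial v}{\partial x^{\sharp}})+\varepsilon O_{s}(\varepsilon)$ and then apply (\ref{Inversion_Formula_1D}) once more to $w=x^{1}\frac{\partial v}{\partial x^{\sharp}}$, which gives $Bw=T^{\ast}w+O_{s}(\varepsilon)$; as this term is already multiplied by $\varepsilon$, the substitution $Bw\mapsto T^{\ast}w$ only perturbs the expression by $\varepsilon\,O_{s}(\varepsilon)$, yielding (\ref{First_Order_Approx_of_Bar}). I expect the main obstacle to be twofold: the explicit identification of $T^{\ast}$ as the cell average, whose bookkeeping makes the cell-center mean of $x^{\sharp}-y$ deliver exactly the factor $-\varepsilon x^{1}$; and the rigorous control of the Taylor remainder in the strong $L^{2}$ sense, where one must use the continuity, the $x^{\sharp}$-differentiability and the $\Omega^{1}$-periodicity of $v$ to guarantee that, after division by $\varepsilon$, the second-order term indeed tends to zero in $L^{2}(\Omega)$ uniformly across the $N(\varepsilon)\sim|\Omega|/\varepsilon$ cells.
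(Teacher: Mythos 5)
You cannot be checked against the paper here, because the paper offers no proof of Proposition \ref{Approximation between T* by B}: it belongs to the Propositions of Section \ref{Skeleton:Sec} that are explicitly ``admitted without proof'' as building blocks imported from \cite{LenSmi07}. Judged on its own merits, your route is the natural one and its core computation is correct: the change of variables $y=x_{c}+\varepsilon x^{1}$ in the duality (\ref{Def_TSAdj_1D}) does identify $(T^{\ast}v)(y)$ on each cell as the cell average $\frac{1}{|\Omega_{c}^{1,\varepsilon}|}\int_{\Omega_{c}^{1,\varepsilon}}v\left(x^{\sharp},\frac{y-x_{c}}{\varepsilon}\right)dx^{\sharp}$, and since the cell average of $x^{\sharp}-y$ is $x_{c}-y=-\varepsilon x^{1}$, the first-order Taylor term yields exactly $-\varepsilon B\left(x^{1}\frac{\partial v}{\partial x^{\sharp}}\right)$. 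One convention you use silently: the identification $Bv(y)=v\left(y,\frac{y-x_{c}}{\varepsilon}\right)$ requires the periodic extension of $v$ to be aligned with the cell decomposition (i.e. $x_{c}/\varepsilon$ an integer multiple of the period); the paper leaves this implicit in Definition \ref{Regularization of T*}, so stating it costs nothing.

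Two regularity points need repair, one cosmetic and one genuine. First, the hypothesis gives only \emph{continuous} differentiability in $x^{\sharp}$, so the Taylor remainder is not $O(\varepsilon^{2})$ uniformly as you claim; by the mean value theorem it is bounded by $\varepsilon\,\omega(\varepsilon)$, with $\omega$ the modulus of continuity of $\frac{\partial v}{\partial x^{\sharp}}$. That is $\varepsilon\,o(1)$, which is all that $\varepsilon O_{s}(\varepsilon)$ requires, but the estimate should be phrased this way. Second, and more substantively, your derivation of (\ref{First_Order_Approx_of_Bar}) applies (\ref{Inversion_Formula_1D}) to $w=x^{1}\frac{\partial v}{\partial x^{\sharp}}$; that instance of the formula requires $w$ itself to be continuously differentiable in $x^{\sharp}$, i.e. $v$ of class $\mathcal{C}^{2}$, which is not assumed. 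The fix is easy: you only need the zeroth-order statement $Bw-T^{\ast}w=O_{s}(\varepsilon)$, and this follows from continuity of $w$ alone via your own cell-average formula, since $|T^{\ast}w(y)-Bw(y)|\leq\sup_{|x^{\sharp}-y|\leq\varepsilon|\Omega^{1}|}|w(x^{\sharp},x^{1})-w(y,x^{1})|\rightarrow 0$ uniformly. With these two adjustments your proof is complete and at the level of rigor the paper itself adopts.
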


\noindent Next, the formula of integration by parts is stated in a form
compatible with the Green formula used in some extensions. The boundary $%
\Gamma $ is composed of the two end points of the interval $\Omega $, and
the unit outward normal $n_{\Gamma }$ defined on $\Gamma $ is equal to $-1$
and $+1$ at the left- and right-endpoints respectively.

\begin{proposition}
\label{Green Rule}\textbf{[Green Rule] }If $u$, $v\in H^{1}(\Omega )$ then
the traces of $u$ and $v$ on $\Gamma $ are well defined and%
\begin{equation}
\int_{\Omega }u\frac{dv}{dx}\text{ }dx=\int_{\Gamma }tr(u)\text{ }tr(v)\text{
}n_{\Gamma }\text{ }ds(x)-\int_{\Omega }v\frac{du}{dx}\text{ }dx.
\label{GreenFormulas}
\end{equation}
\end{proposition}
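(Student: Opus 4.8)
The plan is to reduce the statement to the elementary one-dimensional integration by parts formula and then extend it to $H^{1}(\Omega)$ by a density argument. Since here $n=1$ and $\Omega$ is an interval, say $\Omega=(a,b)$, the first thing I would record is that in one dimension the Sobolev space $H^{1}(\Omega)$ embeds continuously into $\mathcal{C}(\overline{\Omega})$: every $H^{1}$ function has an absolutely continuous representative, whose values at the endpoints are exactly its traces. This justifies the claim that $tr(u)$ and $tr(v)$ on $\Gamma=\{a,b\}$ are well defined, and it identifies the boundary integral: since $\Gamma$ is zero-dimensional and $n_{\Gamma}$ equals $-1$ at $a$ and $+1$ at $b$, the term $\int_{\Gamma}tr(u)\,tr(v)\,n_{\Gamma}\,ds(x)$ is simply the signed sum $u(b)v(b)-u(a)v(a)$, that is $[uv]_{a}^{b}$.

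First I would establish the identity for smooth functions. For $u,v\in\mathcal{C}^{\infty}(\overline{\Omega})$ the Leibniz rule gives $\frac{d(uv)}{dx}=u\frac{dv}{dx}+v\frac{du}{dx}$, and integrating over $(a,b)$ and applying the fundamental theorem of calculus to the left-hand side yields
\begin{equation*}
\int_{\Omega}u\frac{dv}{dx}\,dx+\int_{\Omega}v\frac{du}{dx}\,dx=u(b)v(b)-u(a)v(a).
\end{equation*}
By the previous paragraph the right-hand side is exactly the boundary integral, so the asserted formula holds for smooth $u$ and $v$.

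Then I would extend the identity to arbitrary $u,v\in H^{1}(\Omega)$ by density. Choosing sequences $u_{k},v_{k}\in\mathcal{C}^{\infty}(\overline{\Omega})$ with $u_{k}\to u$ and $v_{k}\to v$ in $H^{1}(\Omega)$, the formula holds for each pair $(u_{k},v_{k})$, and it remains to pass to the limit. For the volume terms I would use that convergence in $H^{1}$ implies $L^{2}$ convergence of both the functions and their derivatives, so the products $u_{k}\frac{dv_{k}}{dx}$ and $v_{k}\frac{du_{k}}{dx}$ converge in $L^{1}(\Omega)$ and their integrals converge. For the boundary term I would invoke the continuity of the trace operator $H^{1}(\Omega)\hookrightarrow\mathcal{C}(\overline{\Omega})$ from the embedding above, which gives $u_{k}(a)\to u(a)$, $v_{k}(b)\to v(b)$, and so on, hence convergence of $u_{k}(b)v_{k}(b)-u_{k}(a)v_{k}(a)$ to $u(b)v(b)-u(a)v(a)$. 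Passing to the limit in the smooth identity yields the Green Rule.

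The main obstacle is the control of the boundary term under the limiting process: the volume integrals converge by routine $L^{2}$ estimates, but the passage to the limit in the endpoint values requires the continuity of the trace, which is precisely the one-dimensional Sobolev embedding $H^{1}(\Omega)\hookrightarrow\mathcal{C}(\overline{\Omega})$. Everything else is the classical computation, and since the functional-analytic justifications are explicitly outside the scope of the symbolic framework of this section, I would state the embedding as the single analytic ingredient and keep the rest at the level of the product rule and the fundamental theorem of calculus.
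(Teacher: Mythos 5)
Your proof is mathematically sound, but there is nothing in the paper to compare it against: the Green Rule is one of the Propositions of Section~2.1 that the paper explicitly admits without proof, since they are ``assumed to be prerequisites, or building blocks, in the proofs,'' and the paper states up front that functional-analytic justifications lie outside the scope of its symbolic-computation framework. Taken on its own terms, your argument is the standard and correct way to justify the statement: in dimension $n=1$ the embedding $H^{1}(\Omega)\hookrightarrow\mathcal{C}(\overline{\Omega})$ (absolutely continuous representatives) legitimates the traces and identifies $\int_{\Gamma}tr(u)\,tr(v)\,n_{\Gamma}\,ds(x)$ with $u(b)v(b)-u(a)v(a)$ under the paper's convention that $n_{\Gamma}$ equals $-1$ and $+1$ at the left and right endpoints; the Leibniz rule plus the fundamental theorem of calculus give the identity for smooth functions; and density of $\mathcal{C}^{\infty}(\overline{\Omega})$ in $H^{1}(\Omega)$, with Cauchy--Schwarz controlling the volume terms and trace continuity controlling the boundary term, passes it to the limit. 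The only step worth spelling out is the estimate behind the $L^{1}$ convergence of the products, namely bounding $\int_{\Omega}|u_{k}\frac{dv_{k}}{dx}-u\frac{dv}{dx}|\,dx$ by $\|u_{k}-u\|_{L^{2}(\Omega)}\|\frac{dv_{k}}{dx}\|_{L^{2}(\Omega)}+\|u\|_{L^{2}(\Omega)}\|\frac{dv_{k}}{dx}-\frac{dv}{dx}\|_{L^{2}(\Omega)}$, which is routine. In short: a correct and complete proof, supplied for a statement that the paper deliberately treats as an unproved building block of its rewriting system.
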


\noindent The last proposition is stated as a building block of the
homogenized model derivation.

\begin{proposition}
\label{The linear operator associated to the Microscopic problem}\textbf{%
[The linear operator associated to the Microscopic problem] }For $\mu \in
\mathbb{R}$, there exist $\theta ^{\mu }\in H_{\sharp }^{1}(\Omega ^{1})$
solutions to the linear weak formulation%
\begin{equation}
\int_{\Omega ^{1}}a^{0}\frac{\partial \theta ^{\mu }}{\partial x^{1}}\frac{%
\partial w}{\partial x^{1}}\text{ }dx^{1}=-\mu \int_{\Omega ^{1}}a^{0}\frac{%
\partial w}{\partial x^{1}}\text{ }dx^{1}\text{ for all }w\in \mathcal{C}%
_{\sharp }^{\infty }(\Omega ^{1}),  \label{Microscopic problem}
\end{equation}%
and $\frac{\partial \theta ^{\mu }}{\partial x^{1}}$ is unique. Since the
mapping $\mu \mapsto \dfrac{\partial \theta ^{\mu }}{\partial x^{1}}$ from $%
\mathbb{R}$ to $L^{2}(\Omega ^{1})$ is linear then
\begin{equation}
\dfrac{\partial \theta ^{\mu }}{\partial x^{1}}=\mu \dfrac{\partial \theta
^{1}}{\partial x^{1}}.  \label{microscopic linear operator}
\end{equation}%
Moreover, this relation can be extended to any $\mu \in L^{2}(\Omega
^{\sharp })$.
\end{proposition}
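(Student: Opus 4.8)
The plan is to recognise \eqref{Microscopic problem} as a standard variational cell problem and to solve it by the Lax--Milgram theorem, the only subtlety being that the bilinear form degenerates on the constants. First I would fix the functional setting by introducing the space of zero-mean periodic functions $V=\{v\in H^1_\sharp(\Omega^1):\int_{\Omega^1}v\,dx^1=0\}$, on which the seminorm $v\mapsto\|\partial v/\partial x^1\|_{L^2(\Omega^1)}$ is an actual norm equivalent to the full $H^1$-norm, by the Poincar\'e--Wirtinger inequality. Defining the symmetric bilinear form $a(u,v)=\int_{\Omega^1}a^0\,\frac{\partial u}{\partial x^1}\frac{\partial v}{\partial x^1}\,dx^1$ and the linear functional $L(w)=-\mu\int_{\Omega^1}a^0\,\frac{\partial w}{\partial x^1}\,dx^1$, I would check that both are continuous on $V$ (using $a^0\in L^\infty(\Omega^1)$) and that $a$ is coercive on $V$ (using the ellipticity bound $a^0\ge\alpha>0$ together with Poincar\'e--Wirtinger). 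Since the form is symmetric, either Lax--Milgram or the Riesz representation theorem then yields a unique $\theta^\mu\in V$ satisfying the variational identity for all $w\in V$; because both sides of \eqref{Microscopic problem} vanish when $w$ is constant and $\mathcal{C}^\infty_\sharp(\Omega^1)$ is dense in $H^1_\sharp(\Omega^1)$, testing against $\mathcal{C}^\infty_\sharp(\Omega^1)$ is equivalent to testing against all of $H^1_\sharp(\Omega^1)$.

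Next I would address the uniqueness statement, which concerns $\partial\theta^\mu/\partial x^1$ rather than $\theta^\mu$ itself. Any solution in $H^1_\sharp(\Omega^1)$ splits as its mean value plus a zero-mean part, and the zero-mean part must coincide with the unique $V$-solution just constructed; hence two solutions differ only by an additive constant and their common derivative is uniquely determined.

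For the linearity relation \eqref{microscopic linear operator}, I would exploit that the data depend linearly on $\mu$: if $\theta^{\mu_1},\theta^{\mu_2}\in V$ solve the problem for $\mu_1,\mu_2$, then $\lambda_1\theta^{\mu_1}+\lambda_2\theta^{\mu_2}$ solves it for $\lambda_1\mu_1+\lambda_2\mu_2$, and by uniqueness in $V$ equals $\theta^{\lambda_1\mu_1+\lambda_2\mu_2}$. Specialising gives $\theta^\mu=\mu\,\theta^1$ in $V$, whence $\partial\theta^\mu/\partial x^1=\mu\,\partial\theta^1/\partial x^1$. Finally, to extend to $\mu\in L^2(\Omega^\sharp)$ I would read $\mu$ as a function of the macroscopic variable $x^\sharp$ and simply set $\theta^\mu(x^\sharp,x^1)=\mu(x^\sharp)\,\theta^1(x^1)$; since $\theta^1\in H^1_\sharp(\Omega^1)$ is fixed, this lies in $L^2(\Omega^\sharp;H^1_\sharp(\Omega^1))$ and the product relation for $\partial\theta^\mu/\partial x^1$ holds for almost every $x^\sharp$.

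I expect the main obstacle to be the coercivity step. The bilinear form annihilates the constants, so coercivity cannot hold on the full space $H^1_\sharp(\Omega^1)$; the essential move is to quotient them out (equivalently, to impose the zero-mean constraint) and to invoke the Poincar\'e--Wirtinger inequality, which itself rests on the ellipticity hypothesis $a^0\ge\alpha>0$ that must be assumed on the coefficient. Everything else is routine linear-functional-analytic bookkeeping.
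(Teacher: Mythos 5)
Your argument is correct, but there is nothing in the paper to compare it against: Proposition \ref{The linear operator associated to the Microscopic problem} belongs to the block of statements in Section \ref{Notations, Definitions and Propositions} that the paper explicitly \emph{admits without proof}, treating them as prerequisites or building blocks (imported essentially from \cite{LenSmi07}), since functional analysis is deliberately left outside the scope of the symbolic-computation framework. Your proof is the standard one and fills that gap correctly: Lax--Milgram (or Riesz, by symmetry) on the zero-mean subspace of $H_{\sharp}^{1}(\Omega ^{1})$, with coercivity coming from ellipticity plus Poincar\'e--Wirtinger; equivalence of testing against $\mathcal{C}_{\sharp }^{\infty }(\Omega ^{1})$, against $H_{\sharp }^{1}(\Omega ^{1})$, and against the zero-mean subspace, because both sides of (\ref{Microscopic problem}) kill constants; uniqueness of $\partial \theta ^{\mu }/\partial x^{1}$ because any two solutions differ by a constant; the scaling relation (\ref{microscopic linear operator}) by uniqueness and linearity of the data in $\mu $; and the extension to $\mu \in L^{2}(\Omega ^{\sharp })$ via the separated ansatz $\theta ^{\mu }(x^{\sharp },x^{1})=\mu (x^{\sharp })\,\theta ^{1}(x^{1})$, read pointwise in the macroscopic variable. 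Your closing remark is also on target: the ellipticity hypothesis $0<\alpha \leq a^{0}$ is indeed not stated in the proposition itself (which appears before $a^{0}$ is introduced), but in the paper's usage it is inherited from (\ref{Estimate of Coefficient}) through the identification $T(a^{\varepsilon })=a^{0}$ in (\ref{T(a) and T(f)}), so identifying it as the essential assumption behind coercivity is exactly right.
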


\subsection{Two-Scale Approximation of a Derivative \label{Two-Scale
Approximation of a Derivative}}

\noindent Here we detail the \textit{reference computation} of the weak
two-scale limit $\eta =\lim_{\varepsilon \rightarrow 0}T(\frac{%
du^{\varepsilon }}{dx})$ in $L^{2}(\Omega ^{\sharp }\times \Omega ^{1})$ when%
\begin{equation}
\left\Vert u^{\varepsilon }\right\Vert _{L^{2}(\Omega )}\text{ and }%
\left\Vert \frac{du^{\varepsilon }}{dx}\right\Vert _{L^{2}(\Omega )}\leq C,
\label{Boundness of u}
\end{equation}%
$C$ being a constant independent of $\varepsilon $. To simplify the proof,
we further assume that there exist $u^{0}$, $u^{1}\in L^{2}(\Omega ^{\sharp
}\times \Omega ^{1})$ such that%
\begin{equation*}
T(u^{\varepsilon })=u^{0}+\varepsilon u^{1}+\varepsilon O_{w}(\varepsilon )%
\text{,}
\end{equation*}%
i.e.%
\begin{equation}
\int_{\Omega ^{\sharp }\times \Omega ^{1}}(T(u^{\varepsilon
})-u^{0}-\varepsilon u^{1})v\text{ }dx^{\sharp }dx^{1}=\varepsilon
O(\varepsilon )\text{ for all }v\in L^{2}(\Omega ^{\sharp }\times \Omega
^{1}).  \label{Tu:expression:Skel}
\end{equation}%
We quote that Assumption (\ref{Tu:expression:Skel}) is not necessary, it is
introduced to simplify the proof since it avoids some non-equational steps.
The statement proved in the remaining of the subsection is the following.

\begin{proposition}
\label{Two-scale Limit of a Derivative}\textbf{[Two-scale Limit of a
Derivative] }If $u^{\varepsilon }$ is a sequence bounded as in (\ref%
{Boundness of u}) and satisfying (\ref{Tu:expression:Skel}), then $u^{0}$ is
independent of $x^{1},$
\begin{equation}
\tilde{u}^{1}=u^{1}-x^{1}\partial _{x^{\sharp }}u^{0}
\label{Def:Tilde:u:skeleton}
\end{equation}%
defined in $\Omega ^{\sharp }\times \Omega ^{1}$ is $\Omega ^{1}$-periodic
and%
\begin{equation}
\eta =\frac{\partial u^{0}}{\partial x^{\sharp }}+\frac{\partial \tilde{u}%
^{1}}{\partial x^{1}}.  \label{WeakLimitation}
\end{equation}%
Moreover, if $u^{\varepsilon }=0$ on $\Gamma $ then $u^{0}=0$ on $\Gamma
^{\sharp }.$
\end{proposition}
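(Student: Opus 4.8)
The plan is to characterise the weak two-scale limit $\eta$ by testing $T(\frac{du^{\varepsilon}}{dx})$ against smooth $\Omega^{1}$-periodic functions and pushing $\varepsilon\to 0$ with the building blocks above; I would split the work according to the four assertions. To begin, I would obtain that $u^{0}$ is independent of $x^{1}$ from the Derivative Rule (Proposition \ref{Derivative Rule}), which reads $\partial_{x^{1}}(Tu^{\varepsilon})=\varepsilon\,T(\frac{du^{\varepsilon}}{dx})$. Since $\|\frac{du^{\varepsilon}}{dx}\|_{L^{2}(\Omega)}\le C$, the Norm Rule (Proposition \ref{Norm Rule}) makes $T(\frac{du^{\varepsilon}}{dx})$ bounded in $L^{2}(\Omega^{\sharp}\times\Omega^{1})$ (so a weakly convergent subsequence exists), and the right-hand side above tends to zero. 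Testing against $\phi\in\mathcal{C}_{0}^{\infty}(\Omega^{\sharp}\times\Omega^{1})$, integrating by parts in $x^{1}$, and invoking $Tu^{\varepsilon}\to u^{0}$ from (\ref{Tu:expression:Skel}) gives $\int u^{0}\,\partial_{x^{1}}\phi=0$; Proposition \ref{Interpretation of a weak equality} then yields $\partial_{x^{1}}u^{0}=0$.

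The core of the proof is a single weak identity. Fixing $v(x^{\sharp},x^{1})$ smooth and $\Omega^{1}$-periodic, I would use the adjoint (Definition \ref{Adjoint or Dual of T}) to write $\kappa^{1}\int T(\frac{du^{\varepsilon}}{dx})\,v=\kappa^{0}\int_{\Omega}\frac{du^{\varepsilon}}{dx}\,T^{\ast}v$, replace $T^{\ast}v$ by $B$ through (\ref{Inversion_Formula_1D}), integrate by parts with the Green Rule (Proposition \ref{Green Rule}), and turn $\frac{d(Bv)}{dx}$ into $B(\partial_{x^{\sharp}}v)+\varepsilon^{-1}B(\partial_{x^{1}}v)$ with the Derivation Rule for $B$ (Proposition \ref{Derivation Rule for B}). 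Converting the $B$-terms back by (\ref{First_Order_Approx_of_Bar}) and the adjoint, inserting $Tu^{\varepsilon}=u^{0}+\varepsilon u^{1}+\varepsilon O_{w}(\varepsilon)$, and letting $\varepsilon\to 0$, I expect the identity
\[
\kappa^{1}\int_{\Omega^{\sharp}\times\Omega^{1}}\eta\,v=-\kappa^{1}\int u^{0}\,\partial_{x^{\sharp}}v-\kappa^{1}\int u^{1}\,\partial_{x^{1}}v-\kappa^{1}\int u^{0}\,x^{1}\,\partial_{x^{\sharp}}\partial_{x^{1}}v .
\]
The boundary contribution of the Green Rule drops here because $Bv\in\mathcal{C}_{0}^{\infty}(\Omega)$ (Proposition \ref{Boundary Conditions of Bv}). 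The delicate point, and the main obstacle, is the singular term $\varepsilon^{-1}B(\partial_{x^{1}}v)$: once paired with $u^{\varepsilon}$ its leading part is $\varepsilon^{-1}\kappa^{1}\int u^{0}\,\partial_{x^{1}}v$, which vanishes precisely because $u^{0}$ is $x^{1}$-independent and $v$ is periodic, so $\int_{\Omega^{1}}\partial_{x^{1}}v\,dx^{1}=0$; only the next order then survives, producing the $u^{1}$ and $x^{1}\partial_{x^{\sharp}}u^{0}$ terms. Keeping the $O(\varepsilon)$ bookkeeping consistent across the two approximation formulas (\ref{Inversion_Formula_1D})--(\ref{First_Order_Approx_of_Bar}) is exactly why the expansion (\ref{Tu:expression:Skel}) is assumed.

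From this identity I would read off the two middle assertions. Integrating by parts back onto $u^{0}$ and $u^{1}$ and using $\partial_{x^{1}}u^{0}=0$, the right-hand side becomes $\kappa^{1}\int(\partial_{x^{\sharp}}u^{0}+\partial_{x^{1}}\tilde{u}^{1})\,v-\kappa^{1}\int_{\Omega^{\sharp}\times\Gamma^{1}}\tilde{u}^{1}\,v\,n_{\Gamma^{1}}$, with $\tilde{u}^{1}$ as in (\ref{Def:Tilde:u:skeleton}). Taking first $v\in\mathcal{C}_{0}^{\infty}(\Omega^{\sharp}\times\Omega^{1})$ annihilates the boundary integral, so $\int(\eta-\partial_{x^{\sharp}}u^{0}-\partial_{x^{1}}\tilde{u}^{1})\,v=0$ for all such $v$ and Proposition \ref{Interpretation of a weak equality} gives (\ref{WeakLimitation}). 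Feeding (\ref{WeakLimitation}) back for arbitrary periodic $v$ then forces $\int_{\Omega^{\sharp}\times\Gamma^{1}}\tilde{u}^{1}\,v\,n_{\Gamma^{1}}=0$, and Proposition \ref{Interpretation of a periodic boundary condition} shows $\tilde{u}^{1}$ is $\Omega^{1}$-periodic.

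For the last assertion I would rerun the computation with $v$ no longer compactly supported in $x^{\sharp}$. Since $u^{\varepsilon}=0$ on $\Gamma$, the Green boundary term still vanishes and the weak identity persists, but integrating by parts in $x^{\sharp}$ now leaves a trace on $\Gamma^{\sharp}$. Matching it against the identity and using once more that $u^{0}$ does not depend on $x^{1}$ collapses the surviving term to $\sum_{\Gamma^{\sharp}}n_{\Gamma^{\sharp}}\,u^{0}\,[x^{1}v]_{\Gamma^{1}}$; since $[x^{1}v]_{\Gamma^{1}}=|\Omega^{1}|\,v$ is generically nonzero for periodic $v$, this can hold for all test functions only if $u^{0}=0$ on $\Gamma^{\sharp}$.
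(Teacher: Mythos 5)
Your proposal is correct, and for assertions (\ref{Def:Tilde:u:skeleton})--(\ref{WeakLimitation}) and the boundary condition it runs on the same engine as the paper: the adjoint identity of Definition \ref{Adjoint or Dual of T}, the two approximations (\ref{Inversion_Formula_1D})--(\ref{First_Order_Approx_of_Bar}), the Green formula (\ref{GreenFormulas}), Proposition \ref{Derivation Rule for B}, assumption (\ref{Tu:expression:Skel}), and finally Propositions \ref{Interpretation of a weak equality} and \ref{Interpretation of a periodic boundary condition}. Two genuine differences in organization are worth recording. First, you prove that $u^{0}$ is independent of $x^{1}$ directly from the Derivative Rule (Proposition \ref{Derivative Rule}) and the Norm Rule (Proposition \ref{Norm Rule}), via $\partial_{x^{1}}(Tu^{\varepsilon})=\varepsilon\,T(du^{\varepsilon}/dx)\rightarrow 0$; the paper's Lemma \ref{First Block} instead reruns the whole $B$/$T^{\ast}$ machinery on $\varepsilon\kappa^{0}\int_{\Omega}\frac{du^{\varepsilon}}{dx}Bv\,dx$. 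Your route is shorter and more elementary, while the paper's buys uniformity: every block then consists of the same small set of rewriting steps, which is the point of its rule-based implementation. Second, where the paper distributes the work over Lemmas \ref{Second Block}, \ref{Third Block} and \ref{Fourth Block}, each re-deriving the weak identity for a different class of test functions, you derive one master identity for $\Omega^{1}$-periodic $v$ and specialize the test class three times; correspondingly, you dispose of the singular term $\varepsilon^{-1}\kappa^{1}\int u^{0}\,\partial_{x^{1}}v$ by exact cancellation (Fubini plus $\int_{\Omega^{1}}\partial_{x^{1}}v\,dx^{1}=0$ for periodic $v$), where the paper's Step 5 integrates by parts and invokes Lemma \ref{First Block} — the same fact, used at a different moment. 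Your bookkeeping checks out: integrating the three surviving terms by parts indeed yields $\kappa^{1}\int(\partial_{x^{\sharp}}u^{0}+\partial_{x^{1}}\tilde{u}^{1})v-\kappa^{1}\int_{\Omega^{\sharp}\times\Gamma^{1}}\tilde{u}^{1}v\,n_{\Gamma^{1}}$, which delivers (\ref{WeakLimitation}) and then the periodicity. For the last assertion your choice of $v$ still depending on $x^{1}$ forces the extra term $u^{0}n_{\Gamma^{\sharp}}[x^{1}v]_{\Gamma^{1}}=|\Omega^{1}|\,u^{0}\,v\,n_{\Gamma^{\sharp}}$, which you handle correctly; the paper's Lemma \ref{Fourth Block} takes test functions depending on $x^{\sharp}$ only (and, despite the notation $\mathcal{C}_{0}^{\infty}(\Omega^{\sharp})$ there, not vanishing on $\Gamma^{\sharp}$, otherwise its Step 1 would be vacuous), so the $\Gamma^{1}$ bookkeeping disappears — slightly cleaner, same conclusion.
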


\noindent The proof is split into four Lemmas corresponding to the first
four blocks discussed in Section \ref{Implementation and Experiments}, the
other three being detailed in subsection \ref{Homogenized Model Derivation}.

\begin{lemma}
\label{First Block}\textbf{[First Block: Constraint on }$u^{0}$\textbf{] }$%
u^{0}$ is independent of $x^{1}$.
\end{lemma}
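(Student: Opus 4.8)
The plan is to exploit Proposition \ref{Derivative Rule} together with the boundedness hypotheses to show that $\partial_{x^1}(Tu^\varepsilon)$ is of order $\varepsilon$, and then to read off the constraint on $u^0$ from the weak expansion (\ref{Tu:expression:Skel}) after an integration by parts in $x^1$.

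First I would combine the Derivative Rule and the Norm Rule. Proposition \ref{Derivative Rule} gives $T(du^\varepsilon/dx) = \varepsilon^{-1}\,\partial_{x^1}(Tu^\varepsilon)$, while Proposition \ref{Norm Rule} turns the a priori bound (\ref{Boundness of u}) on $\|du^\varepsilon/dx\|_{L^2(\Omega)}$ into a bound on $\|T(du^\varepsilon/dx)\|_{L^2(\Omega^\sharp\times\Omega^1)}$. Consequently $\partial_{x^1}(Tu^\varepsilon) = \varepsilon\, T(du^\varepsilon/dx)$ has $L^2$-norm of order $\varepsilon$; in particular, testing it against any fixed $w$ produces a quantity of size $O(\varepsilon)$ by Cauchy--Schwarz.

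Next I would pick a test function $w \in \mathcal{C}_{0}^{\infty}(\Omega^\sharp\times\Omega^1)$ and substitute $v = \partial w/\partial x^1$ into the weak expansion (\ref{Tu:expression:Skel}). On the $T(u^\varepsilon)$ contribution I would integrate by parts in $x^1$ --- the boundary terms vanishing because $w$ is compactly supported --- to move the derivative onto $Tu^\varepsilon$, and then invoke the previous paragraph to replace $\int (\partial_{x^1}Tu^\varepsilon)\,w$ by $\varepsilon\int T(du^\varepsilon/dx)\,w = O(\varepsilon)$. Letting $\varepsilon \to 0$ and discarding every term carrying a factor $\varepsilon$ (the $\varepsilon u^1$ term and the two $O(\varepsilon)$ remainders) leaves the single identity $\int_{\Omega^\sharp\times\Omega^1} u^0\,(\partial w/\partial x^1)\,dx^\sharp dx^1 = 0$.

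Finally I would integrate by parts once more to rewrite this as $\int_{\Omega^\sharp\times\Omega^1}(\partial u^0/\partial x^1)\,w\,dx^\sharp dx^1 = 0$ for every $w \in \mathcal{C}_{0}^{\infty}(\Omega^\sharp\times\Omega^1)$, and then conclude $\partial u^0/\partial x^1 = 0$ by Proposition \ref{Interpretation of a weak equality}, i.e. $u^0$ is independent of $x^1$. The main obstacle I anticipate is bookkeeping rather than depth: one must be sure that the coupling term is genuinely $O(\varepsilon)$ (this is exactly where the a priori bound on the derivative is consumed) and that the chosen class of test functions makes every boundary term in the two successive integrations by parts disappear, so that only the interior constraint on $u^0$ survives. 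The step where $T(du^\varepsilon/dx)$ is controlled, by combining the Derivative and Norm Rules, is the crux; everything else is a routine passage to the weak limit.
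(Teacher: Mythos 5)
Your proof is correct (at the paper's own level of rigor) but follows a genuinely different route from the paper's. The paper never touches the bound $\left\Vert \frac{du^{\varepsilon}}{dx}\right\Vert_{L^{2}(\Omega)}\leq C$ through the Norm Rule; instead it forms $\Psi=\varepsilon\kappa^{0}\int_{\Omega}\frac{du^{\varepsilon}}{dx}\,Bv\,dx$ in the \emph{physical} domain, shows $\Psi\to 0$ by Cauchy--Schwarz, and then transports $\Psi$ to the two-scale domain through a chain of building blocks: the Green formula in $\Omega$, the Derivation Rule for $B$ (Proposition \ref{Derivation Rule for B}), the approximation of $B$ by $T^{\ast}$ (Proposition \ref{Approximation between T* by B}), and the adjoint definition (Definition \ref{Adjoint or Dual of T}), arriving at $\kappa^{1}\int u^{0}\,\partial_{x^{1}}v\,dx=0$. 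You instead bypass $B$ and $T^{\ast}$ entirely: the Derivative Rule plus the Norm Rule give $\Vert\partial_{x^{1}}(Tu^{\varepsilon})\Vert_{L^{2}}=O(\varepsilon)$, and plugging $v=\partial_{x^{1}}w$ into Assumption (\ref{Tu:expression:Skel}) with one integration by parts yields the same identity; both proofs then finish identically (integration by parts plus Proposition \ref{Interpretation of a weak equality}). Your argument is shorter and consumes fewer propositions, which is a real economy as a standalone proof of this lemma. What the paper's detour buys is uniformity with the rest of the derivation: its Steps 1--5 are precisely the operator-level rewrite steps (Green rule, $B$/$T^{\ast}$ approximation, adjoint rule) that recur in the Second through Fifth Blocks and that the rewriting framework is designed to chain, reuse, and generalize to the multi-dimensional, multi-domain and thin-domain extensions, whereas your shortcut is specific to this lemma. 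The only caveats in your write-up are ones the paper shares: the integration by parts on $Tu^{\varepsilon}$ needs $\partial_{x^{1}}(Tu^{\varepsilon})\in L^{2}$ (which your first step supplies), and the final integration by parts on $u^{0}$ is formal, exactly as in the paper's Step 6.
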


\begin{proof}
We introduce%
\begin{equation*}
\Psi =\varepsilon \kappa ^{0}{\int_{\Omega }\frac{du^{\varepsilon }}{dx}Bv}%
\text{ }{dx}
\end{equation*}
with $v\in \mathcal{C}_{0}^{\infty }(\Omega ^{\sharp };\mathcal{C}%
_{0}^{\infty }(\Omega ^{1}))$. From the Cauchy-Schwartz inequality and (\ref%
{Boundness of u})$,$ $\lim_{\varepsilon \rightarrow 0}\Psi =0$.

\begin{itemize}
\item \textbf{Step 1. }The Green formula (\ref{GreenFormulas}) and
Proposition \ref{Boundary Conditions of Bv} $\Longrightarrow $%
\begin{equation*}
\Psi ={-\varepsilon \kappa ^{0}\int_{\Omega }u^{\varepsilon }\frac{d(Bv)}{dx}%
\text{ }dx.}
\end{equation*}

\item \textbf{Step 2. }Proposition \ref{Derivation Rule for B} $%
\Longrightarrow $%
\begin{equation*}
\Psi =\kappa ^{0}{\int_{\Omega }u^{\varepsilon }B(\frac{\partial v}{\partial
x^{1}})\text{ }dx+O(\varepsilon ).}
\end{equation*}

\item \textbf{Step 3. }Proposition \ref{Approximation between T* by B} $%
\Longrightarrow $%
\begin{equation*}
\Psi =\kappa ^{0}{\int_{\Omega }u^{\varepsilon }T^{\ast }(\frac{\partial v}{%
\partial x^{1}})}\text{ }{\text{ }dx+O(\varepsilon ).}
\end{equation*}

\item \textbf{Step 4. }Definition \ref{Adjoint or Dual of T} $%
\Longrightarrow $%
\begin{equation*}
\Psi =\kappa ^{1}{\int_{\Omega ^{\sharp }\times \Omega ^{1}}T(u^{\varepsilon
})\frac{\partial v}{\partial x^{1}}\text{ }dx+O(\varepsilon ).}
\end{equation*}

\item \textbf{Step 5. }Assumption (\ref{Tu:expression:Skel}) and passing to
the limit when $\varepsilon \rightarrow 0$ $\Longrightarrow $%
\begin{equation*}
\kappa ^{1}{\int_{\Omega ^{\sharp }\times \Omega ^{1}}u^{0}\frac{\partial v}{%
\partial x^{1}}\text{ }dx=0}.
\end{equation*}

\item \textbf{Step 6. }The Green formula (\ref{GreenFormulas}) and $v=0$ on $%
\Omega ^{\sharp }\times \Gamma ^{1}$ $\Longrightarrow $%
\begin{equation*}
\kappa ^{1}{\int_{\Omega ^{\sharp }\times \Omega ^{1}}\frac{\partial u^{0}}{%
\partial x^{1}}}\text{ }{v\text{ }dx=0}.
\end{equation*}

\item \textbf{Step 7. }Proposition \ref{Interpretation of a weak equality} $%
\Longrightarrow $%
\begin{equation*}
\dfrac{\partial u^{0}}{\partial x^{1}}=0.
\end{equation*}
\end{itemize}
\end{proof}

\begin{lemma}
\label{Second Block}\textbf{[Second Block: Two-Scale Limit of the
Derivative] }$\eta =\frac{\partial u^{1}}{\partial x^{1}}.$
\end{lemma}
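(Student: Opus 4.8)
The plan is to obtain the identity directly from the Derivative Rule (Proposition~\ref{Derivative Rule}), the two-scale expansion~(\ref{Tu:expression:Skel}), and the conclusion of the First Block (Lemma~\ref{First Block}). First I would apply the Derivative Rule to write
\[
T\!\left(\frac{du^{\varepsilon}}{dx}\right)=\frac{1}{\varepsilon}\frac{\partial (Tu^{\varepsilon})}{\partial x^{1}},
\]
so that the weak two-scale limit $\eta$ of the left-hand side is controlled entirely by the behaviour of $\partial_{x^{1}}(Tu^{\varepsilon})$. Substituting $Tu^{\varepsilon}=u^{0}+\varepsilon u^{1}+\varepsilon O_{w}(\varepsilon)$ and differentiating in $x^{1}$ produces three contributions: a singular leading term $\varepsilon^{-1}\partial_{x^{1}}u^{0}$, the term $\partial_{x^{1}}u^{1}$, and a remainder stemming from $\partial_{x^{1}}O_{w}(\varepsilon)$.

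The decisive step is to cancel the singular $\varepsilon^{-1}$ contribution. By the First Block (Lemma~\ref{First Block}) the limit $u^{0}$ is independent of $x^{1}$, hence $\partial_{x^{1}}u^{0}=0$ and that term disappears identically. What then remains, once the weak limit $\varepsilon\to 0$ is taken, should be exactly $\eta=\partial_{x^{1}}u^{1}$.

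The main obstacle is the rigorous meaning of $\partial_{x^{1}}O_{w}(\varepsilon)$, since one may not differentiate a weakly convergent sequence termwise. I would resolve this by arguing in the weak formulation, testing against $v\in\mathcal{C}_{0}^{\infty}(\Omega^{\sharp};\mathcal{C}_{0}^{\infty}(\Omega^{1}))$ and transferring the $x^{1}$-derivative onto $v$ by the Green Rule (Proposition~\ref{Green Rule}):
\[
\kappa^{1}\!\int_{\Omega^{\sharp}\times\Omega^{1}}\frac{1}{\varepsilon}\frac{\partial (Tu^{\varepsilon})}{\partial x^{1}}\,v\;dx^{\sharp}dx^{1}=-\kappa^{1}\!\int_{\Omega^{\sharp}\times\Omega^{1}}\frac{1}{\varepsilon}(Tu^{\varepsilon})\,\frac{\partial v}{\partial x^{1}}\;dx^{\sharp}dx^{1},
\]
the boundary term vanishing because $v=0$ on $\Gamma^{1}$. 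Inserting the expansion, the $u^{0}$-part equals $-\varepsilon^{-1}\int_{\Omega^{\sharp}}u^{0}\big(\int_{\Omega^{1}}\partial_{x^{1}}v\,dx^{1}\big)\,dx^{\sharp}$, which is zero since $u^{0}$ is $x^{1}$-independent and $\int_{\Omega^{1}}\partial_{x^{1}}v\,dx^{1}=0$; this again removes the singularity without invoking First Block a second time in disguise. The $\varepsilon u^{1}$-part survives as $-\int u^{1}\,\partial_{x^{1}}v$, while the $\varepsilon O_{w}(\varepsilon)$-part tends to zero because $\partial_{x^{1}}v\in L^{2}$. Integrating by parts once more via the Green Rule and invoking the Interpretation of a weak equality (Proposition~\ref{Interpretation of a weak equality}) then yields $\eta=\partial_{x^{1}}u^{1}$, as claimed. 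The only genuinely delicate point is thus the interplay between the $\varepsilon^{-1}$ prefactor and the two-scale expansion, which the First Block has been tailored precisely to control.
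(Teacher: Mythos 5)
Your proof is correct, but it takes a genuinely different route from the paper's. The paper never invokes the Derivative Rule (Proposition \ref{Derivative Rule}); instead it transports the whole computation to the physical domain: it rewrites $\kappa^{1}\int T(\frac{du^{\varepsilon}}{dx})\,v$ as $\kappa^{0}\int \frac{du^{\varepsilon}}{dx}\,T^{\ast}v$ via Definition \ref{Adjoint or Dual of T}, replaces $T^{\ast}$ by its smooth approximation $B$ (Proposition \ref{Approximation between T* by B}), applies the Green formula on $\Omega$, converts back to $T^{\ast}$ and then to the two-scale domain, and only then inserts the expansion (\ref{Tu:expression:Skel}) and passes to the limit. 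Your argument stays entirely in the two-scale domain: the Derivative Rule turns $T(\frac{du^{\varepsilon}}{dx})$ into $\varepsilon^{-1}\partial_{x^{1}}(Tu^{\varepsilon})$, one integration by parts in $x^{1}$ puts the derivative on $v$, the expansion kills the singular $u^{0}$-term (using Lemma \ref{First Block} together with $\int_{\Omega^{1}}\partial_{x^{1}}v\,dx^{1}=0$), and a second integration by parts plus Proposition \ref{Interpretation of a weak equality} finishes. Your route is shorter and dispenses with $T^{\ast}$ and $B$ altogether. What the paper's detour buys is reusability: its Steps 1--5 are quoted verbatim as Step 1 of Lemmas \ref{Third Block} and \ref{Fourth Block}, where the test function no longer vanishes on $\Omega^{\sharp}\times\Gamma^{1}$ (it is only $\Omega^{1}$-periodic there) and where boundary information on $\Gamma$ is exploited. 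In those settings your first integration by parts would generate the boundary term $\varepsilon^{-1}\int_{\Omega^{\sharp}\times\Gamma^{1}} tr(Tu^{\varepsilon})\,v\,n_{\Gamma^{1}}\,dx^{\sharp}ds(x^{1})$, which carries a singular factor and involves traces of $Tu^{\varepsilon}$ that the purely interior assumption (\ref{Tu:expression:Skel}) cannot control; the paper's route only ever produces boundary terms involving the smooth test function and the limit functions $u^{0}$, $u^{1}$. So your proof is valid for this lemma precisely because here $v$ vanishes on $\Omega^{\sharp}\times\Gamma^{1}$, but it would not extend unchanged to the next two blocks, which is exactly what the paper's more roundabout structure is designed to permit.
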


\begin{proof}
We choose $v\in \mathcal{C}_{0}^{\infty }(\Omega ^{\sharp };\mathcal{C}%
_{0}^{\infty }(\Omega ^{1}))$ in%
\begin{equation}
\Psi =\kappa ^{1}\int_{\Omega ^{\sharp }\times \Omega ^{1}}T(\frac{%
du^{\varepsilon }}{dx})v\text{ }dx^{\sharp }dx^{1}.
\label{Initial term of the second block}
\end{equation}

\begin{itemize}
\item \textbf{Step 1.} Definition \ref{Adjoint or Dual of T} $%
\Longrightarrow $%
\begin{equation*}
\Psi =\kappa ^{0}\int_{\Omega }\frac{du^{\varepsilon }}{dx}T^{\ast }v\text{ }%
dx.
\end{equation*}

\item \textbf{Step 2.} Proposition {\ref{Approximation between T* by B}} (to
approximate $T^{\ast }$ by $B$), the Green formula ({\ref{GreenFormulas}}),
the linearity of integrals, and again Proposition {\ref{Approximation
between T* by B}} (to approximate $B$ by $T^{\ast }$) $\Longrightarrow $%
\begin{equation*}
\Psi =-\kappa ^{0}\int_{\Omega }u^{\varepsilon }T^{\ast }(\frac{\partial v}{%
\partial x^{\sharp }})\text{ }dx-\frac{\kappa ^{0}}{\varepsilon }%
\int_{\Omega }u^{\varepsilon }T^{\ast }(\frac{\partial v}{\partial x^{1}})%
\text{ }dx-\kappa ^{0}\int_{\Omega }u^{\varepsilon }T^{\ast }(\frac{\partial
^{2}v}{\partial x^{1}\partial x^{\sharp }}x^{1})\text{ }dx+O(\varepsilon ).
\end{equation*}

\item \textbf{Step 3.} Definition \ref{Adjoint or Dual of T} $%
\Longrightarrow $%
\begin{eqnarray*}
\Psi &=&-\kappa ^{1}\int_{\Omega ^{\sharp }\times \Omega
^{1}}T(u^{\varepsilon })\frac{\partial v}{\partial x^{\sharp }}\text{ }%
dx^{\sharp }dx^{1}-\frac{\kappa ^{1}}{\varepsilon }\int_{\Omega ^{\sharp
}\times \Omega ^{1}}T(u^{\varepsilon })\frac{\partial v}{\partial x^{1}}%
\text{ }dx^{\sharp }dx^{1} \\
&&-\kappa ^{1}\int_{\Omega ^{\sharp }\times \Omega ^{1}}T(u^{\varepsilon
})x^{1}\frac{\partial ^{2}v}{\partial x^{1}\partial x^{\sharp }}\text{ }%
dx^{\sharp }dx^{1}+O(\varepsilon ).
\end{eqnarray*}

\item \textbf{Step 4. }Assumption (\ref{Tu:expression:Skel}) $%
\Longrightarrow $%
\begin{eqnarray*}
\Psi &=&-\kappa ^{1}\int_{\Omega ^{\sharp }\times \Omega ^{1}}u^{0}\frac{%
\partial v}{\partial x^{\sharp }}\text{ }dx^{\sharp }dx^{1}-\frac{\kappa ^{1}%
}{\varepsilon }\int_{\Omega ^{\sharp }\times \Omega ^{1}}u^{0}\frac{\partial
v}{\partial x^{1}}\text{ }dx^{\sharp }dx^{1}-\kappa ^{1}\int_{\Omega
^{\sharp }\times \Omega ^{1}}u^{1}\frac{\partial v}{\partial x^{1}}\text{ }%
dx^{\sharp }dx^{1} \\
&&-\kappa ^{1}\int_{\Omega ^{\sharp }\times \Omega ^{1}}u^{0}\frac{\partial
^{2}v}{\partial x^{1}\partial x^{\sharp }}x^{1}+O(\varepsilon ).
\end{eqnarray*}

\item \textbf{Step 5.} The Green formula (\ref{GreenFormulas}), Lemma \ref%
{First Block}, and passing to the limit when $\varepsilon \rightarrow 0$ $%
\Longrightarrow $%
\begin{equation*}
\kappa ^{1}\int_{\Omega ^{\sharp }\times \Omega ^{1}}\eta \text{ }v\text{ }%
dx^{\sharp }dx^{1}=\kappa ^{1}\int_{\Omega ^{\sharp }\times \Omega ^{1}}%
\frac{\partial u^{1}}{\partial x^{1}}v\text{ }dx^{\sharp }dx^{1}.
\end{equation*}

\item \textbf{Step 6.} Proposition \ref{Interpretation of a weak equality} $%
\Longrightarrow $%
\begin{equation*}
\eta =\frac{\partial u^{1}}{\partial x^{1}}.
\end{equation*}
\end{itemize}
\end{proof}

\begin{lemma}
\label{Third Block}\textbf{[Third Block: Microscopic Boundary Condition] }$%
\tilde{u}^{1}$ is $\Omega ^{1}$-periodic.
\end{lemma}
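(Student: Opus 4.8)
My plan is to deduce the periodicity of $\tilde{u}^{1}$ from Proposition~\ref{Interpretation of a periodic boundary condition} applied on the unit cell $\Omega^{1}$: it suffices to establish that
\[
\kappa^{1}\int_{\Omega^{\sharp}\times\Gamma^{1}}\tilde{u}^{1}\,v\,n_{\Gamma^{1}}\,dx^{\sharp}=0
\]
for every test function $v\in\mathcal{C}_{0}^{\infty}(\Omega^{\sharp};\mathcal{C}_{\sharp}^{\infty}(\Omega^{1}))$, since this yields $\tilde{u}^{1}\in H_{\sharp}^{1}(\Omega^{1})$. The decisive difference with Lemmas~\ref{First Block} and~\ref{Second Block} is the choice of test function: there $v$ was taken in $\mathcal{C}_{0}^{\infty}(\Omega^{\sharp};\mathcal{C}_{0}^{\infty}(\Omega^{1}))$ so that integrations by parts in $x^{1}$ killed the boundary data on $\Gamma^{1}$, whereas here I take $v$ only $\Omega^{1}$-periodic in $x^{1}$, precisely so that those boundary contributions survive and assemble into the quantity above.

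Concretely, I would restart from the functional of Lemma~\ref{Second Block},
\[
\Psi=\kappa^{1}\int_{\Omega^{\sharp}\times\Omega^{1}}T\!\left(\frac{du^{\varepsilon}}{dx}\right)v\ dx^{\sharp}dx^{1},
\]
and replay its Steps~1--4 unchanged. Those steps only invoke Definition~\ref{Adjoint or Dual of T}, Proposition~\ref{Approximation between T* by B}, the Green formula~(\ref{GreenFormulas}) on the physical interval $\Omega$, and Assumption~(\ref{Tu:expression:Skel}); the sole boundary term they create lives on the physical boundary $\Gamma$ and still vanishes, because for periodic $v$ both $Bv$ and $B(x^{1}\partial_{x^{\sharp}}v)$ remain in $\mathcal{C}_{0}^{\infty}(\Omega)$ by Proposition~\ref{Boundary Conditions of Bv}. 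I therefore reach the very expression displayed after Step~4, a sum of volume integrals pairing $u^{0}$ and $u^{1}$ with $\partial_{x^{\sharp}}v$, $\varepsilon^{-1}\partial_{x^{1}}v$ and $x^{1}\partial_{x^{1}}\partial_{x^{\sharp}}v$, up to $O(\varepsilon)$.

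The heart of the argument is the modified Step~5, where I integrate by parts in $x^{1}$ and insert Lemma~\ref{First Block} ($u^{0}$ independent of $x^{1}$). Three effects occur. First, the dangerous $\varepsilon^{-1}$ term reduces to the boundary value $\int_{\Gamma^{1}}u^{0}v\,n_{\Gamma^{1}}$, which is \emph{exactly} zero since $u^{0}$ and $v$ take equal values at the two endpoints of $\Omega^{1}$ while $n_{\Gamma^{1}}$ changes sign; this is what controls the leading order. Second, the volume contributions in $\partial_{x^{\sharp}}v$ arising from the first and last terms cancel. Third, a further integration by parts in $x^{\sharp}$ (with no $\Gamma^{\sharp}$ boundary term, as $v$ vanishes on $\partial\Omega^{\sharp}$) converts $-\kappa^{1}\int_{\Omega^{\sharp}\times\Gamma^{1}}u^{0}\,x^{1}\partial_{x^{\sharp}}v\,n_{\Gamma^{1}}$ into $+\kappa^{1}\int_{\Omega^{\sharp}\times\Gamma^{1}}x^{1}\partial_{x^{\sharp}}u^{0}\,v\,n_{\Gamma^{1}}$, exactly the correction that pairs with the $u^{1}$ boundary term to build $\tilde{u}^{1}=u^{1}-x^{1}\partial_{x^{\sharp}}u^{0}$ of~(\ref{Def:Tilde:u:skeleton}). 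What remains of the volume part is $\kappa^{1}\int\partial_{x^{1}}u^{1}\,v$, which equals $\kappa^{1}\int\eta\,v=\lim_{\varepsilon\to0}\Psi$ by Lemma~\ref{Second Block}; hence all volume terms cancel in the limit and only the desired boundary identity $\kappa^{1}\int_{\Omega^{\sharp}\times\Gamma^{1}}\tilde{u}^{1}\,v\,n_{\Gamma^{1}}\,dx^{\sharp}=0$ survives, closing the proof via Proposition~\ref{Interpretation of a periodic boundary condition}.

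I expect the main obstacle to be bookkeeping rather than conceptual. One must verify that the $O(\varepsilon^{-1})$ term genuinely vanishes instead of merely staying bounded, which hinges on using periodicity of \emph{both} $u^{0}$ and the test function; and one must correctly match orders in $\varepsilon$ so that the $x^{1}\partial_{x^{\sharp}}u^{0}$ piece produced by the $\varepsilon\,x^{1}\partial_{x^{\sharp}}$ correction of Proposition~\ref{Approximation between T* by B} lands at the same order as $u^{1}$, ensuring that $\tilde{u}^{1}$ and not $u^{1}$ alone appears in the boundary integral.
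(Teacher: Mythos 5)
Your proposal is correct and follows essentially the same route as the paper: the paper likewise takes periodic test functions $v\in\mathcal{C}_{0}^{\infty}(\Omega^{\sharp};\mathcal{C}_{\sharp}^{\infty}(\Omega^{1}))$ in the functional (\ref{Initial term of the second block}), replays the second block's steps while now retaining the $\Omega^{\sharp}\times\Gamma^{1}$ boundary terms, cancels the volume terms using Lemma \ref{Second Block} ($\eta=\partial u^{1}/\partial x^{1}$), and concludes from (\ref{Def:Tilde:u:skeleton}) and Proposition \ref{Interpretation of a periodic boundary condition}. Your ``modified Step 5'' (exact vanishing of the $\varepsilon^{-1}$ term by periodicity, and the $x^{\sharp}$-integration by parts producing the $x^{1}\partial_{x^{\sharp}}u^{0}$ correction) merely spells out the bookkeeping that the paper compresses into its Step 1.
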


\begin{proof}
In (\ref{Initial term of the second block}), we choose $v\in \mathcal{C}%
_{0}^{\infty }(\Omega ^{\sharp };\mathcal{C}_{\sharp }^{\infty }(\Omega
^{1}))$.

\begin{itemize}
\item \textbf{Step 1.} The steps 1-5 of the second block $\Longrightarrow $%
\begin{equation*}
\kappa ^{1}\int_{\Omega ^{\sharp }\times \Omega ^{1}}\eta v\text{ }%
dx^{\sharp }dx^{1}-\kappa ^{1}\int_{\Omega ^{\sharp }\times \Gamma
^{1}}(u^{1}-x^{1}\frac{\partial u^{0}}{\partial x^{\sharp }})v\text{ }%
n_{\Gamma ^{1}}\text{ }dx^{\sharp }dx^{1}-\kappa ^{1}\int_{\Omega ^{\sharp
}\times \Omega ^{1}}\frac{\partial u^{1}}{\partial x^{1}}v\text{ }dx^{\sharp
}dx^{1}=0.
\end{equation*}

\item \textbf{Step 2.} Lemma \ref{Second Block} $\Longrightarrow $%
\begin{equation}
\int_{\Omega ^{\sharp }\times \Gamma ^{1}}(u^{1}-x^{1}\frac{\partial u^{0}}{%
\partial x^{\sharp }})v\text{ }n_{\Gamma ^{1}}\text{ }dx^{\sharp
}ds(x^{1})=0.  \label{PeriodicPre}
\end{equation}

\item \textbf{Step 3.} Definition (\ref{Def:Tilde:u:skeleton}) of $\tilde{u}%
^{1}$ and Proposition \ref{Interpretation of a periodic boundary condition} $%
\Longrightarrow $
\begin{equation}
\tilde{u}^{1}\text{ is }\Omega ^{1}\text{-periodic.}  \label{Conclusion2}
\end{equation}
\end{itemize}
\end{proof}

\begin{lemma}
\label{Fourth Block}\textbf{[Fourth Block: Macroscopic Boundary Condition] }$%
u^{0}$ vanishes on $\Gamma ^{\sharp }$.
\end{lemma}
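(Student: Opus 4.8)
The plan is to mimic the computation of the second and third blocks, but with a test function that no longer vanishes on the macroscopic boundary $\Gamma^{\sharp}$, so that a surviving boundary integral over $\Gamma^{\sharp}$ carries the wanted Dirichlet condition, and to use the hypothesis $u^{\varepsilon}=0$ on $\Gamma$ precisely to discard the \emph{physical} boundary contribution that this relaxed choice introduces.

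First I would start again from $\Psi=\kappa^1\int_{\Omega^{\sharp}\times\Omega^1}T(\frac{du^{\varepsilon}}{dx})v\,dx^{\sharp}dx^1$ as in (\ref{Initial term of the second block}), now choosing $v\in\mathcal{C}^{\infty}(\Omega^{\sharp};\mathcal{C}_{\sharp}^{\infty}(\Omega^1))$, i.e. periodic in $x^1$ but free on $\Gamma^{\sharp}$, and rerun Steps 1--5 of Lemma \ref{Second Block}. The relaxed choice of $v$ activates two boundary contributions that were inert before: in Step 2 the Green formula (\ref{GreenFormulas}) on $\Omega$ produces the physical boundary term $\int_{\Gamma}tr(u^{\varepsilon})\,tr(Bv)\,n_{\Gamma}\,ds$ (since $Bv$ no longer vanishes on $\Gamma$), and in Step 5 the Green formula on the two-scale domain produces a macroscopic boundary term on $\Gamma^{\sharp}$. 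The hypothesis enters exactly at the first: because $u^{\varepsilon}=0$ on $\Gamma$ the physical term vanishes, so Steps 1--5 remain valid, whereas the macroscopic term is retained.

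Carrying along the contributions that Lemma \ref{Third Block} discarded (there $v=0$ on $\Gamma^{\sharp}$), I expect to reach an identity of the shape $\kappa^1\int\eta\,v-\kappa^1\int_{\Omega^{\sharp}\times\Gamma^1}\tilde u^1 v\,n_{\Gamma^1}-\kappa^1\int\frac{\partial u^1}{\partial x^1}v-(\text{corner term on }\Gamma^{\sharp}\times\Gamma^1)=0$, with $\tilde u^1$ as in (\ref{Def:Tilde:u:skeleton}). Then $\eta=\frac{\partial u^1}{\partial x^1}$ from Lemma \ref{Second Block} cancels the two interior integrals; the $\Omega^1$-periodicity of $\tilde u^1$ from Lemma \ref{Third Block}, together with the periodicity of $v$, cancels the integral over $\Omega^{\sharp}\times\Gamma^1$; and using that $u^0$ is independent of $x^1$ (Lemma \ref{First Block}) the remaining corner term collapses, after integrating $\partial_{x^{\sharp}}(u^0 v)$ over $\Omega^{\sharp}$, to the pure macroscopic identity $\int_{\Gamma^{\sharp}}u^0\,v\,n_{\Gamma^{\sharp}}\,ds=0$. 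A technically lighter variant I would actually prefer is to take $v$ independent of $x^1$: then $\frac{\partial v}{\partial x^1}$ and $\frac{\partial^2 v}{\partial x^1\partial x^{\sharp}}$ vanish from the start and the same boundary identity is obtained with almost no bookkeeping.

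Finally, since $v$ is free on $\Gamma^{\sharp}$, the relation $\int_{\Gamma^{\sharp}}u^0\,v\,n_{\Gamma^{\sharp}}\,ds=0$ for all admissible $v$ forces $u^0=0$ on $\Gamma^{\sharp}$; this is the Dirichlet analogue of Proposition \ref{Interpretation of a weak equality}, an elementary interpretation step of the kind the paper invokes without stating. The main obstacle is the bookkeeping of the boundary terms: one must verify that every interior and every microscopic ($\Gamma^1$) contribution is annihilated through Lemmas \ref{Second Block} and \ref{Third Block}, leaving the macroscopic term alone, so that the single use of $u^{\varepsilon}=0$ on $\Gamma$ in Step 2 is exactly what transports the physical boundary datum into the limit condition $u^0=0$ on $\Gamma^{\sharp}$.
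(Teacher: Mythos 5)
Your proposal is correct and takes essentially the same route as the paper: the paper's proof likewise reruns Steps 1--5 of the second block with a test function depending only on $x^{\sharp}$ (your ``technically lighter variant''), uses $u^{\varepsilon}=0$ on $\Gamma$ to kill the physical boundary term, arrives at $\int_{\Gamma^{\sharp}\times\Omega^{1}}u^{0}\,v\,n_{\Gamma^{\sharp}}\,ds(x^{\sharp})\,dx^{1}=0$, and concludes via Proposition~\ref{Interpretation of a weak equality}. The only difference is bookkeeping: your main line keeps $v$ periodic in $x^{1}$ and cancels the resulting $\Gamma^{1}$ and corner contributions through Lemmas~\ref{Second Block} and~\ref{Third Block}, whereas the paper avoids these terms by taking $v$ independent of $x^{1}$ from the outset.
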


\begin{proof}
We choose $v\in \mathcal{C}_{0}^{\infty }(\Omega ^{\sharp })$,

\begin{itemize}
\item \textbf{Step 1.} The steps 1-5 of the second block and $u^{\varepsilon
}=0$ on $\Gamma $ $\Longrightarrow $%
\begin{equation*}
\int_{\Gamma ^{\sharp }\times \Omega ^{1}}u^{0}v\text{ }n_{\Gamma ^{\sharp }}%
\text{ }ds(x^{\sharp })dx^{1}=0.
\end{equation*}

\item \textbf{Step 2.} Proposition \ref{Interpretation of a weak equality} $%
\Longrightarrow $
\begin{equation*}
u^{0}=0\text{ on }\Gamma ^{\sharp }.
\end{equation*}
\end{itemize}
\end{proof}

\subsection{Homogenized Model Derivation \label{Homogenized Model Derivation}%
}

\noindent Here we provide the \textit{reference proof }of the homogenized
model derivation. It uses Proposition \ref{Two-scale Limit of a Derivative}
as an intermediary result. Let $u^{\varepsilon }$, the solution of a linear
boundary value problem posed in $\Omega ,$%
\begin{equation}
\left\{
\begin{array}{l}
-\dfrac{d}{dx}(a^{\varepsilon }(x)\dfrac{du^{\varepsilon }(x)}{dx})=f\text{
in }\Omega \\
u^{\varepsilon }=0\text{ on }\Gamma ,%
\end{array}%
\right.  \label{Original Model}
\end{equation}%
where the right-hand side $f\in L^{2}(\Omega ),$ the coefficient $%
a^{\varepsilon }\in \mathcal{C}^{\infty }(\Omega )$ is $\varepsilon \Omega
^{1}$-periodic, and there exist two positive constants $\alpha $ and $\beta $
independent $\varepsilon $ such that%
\begin{equation}
0<\alpha \leq a^{\varepsilon }(x)\leq \beta .
\label{Estimate of Coefficient}
\end{equation}%
The weak formulation is obtained by multiplication of the differential
equation by a test function $v\in \mathcal{C}_{0}^{\infty }(\Omega )$ and
application of the Green formula,%
\begin{equation}
\kappa ^{0}\int_{\Omega }a^{\varepsilon }(x)\frac{du^{\varepsilon }}{dx}%
\frac{dv}{dx}\text{ }dx=\kappa ^{0}\int_{\Omega }f(x)v(x)\text{ }dx.
\label{Model1}
\end{equation}%
It is known that its unique solution $u^{\varepsilon }$ is bounded as in (%
\ref{Boundness of u}). Moreover, we assume that for some functions $%
a^{0}(x^{1})$ and $f^{0}(x^{\sharp }),$
\begin{equation}
T(a^{\varepsilon })=a^{0}\text{ and }T(f)=f^{0}(x^{\sharp
})+O_{w}(\varepsilon ).  \label{T(a) and T(f)}
\end{equation}%
\noindent The next proposition states the homogenized model and is the main
result of the \textit{reference proof}. For $\theta ^{1}$ a solution to the
microscopic problem (\ref{Microscopic problem}) with $\mu =1,$ the
homogenized coefficient and right-hand side are defined by%
\begin{equation}
a^{H}=\int_{\Omega ^{1}}a^{0}\left( 1+\frac{\partial \theta ^{1}}{\partial
x^{1}}\right) ^{2}\text{ }dx^{1}\text{ and }f^{H}=\int_{\Omega ^{1}}f^{0}%
\text{ }dx^{1}.  \label{coeff homogeneises}
\end{equation}

\begin{proposition}
\label{Homogenized Model}\textbf{[Homogenized Model] }The limit $u^{0}$ is
solution to the weak formulation
\begin{equation}
\int_{\Omega ^{\sharp }}a^{H}\frac{du^{0}}{dx^{\sharp }}\frac{dv^{0}}{%
dx^{\sharp }}\text{ }dx^{\sharp }=\int_{\Omega ^{\sharp }}f^{H}v^{0}\text{ }%
dx^{\sharp }  \label{Homogenized model weak form}
\end{equation}%
for all $v^{0}\in \mathcal{C}_{0}^{\infty }(\Omega ^{\sharp }).$
\end{proposition}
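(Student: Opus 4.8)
The plan is to mimic the pattern of the four preceding blocks: inject a well-chosen test function into the weak formulation (\ref{Model1}), transport everything to the two-scale domain, pass to the limit $\varepsilon\to 0$ with the help of Proposition \ref{Two-scale Limit of a Derivative}, and finally separate the two scales. I expect the proof to split into three further lemmas (a fifth, sixth and seventh block): derivation of a two-scale variational identity, identification of the corrector, and reduction to the macroscopic equation.

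First I would derive the \emph{two-scale variational identity}. The key is to test (\ref{Model1}) against the oscillating function $v=B\big(v^0+\varepsilon v^1\big)$, with $v^0\in\mathcal{C}_0^\infty(\Omega^\sharp)$ and $v^1\in\mathcal{C}_0^\infty(\Omega^\sharp;\mathcal{C}_\sharp^\infty(\Omega^1))$; by Proposition \ref{Boundary Conditions of Bv} this is an admissible test function in $\mathcal{C}_0^\infty(\Omega)$. The Derivation Rule for $B$ (Proposition \ref{Derivation Rule for B}) gives $\frac{dv}{dx}=B(\partial_{x^\sharp}v^0)+B(\partial_{x^1}v^1)+\varepsilon B(\partial_{x^\sharp}v^1)$, the singular factor $\varepsilon^{-1}$ being exactly absorbed by the $\varepsilon$ placed in front of $v^1$, which is the whole point of the scaling. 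I would then trade each $B(\cdot)$ for $T^\ast(\cdot)$ up to $\varepsilon O_s(\varepsilon)$ via Proposition \ref{Approximation between T* by B}, move to the two-scale domain through the adjoint relation (Definition \ref{Adjoint or Dual of T}) together with $T(a^\varepsilon\frac{du^\varepsilon}{dx})=a^0\,T(\frac{du^\varepsilon}{dx})$ (Proposition \ref{Product Rule} and (\ref{T(a) and T(f)})), and handle the right-hand side of (\ref{Model1}) similarly using $T(f)=f^0+O_w(\varepsilon)$. Since $T(\frac{du^\varepsilon}{dx})$ is $L^2$-bounded by the Norm Rule (Proposition \ref{Norm Rule}) and (\ref{Boundness of u}), the first-order correction terms are $O(\varepsilon)$; passing to the limit and using the weak convergence $T(\frac{du^\varepsilon}{dx})\rightharpoonup\eta$ with $\eta=\partial_{x^\sharp}u^0+\partial_{x^1}\tilde u^1$ from Proposition \ref{Two-scale Limit of a Derivative}, I obtain
\begin{equation*}
\int_{\Omega^\sharp\times\Omega^1} a^0\Big(\frac{\partial u^0}{\partial x^\sharp}+\frac{\partial \tilde u^1}{\partial x^1}\Big)\Big(\frac{\partial v^0}{\partial x^\sharp}+\frac{\partial v^1}{\partial x^1}\Big)\,dx^\sharp dx^1=\int_{\Omega^\sharp\times\Omega^1} f^0 v^0\,dx^\sharp dx^1 .
\end{equation*}

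Next I would separate the scales. Taking $v^0=0$ and letting $v^1$ vary leaves, for a.e.\ $x^\sharp$, exactly the microscopic formulation (\ref{Microscopic problem}) solved by the $\Omega^1$-periodic function $\tilde u^1$ with $\mu=\partial_{x^\sharp}u^0$; the uniqueness and linearity in Proposition \ref{The linear operator associated to the Microscopic problem} then give the corrector relation $\partial_{x^1}\tilde u^1=\partial_{x^\sharp}u^0\,\partial_{x^1}\theta^1$. Substituting it back and choosing instead $v^1=0$ reduces the identity to $\int_{\Omega^\sharp\times\Omega^1} a^0(1+\partial_{x^1}\theta^1)\,\partial_{x^\sharp}u^0\,\partial_{x^\sharp}v^0\,dx^\sharp dx^1=\int_{\Omega^\sharp\times\Omega^1} f^0 v^0\,dx^\sharp dx^1$. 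Integrating in $x^1$ produces $\int_{\Omega^1}a^0(1+\partial_{x^1}\theta^1)\,dx^1$ on the left and $f^H$ on the right; the final point is to recognize the left coefficient as $a^H$. This follows from the symmetrization identity $\int_{\Omega^1}a^0(1+\partial_{x^1}\theta^1)\,dx^1=\int_{\Omega^1}a^0(1+\partial_{x^1}\theta^1)^2\,dx^1$, obtained by testing (\ref{Microscopic problem}) with $\mu=1$ against $w=\theta^1$, which yields $\int_{\Omega^1}a^0(1+\partial_{x^1}\theta^1)\partial_{x^1}\theta^1\,dx^1=0$. As $u^0=0$ on $\Gamma^\sharp$ by Proposition \ref{Two-scale Limit of a Derivative}, the limit $u^0$ is precisely the solution of (\ref{Homogenized model weak form}).

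The main obstacle is the first step: organizing the limit passage in the two-scale variational identity so that it stays purely equational. The delicate points are the exact cancellation of the singular $\varepsilon^{-1}$ contribution by the oscillating corrector $\varepsilon v^1$, and the careful bookkeeping of every $\varepsilon O_s(\varepsilon)$ and $O(\varepsilon)$ remainder created when replacing $B$ by $T^\ast$ and invoking the adjoint relation. Controlling these remainders is exactly what forces one to know beforehand that $T(\frac{du^\varepsilon}{dx})$ is $L^2$-bounded, so that those terms genuinely vanish. Once this identity is secured, the scale separation and the symmetrization producing the squared form of $a^H$ are short algebraic manipulations.
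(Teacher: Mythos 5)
Your proposal is correct, and its first two stages coincide with the paper's own decomposition: your two-scale variational identity is exactly Lemma \ref{Fifth Block} (same test function $v=B(v^{0}+\varepsilon v^{1})$, same chain of Propositions \ref{Boundary Conditions of Bv}, \ref{Derivation Rule for B}, \ref{Approximation between T* by B}, Definition \ref{Adjoint or Dual of T}, and the limit passage via Proposition \ref{Two-scale Limit of a Derivative}), and your scale-separation step identifying $\partial_{x^{1}}\tilde{u}^{1}=\partial_{x^{\sharp}}u^{0}\,\partial_{x^{1}}\theta^{1}$ is Lemma \ref{Sixth Block}. Where you genuinely diverge is the final (seventh) block. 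The paper substitutes the \emph{nonzero} test corrector $v^{1}=\partial_{x^{\sharp}}v^{0}\,\partial_{x^{1}}\theta^{1}$ into (\ref{TwoScaleModel}), so that both gradient factors become $(1+\partial_{x^{1}}\theta^{1})\partial_{x^{\sharp}}(\cdot)$ and the squared coefficient $a^{H}=\int_{\Omega^{1}}a^{0}(1+\partial_{x^{1}}\theta^{1})^{2}\,dx^{1}$ of (\ref{coeff homogeneises}) appears directly by factorization. You instead take $v^{1}=0$, which yields the nonsymmetric coefficient $\int_{\Omega^{1}}a^{0}(1+\partial_{x^{1}}\theta^{1})\,dx^{1}$, and you then reconcile it with $a^{H}$ through the energy identity obtained by testing (\ref{Microscopic problem}) (with $\mu=1$) against $w=\theta^{1}$, namely
\begin{equation*}
\int_{\Omega^{1}}a^{0}\bigl(1+\partial_{x^{1}}\theta^{1}\bigr)\,\partial_{x^{1}}\theta^{1}\,dx^{1}=0 ,
\end{equation*}
which is a valid and standard symmetrization argument (modulo the density remark that $\theta^{1}\in H^{1}_{\sharp}(\Omega^{1})$ is admissible as a test function, which is consistent with the paper's declaredly informal functional setting). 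The trade-off: the paper's choice of $v^{1}$ is less obvious but keeps the derivation purely substitutional--factorable, which matters for its rewriting implementation; your route uses the simplest test function but pays with one extra lemma, and it has the side benefit of exhibiting the equality of the symmetric and nonsymmetric forms of the homogenized coefficient.
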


\noindent The proof is split into three lemmas.

\begin{lemma}
\label{Fifth Block}\textbf{[Fifth Block: Two-Scale Model] }The couple $%
(u^{0},\widetilde{u}^{1})$ is solution to the two-scale weak formulation%
\begin{equation}
\int_{\Omega ^{\sharp }\times \Omega ^{1}}a^{0}\left( \frac{\partial u^{0}}{%
\partial x^{\sharp }}+\frac{\partial \widetilde{u}^{1}}{\partial x^{1}}%
\right) \left( \frac{\partial v^{0}}{\partial x^{\sharp }}+\frac{\partial
v^{1}}{\partial x^{1}}\right) \text{ }dx^{\sharp }dx^{1}=\int_{\Omega
^{\sharp }\times \Omega ^{1}}f^{0}v^{0}\text{ }dx^{\sharp }dx^{1}
\label{TwoScaleModel}
\end{equation}%
for any $v^{0}\in \mathcal{C}_{0}^{\infty }(\Omega ^{\sharp })$ and $%
v^{1}\in \mathcal{C}_{0}^{\infty }(\Omega ^{\sharp },C_{\sharp }^{\infty
}(\Omega ^{1})).$
\end{lemma}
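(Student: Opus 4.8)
Lemma (Fifth Block: Two-Scale Model) — proof plan

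The plan is to start from the weak formulation (\ref{Model1}) of the original problem and transport it to the two-scale domain, exactly as the first four blocks transported the derivative term. The natural test function to feed into (\ref{Model1}) is $v = Bv^0 + \varepsilon B(x^1 v^1)$ (or, more precisely, a combination built from $B$ so that, after applying the Derivation Rule for $B$ from Proposition \ref{Derivation Rule for B}, the dominant contribution of $\frac{dv}{dx}$ is $B\!\left(\frac{\partial v^0}{\partial x^\sharp}+\frac{\partial v^1}{\partial x^1}\right)+O(\varepsilon)$). The point of this choice is that the leading-order two-scale image of the test gradient is precisely the pairing $\frac{\partial v^0}{\partial x^\sharp}+\frac{\partial v^1}{\partial x^1}$ appearing on the right factor of (\ref{TwoScaleModel}). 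Because $v^0\in\mathcal{C}_0^\infty(\Omega^\sharp)$ and $v^1\in\mathcal{C}_0^\infty(\Omega^\sharp,C_\sharp^\infty(\Omega^1))$, Proposition \ref{Boundary Conditions of Bv} guarantees $Bv^0\in\mathcal{C}_0^\infty(\Omega)$, so the chosen $v$ is an admissible test function in (\ref{Model1}).

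Next I would push both sides of (\ref{Model1}) into the two-scale domain. On the left-hand side I apply the Inner Product Rule (Proposition \ref{Inner Product Rule}) to rewrite $\kappa^0\int_\Omega a^\varepsilon \frac{du^\varepsilon}{dx}\frac{dv}{dx}\,dx$ as $\kappa^1\int_{\Omega^\sharp\times\Omega^1} T(a^\varepsilon)\,T\!\left(\frac{du^\varepsilon}{dx}\right) T\!\left(\frac{dv}{dx}\right)dx^\sharp dx^1$, using the Product Rule (Proposition \ref{Product Rule}) to split $T$ across the three factors. I then substitute the known two-scale limits: $T(a^\varepsilon)=a^0$ by (\ref{T(a) and T(f)}); $T\!\left(\frac{du^\varepsilon}{dx}\right)\rightharpoonup \eta=\frac{\partial u^0}{\partial x^\sharp}+\frac{\partial\tilde u^1}{\partial x^1}$ by Proposition \ref{Two-scale Limit of a Derivative}; and $T\!\left(\frac{dv}{dx}\right)\to \frac{\partial v^0}{\partial x^\sharp}+\frac{\partial v^1}{\partial x^1}$ strongly from the $B$-derivation computation above. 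For the right-hand side, the Inner Product Rule turns $\kappa^0\int_\Omega f v\,dx$ into $\kappa^1\int_{\Omega^\sharp\times\Omega^1} T(f)\,T(v)\,dx^\sharp dx^1$, and then $T(f)=f^0(x^\sharp)+O_w(\varepsilon)$ together with $T(v)\to v^0$ yields $\int_{\Omega^\sharp\times\Omega^1} f^0 v^0\,dx^\sharp dx^1$. Passing to the limit $\varepsilon\to0$ then gives exactly (\ref{TwoScaleModel}).

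The hard part will be the passage to the limit in the left-hand product of three $\varepsilon$-dependent factors. One factor ($T(\frac{du^\varepsilon}{dx})$) converges only weakly, while the other two must be made to converge strongly so that the product of a weakly convergent sequence with a strongly convergent one passes to the limit. This forces the careful construction of the test function via $B$: the strong remainder term $\varepsilon O_s(\varepsilon)$ in the Approximation between $T^\ast$ and $B$ (Proposition \ref{Approximation between T* by B}) and the $\varepsilon^{-1}$ factor in the $B$-derivation rule must be tracked so that the genuinely singular $\varepsilon^{-1}$ contributions either cancel or are absorbed into the admissible test gradient, leaving only an $O(\varepsilon)$ error. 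Managing this bookkeeping — ensuring the two-scale image of $\frac{dv}{dx}$ is $\frac{\partial v^0}{\partial x^\sharp}+\frac{\partial v^1}{\partial x^1}$ up to a strongly vanishing error, and that no uncontrolled $\varepsilon^{-1}$ term survives — is the crux; everything else is a direct application of the product, integral, and adjoint rules already established.
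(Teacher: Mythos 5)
Your outline is, in essence, the paper's own proof: pose a $B$-built test function in (\ref{Model1}) (admissible by Proposition \ref{Boundary Conditions of Bv}), differentiate it with Proposition \ref{Derivation Rule for B}, transport everything to the two-scale domain, substitute (\ref{T(a) and T(f)}) and Proposition \ref{Two-scale Limit of a Derivative}, and pass to the limit. But two points need fixing. First, the test function you actually wrote, $v=Bv^{0}+\varepsilon B(x^{1}v^{1})$, is not the right one: by Proposition \ref{Derivation Rule for B}, the derivative of $\varepsilon B(x^{1}v^{1})$ contains $\varepsilon\cdot\varepsilon^{-1}B\left(\partial_{x^{1}}(x^{1}v^{1})\right)=B\left(v^{1}+x^{1}\partial_{x^{1}}v^{1}\right)$, so the leading gradient is $B\left(\partial_{x^{\sharp}}v^{0}+v^{1}+x^{1}\partial_{x^{1}}v^{1}\right)$ rather than the desired $B\left(\partial_{x^{\sharp}}v^{0}+\partial_{x^{1}}v^{1}\right)$. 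The correct choice, used in the paper, is simply $v=B(v^{0}+\varepsilon v^{1})$: the singular term $\varepsilon^{-1}B(\partial_{x^{1}}v^{0})$ vanishes because $v^{0}$ is independent of $x^{1}$, and the factor $\varepsilon$ in front of $v^{1}$ exactly compensates the $\varepsilon^{-1}$ of the rule, so no delicate bookkeeping of singular terms is actually needed.

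The genuine gap is in how you move to the two-scale domain. You invoke the Inner Product Rule (Proposition \ref{Inner Product Rule}) and the Product Rule (Proposition \ref{Product Rule}), which produces the factor $T\left(\frac{dv}{dx}\right)$, and you then assert that $T\left(\frac{dv}{dx}\right)\rightarrow\partial_{x^{\sharp}}v^{0}+\partial_{x^{1}}v^{1}$ strongly ``from the $B$-derivation computation''. That assertion does not follow from anything you have: it amounts to the claim that $T\circ B$ (equivalently $TT^{\ast}$) converges strongly to the identity on smooth functions, which is true but appears nowhere among the paper's propositions --- and in this framework, where each proof step must be an application of a stated property, it cannot be used. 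The paper's proof avoids the issue entirely: after Proposition \ref{Derivation Rule for B}, it uses Proposition \ref{Approximation between T* by B} to replace $B\left(\partial_{x^{\sharp}}v^{0}+\partial_{x^{1}}v^{1}\right)$ by $T^{\ast}\left(\partial_{x^{\sharp}}v^{0}+\partial_{x^{1}}v^{1}\right)+O(\varepsilon)$ inside the physical-domain integral, and then applies Definition \ref{Adjoint or Dual of T}; the test factor in the resulting two-scale integral is then \emph{exactly} $\partial_{x^{\sharp}}v^{0}+\partial_{x^{1}}v^{1}$, no convergence claim about the test function is ever made, and the only limit taken is the weak one for $T\left(\frac{du^{\varepsilon}}{dx}\right)$ against this fixed smooth factor (with $T(a^{\varepsilon})=a^{0}$ exact, and the right-hand side handled the same way through $T^{\ast}(v^{0})$). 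Replacing your Inner-Product-Rule step by this $T^{\ast}$/adjoint step turns your plan into the paper's proof.
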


\begin{proof}
\noindent We choose the test functions $v^{0}\in \mathcal{C}_{0}^{\infty
}(\Omega ^{\sharp })$, $v^{1}\in \mathcal{C}_{0}^{\infty }(\Omega ^{\sharp
},C_{\sharp }^{\infty }(\Omega ^{1}))$.

\begin{itemize}
\item \textbf{Step 1} Posing $v=B(v^{0}+\varepsilon v^{1})$ in (\ref{Model1}%
) and Proposition \ref{Boundary Conditions of Bv} $\Longrightarrow $\textbf{%
\ }%
\begin{equation*}
Bv\in \mathcal{C}_{0}^{\infty }(\Omega )\text{ and }\kappa ^{0}\int_{\Omega
}a^{\varepsilon }\frac{du^{\varepsilon }}{dx}\frac{dB(v^{0}+\varepsilon
v^{1})}{dx}\text{ }dx=\kappa ^{0}\int_{\Omega }f\text{ }B(v^{0}+\varepsilon
v^{1})\text{ }dx.
\end{equation*}

\item \textbf{Step 2} Propositions \ref{Derivation Rule for B} and \ref%
{Approximation between T* by B} $\Longrightarrow $%
\begin{equation*}
\kappa ^{0}\int_{\Omega }a^{\varepsilon }\frac{du^{\varepsilon }}{dx}T^{\ast
}\left( \frac{\partial v^{0}}{\partial x^{\sharp }}+\frac{\partial v^{1}}{%
\partial x^{1}}\right) dx=\kappa ^{0}\int_{\Omega }f\text{ }T^{\ast
}(v^{0})dx+O(\varepsilon ).
\end{equation*}

\item \textbf{Step 3} Definition \ref{Adjoint or Dual of T} and Proposition %
\ref{Product Rule} $\Longrightarrow $%
\begin{equation}
\kappa ^{1}\int_{\Omega ^{\sharp }\times \Omega ^{1}}T(a^{\varepsilon })T(%
\frac{du^{\varepsilon }}{dx})\left( \frac{\partial v^{0}}{\partial x^{\sharp
}}+\frac{\partial v^{1}}{\partial x^{1}}\right) \text{ }dx^{\sharp
}dx^{1}=\kappa ^{1}\int_{\Omega ^{\sharp }\times \Omega ^{1}}T(f)\text{ }%
v^{0}\text{ }dx^{\sharp }dx^{1}+O(\varepsilon ).  \label{Model2}
\end{equation}

\item \textbf{Step 4} Definitions (\ref{T(a) and T(f)}), Lemma \ref%
{Two-scale Limit of a Derivative}, and passing to the limit when $%
\varepsilon \rightarrow 0$ $\Longrightarrow $%
\begin{equation*}
\int_{\Omega ^{\sharp }\times \Omega ^{1}}a^{0}\left( \frac{\partial u^{0}}{%
\partial x^{\sharp }}+\frac{\partial \widetilde{u}^{1}}{\partial x^{1}}%
\right) \left( \frac{\partial v^{0}}{\partial x^{\sharp }}+\frac{\partial
v^{1}}{\partial x^{1}}\right) \text{ }dx^{\sharp }dx^{1}=\int_{\Omega
^{\sharp }\times \Omega ^{1}}f^{0}v^{0}\text{ }dx^{\sharp }dx^{1}
\end{equation*}%
which is the expected result.
\end{itemize}
\end{proof}

\begin{lemma}
\label{Sixth Block}\textbf{[Sixth Block: Microscopic Problem] }$\widetilde{u}%
^{1}$ is solution to (\ref{Microscopic problem}) with $\mu =\dfrac{\partial
u^{0}}{\partial x^{\sharp }}$ and
\begin{equation*}
\frac{\partial \widetilde{u}^{1}}{\partial x^{1}}=\dfrac{\partial u^{0}}{%
\partial x^{\sharp }}\frac{\partial \theta ^{1}}{\partial x^{1}}.
\end{equation*}
\end{lemma}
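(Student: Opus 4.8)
The plan is to extract the microscopic problem directly from the two-scale weak formulation established in the Fifth Block (Lemma~\ref{Fifth Block}), by specializing the test functions so that only the $x^{1}$-variation survives, and then to read off the explicit formula from the linearity of the microscopic operator (Proposition~\ref{The linear operator associated to the Microscopic problem}).

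First I would choose $v^{0}=0$ in the two-scale weak formulation (\ref{TwoScaleModel}), keeping $v^{1}\in \mathcal{C}_{0}^{\infty }(\Omega ^{\sharp },\mathcal{C}_{\sharp }^{\infty }(\Omega ^{1}))$ arbitrary. This annihilates the macroscopic contribution and the right-hand side, leaving
\[
\int_{\Omega ^{\sharp }\times \Omega ^{1}}a^{0}\left( \frac{\partial u^{0}}{\partial x^{\sharp }}+\frac{\partial \widetilde{u}^{1}}{\partial x^{1}}\right) \frac{\partial v^{1}}{\partial x^{1}}\text{ }dx^{\sharp }dx^{1}=0.
\]
Next I would localize in $x^{\sharp }$: restricting to separated test functions $v^{1}(x^{\sharp },x^{1})=\phi (x^{\sharp })w(x^{1})$ with $\phi \in \mathcal{C}_{0}^{\infty }(\Omega ^{\sharp })$ and $w\in \mathcal{C}_{\sharp }^{\infty }(\Omega ^{1})$, and using that $a^{0}$ depends only on $x^{1}$ while $\partial _{x^{\sharp }}u^{0}$ depends only on $x^{\sharp }$ (the latter by Lemma~\ref{First Block}), the identity factors as $\int_{\Omega ^{\sharp }}\phi \, G\, dx^{\sharp }=0$ where $G(x^{\sharp })$ is the bracketed microscopic expression. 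Since $\phi $ is arbitrary, Proposition~\ref{Interpretation of a weak equality} forces $G=0$ for almost every $x^{\sharp }$, that is, for all $w\in \mathcal{C}_{\sharp }^{\infty }(\Omega ^{1})$,
\[
\int_{\Omega ^{1}}a^{0}\frac{\partial \widetilde{u}^{1}}{\partial x^{1}}\frac{\partial w}{\partial x^{1}}\text{ }dx^{1}=-\frac{\partial u^{0}}{\partial x^{\sharp }}\int_{\Omega ^{1}}a^{0}\frac{\partial w}{\partial x^{1}}\text{ }dx^{1}.
\]
This is precisely the microscopic weak formulation (\ref{Microscopic problem}) with $\mu =\partial _{x^{\sharp }}u^{0}$; and since $\widetilde{u}^{1}\in H_{\sharp }^{1}(\Omega ^{1})$ by Lemma~\ref{Third Block}, it qualifies as the solution $\theta ^{\mu }$.

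Finally, invoking the uniqueness of $\partial _{x^{1}}\theta ^{\mu }$ together with the linearity relation (\ref{microscopic linear operator}) of Proposition~\ref{The linear operator associated to the Microscopic problem}, extended to $\mu \in L^{2}(\Omega ^{\sharp })$, applied with $\mu =\partial _{x^{\sharp }}u^{0}$, I obtain
\[
\frac{\partial \widetilde{u}^{1}}{\partial x^{1}}=\frac{\partial u^{0}}{\partial x^{\sharp }}\frac{\partial \theta ^{1}}{\partial x^{1}},
\]
which is the claimed formula. The main obstacle is the localization step, namely passing from the $x^{\sharp }$-integrated identity to a microscopic problem holding pointwise in $x^{\sharp }$; this rests on the separation $a^{0}=a^{0}(x^{1})$ and $\partial _{x^{\sharp }}u^{0}=\partial _{x^{\sharp }}u^{0}(x^{\sharp })$ and on the fundamental lemma. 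Once this reduction is in place, the explicit expression follows mechanically from the already established linearity of the microscopic operator, so no further genuinely new computation is required.
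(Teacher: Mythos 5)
Your proposal is correct and follows essentially the same route as the paper's proof: both take $v^{0}=0$ and separated test functions $v^{1}=\varphi(x^{\sharp})w(x^{1})$ in the two-scale formulation (\ref{TwoScaleModel}), use Lemma~\ref{First Block}, the linearity of the integral and Proposition~\ref{Interpretation of a weak equality} to localize in $x^{\sharp}$ and obtain the microscopic problem (\ref{Microscopic problem}) with $\mu=\partial_{x^{\sharp}}u^{0}$, and then conclude via Proposition~\ref{The linear operator associated to the Microscopic problem}. Your write-up merely makes explicit the factorization/localization argument that the paper compresses into its Step 1, and adds the (harmless) remark that $\widetilde{u}^{1}$ is admissible by Lemma~\ref{Third Block}.
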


\begin{proof}
We choose $v^{0}=0$ and $v^{1}(x^{\sharp },x^{1})=w(x^{1})\varphi (x^{\sharp
})$ in (\ref{TwoScaleModel}) with $\varphi \in \mathcal{C}^{\infty }(\Omega
^{\sharp })$ and $w^{1}\in \mathcal{C}_{\sharp }^{\infty }(\Omega ^{1})$.

\begin{itemize}
\item \textbf{Step 1 }Proposition \ref{Interpretation of a weak equality},
Lemma \ref{First Block}, and the linearity of the integral $\Longrightarrow $%
\begin{equation}
\int_{\Omega ^{1}}a^{0}\frac{\partial \widetilde{u}^{1}}{\partial x^{1}}%
\frac{\partial w^{1}}{\partial x^{1}}\text{ }dx^{1}=-\frac{\partial u^{0}}{%
\partial x^{\sharp }}\int_{\Omega ^{1}}a^{0}\frac{\partial w^{1}}{\partial
x^{1}}\text{ }dx^{1}.  \label{MicroScopicProblem}
\end{equation}

\item \textbf{Step 2 }Proposition \ref{The linear operator associated to the
Microscopic problem} with $\mu =\dfrac{\partial u^{0}}{\partial x^{\sharp }}$
$\Longrightarrow $%
\begin{equation*}
\frac{\partial \widetilde{u}^{1}}{\partial x^{1}}=\dfrac{\partial u^{0}}{%
\partial x^{\sharp }}\frac{\partial \theta ^{1}}{\partial x^{1}}
\end{equation*}%
as announced.
\end{itemize}
\end{proof}

\begin{lemma}
\label{Seventh Block}\textbf{[Seventh Block: Macroscopic Problem] }$u^{0}$
is solution to (\ref{Homogenized model weak form}).
\end{lemma}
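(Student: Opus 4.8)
The plan is to obtain the macroscopic weak formulation (\ref{Homogenized model weak form}) directly from the two-scale model of Lemma \ref{Fifth Block} by eliminating the microscopic unknown $\widetilde{u}^1$ via the corrector relation of Lemma \ref{Sixth Block} and then integrating out the microscopic variable $x^1$. Since (\ref{TwoScaleModel}) is valid for every admissible pair $(v^0, v^1)$, I am free to choose $v^1$; the idea is to pick it so that the test factor $\frac{\partial v^0}{\partial x^{\sharp}} + \frac{\partial v^1}{\partial x^1}$ carries the same corrector weight $1 + \frac{\partial \theta^1}{\partial x^1}$ as the trial factor does after substitution.

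First I would rewrite the trial factor. By Lemma \ref{Sixth Block}, $\frac{\partial \widetilde{u}^1}{\partial x^1} = \frac{\partial u^0}{\partial x^{\sharp}}\frac{\partial \theta^1}{\partial x^1}$, hence
\begin{equation*}
\frac{\partial u^0}{\partial x^{\sharp}} + \frac{\partial \widetilde{u}^1}{\partial x^1} = \frac{\partial u^0}{\partial x^{\sharp}}\left(1 + \frac{\partial \theta^1}{\partial x^1}\right).
\end{equation*}
Then I would specialise the microscopic test function to $v^1(x^{\sharp}, x^1) = \theta^1(x^1)\,\frac{\partial v^0}{\partial x^{\sharp}}(x^{\sharp})$, so that symmetrically
\begin{equation*}
\frac{\partial v^0}{\partial x^{\sharp}} + \frac{\partial v^1}{\partial x^1} = \frac{\partial v^0}{\partial x^{\sharp}}\left(1 + \frac{\partial \theta^1}{\partial x^1}\right).
\end{equation*}
Inserting both expressions into (\ref{TwoScaleModel}), the left-hand integrand factorises as $a^0\left(1 + \frac{\partial \theta^1}{\partial x^1}\right)^2 \frac{\partial u^0}{\partial x^{\sharp}}\frac{\partial v^0}{\partial x^{\sharp}}$, where $\frac{\partial u^0}{\partial x^{\sharp}}$ and $\frac{\partial v^0}{\partial x^{\sharp}}$ depend only on $x^{\sharp}$ (recall $u^0$ is $x^1$-independent by Lemma \ref{First Block}) while $a^0$ and $\theta^1$ depend only on $x^1$. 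Performing the $x^1$-integration and invoking the definition (\ref{coeff homogeneises}) of $a^H$ collapses the microscopic variable and yields exactly $\int_{\Omega^{\sharp}} a^H \frac{du^0}{dx^{\sharp}}\frac{dv^0}{dx^{\sharp}}\,dx^{\sharp}$. On the right-hand side the choice of $v^1$ is irrelevant, since it involves only $v^0$; using that $f^0$ is independent of $x^1$ (from $T(f) = f^0(x^{\sharp}) + O_w(\varepsilon)$) together with the definition of $f^H$ in (\ref{coeff homogeneises}), the $x^1$-integration gives $\int_{\Omega^{\sharp}} f^H v^0\,dx^{\sharp}$. This is precisely (\ref{Homogenized model weak form}).

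The main obstacle is the admissibility of the chosen test function: Lemma \ref{Fifth Block} provides (\ref{TwoScaleModel}) only for $v^1 \in \mathcal{C}_0^{\infty}(\Omega^{\sharp}, \mathcal{C}_{\sharp}^{\infty}(\Omega^1))$, whereas the corrector $\theta^1$ belongs only to $H_{\sharp}^1(\Omega^1)$, so $v^1 = \theta^1\,\frac{\partial v^0}{\partial x^{\sharp}}$ is not smooth in $x^1$. Justifying the substitution therefore requires a density argument extending (\ref{TwoScaleModel}) from smooth microscopic test functions to $H_{\sharp}^1(\Omega^1)$-valued ones, by continuity of the bilinear form; this analytic step lies outside the equational layer of the framework. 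A route that keeps $\theta^1$ out of the test function is to take $v^1 = 0$, producing the coefficient $\int_{\Omega^1} a^0\left(1 + \frac{\partial \theta^1}{\partial x^1}\right)dx^1$ instead of $a^H$; one then identifies this with $a^H$ by testing the microscopic problem (\ref{Microscopic problem}) with $w = \theta^1$ (again up to density), using $\int_{\Omega^1} a^0\left(\frac{\partial \theta^1}{\partial x^1}\right)^2 dx^1 = -\int_{\Omega^1} a^0 \frac{\partial \theta^1}{\partial x^1}\,dx^1$. Either way the result reduces to Lemmas \ref{Fifth Block} and \ref{Sixth Block} and the definitions (\ref{coeff homogeneises}).
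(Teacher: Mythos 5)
Your proof is correct and follows essentially the same route as the paper's: substitute the corrector relation of Lemma \ref{Sixth Block} for $\partial\widetilde{u}^1/\partial x^1$, choose $v^1=\theta^1\,\partial v^0/\partial x^\sharp$ so that both factors carry the weight $1+\partial\theta^1/\partial x^1$, factorize, and integrate out $x^1$ via the definitions (\ref{coeff homogeneises}) — indeed your explicit formula for $v^1$ is what the paper's Step 1 actually requires, the paper's stated choice being a slip of notation. Your closing remark on the admissibility of $\theta^1$-valued test functions (density argument, or the $v^1=0$ alternative) is a legitimate analytic refinement that the paper deliberately omits, having declared at the start of Section \ref{Skeleton:Sec} that functional-analytic precision lies outside the scope of its symbolic framework.
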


\begin{proof}
We choose $v^{0}\in \mathcal{C}_{0}^{\infty }(\Omega ^{\sharp })$ and $v^{1}=%
\dfrac{\partial v^{0}}{\partial x^{\sharp }}\dfrac{\partial \theta ^{1}}{%
\partial x^{1}}\in \mathcal{C}_{0}^{\infty }(\Omega ^{\sharp },C_{\sharp
}^{\infty }(\Omega ^{1}))$ in (\ref{TwoScaleModel}).

\begin{itemize}
\item \textbf{Step 1} Lemma \ref{Sixth Block} $\Longrightarrow $
\begin{equation}
\int_{\Omega ^{\sharp }\times \Omega ^{1}}a^{0}\left( \frac{\partial u^{0}}{%
\partial x^{\sharp }}+\frac{\partial \theta ^{1}}{\partial x^{1}}\frac{%
\partial u^{0}}{\partial x^{\sharp }}\right) \left( \frac{\partial v^{0}}{%
\partial x^{\sharp }}+\frac{\partial \theta ^{1}}{\partial x^{1}}\frac{%
\partial v^{0}}{\partial x^{\sharp }}\right) \text{ }dx^{\sharp
}dx^{1}=\int_{\Omega ^{\sharp }\times \Omega ^{1}}f^{0}v^{0}\text{ }%
dx^{\sharp }dx^{1}\text{.}  \label{FinalModelPreviousStep}
\end{equation}

\item \textbf{Step 2} Factorizing and definitions (\ref{coeff homogeneises})
$\Longrightarrow $%
\begin{equation*}
\int_{\Omega ^{\sharp }}a^{H}\frac{\partial u^{0}}{\partial x^{\sharp }}%
\frac{\partial v^{0}}{\partial x^{\sharp }}\text{ }dx^{\sharp }=\int_{\Omega
^{\sharp }}f^{H}v^{0}\text{ }dx^{\sharp }.
\end{equation*}
\end{itemize}
\end{proof}

\section{Rewriting strategies\label{rewriting_stratgeies:sec}}

\noindent In this section we recall the rudiments of rewriting, namely, the
definitions of terms over a signature, of substitution and of rewriting
rules. We introduce a strategy language: its syntax and semantics in terms
of partial functions. This language will allow us to express most of the
useful rewriting strategies.

\subsection{Term, substitution and rewriting rule.}

\noindent We start with an example of rewriting rule. We define a set of
rewriting variables $\mathcal{X}=\{x,y\}$ and a set of function symbols $%
\Sigma =\{f,g,a,b,c\}$. A term is a combination of elements of $\mathcal{X}%
\cup \Sigma ,$ for instance $f(x)$ or $f(a)$. The rewriting rule $%
f(x)\leadsto g(x)$ applied to a term $f(a)$ is a two-step operation. First,
it consists in matching the left term $f(x)$ with the input term $f(a)$ by
matching the two occurences of the function symbol $f,$ and by matching
the rewriting variable $x$ with the function symbol $a$. Then, the result
$g(a)$ of the rewriting operation is obtained by replacing the rewriting
variable $x$ occuring in the right hand side $g(x)$ by the subterm $a$ that
have been associated to $x$.\ In case where a substitution is possible, as
in the application of $f(b)\rightarrow g(x)$ to $f(a)$, we say that the
rewriting rule fails.

\begin{definition}
\label{terms:def} Let $\Sigma $ be a countable set of function symbols, each
symbol $f\in \Sigma $ is associated with a non-negative integer $n$, its
\emph{arity} $ar(f)$ i.e. the number of arguments of $f$. Let $\mathcal{X}$
be a countable set of variables such that $\Sigma \cap \mathcal{X}=\emptyset
$. The set of terms, denoted by $\mathcal{T}(\Sigma ,\mathcal{X})$, is
inductively defined by

\begin{itemize}
\item $\mathcal{X}\subseteq \mathcal{T}(\Sigma ,\mathcal{X})$ (i.e. every
rewriting variable is a term),

\item for all $f\in \Sigma $ of arity $n$, and all $t_{1},\ldots ,t_{n}\in
\mathcal{T}(\Sigma ,\mathcal{X})$, the expression $f(t_{1},\ldots ,t_{n})\in
\mathcal{T}(\Sigma ,\mathcal{X})$ (i.e. the application of function symbols
to terms gives rise to terms).
\end{itemize}
\end{definition}

\noindent We denote by $\Sigma _{n}$ the subset of $\Sigma $ of the function
symbols of arity $n$. For instance in the example $f$ and $g$ belong to $%
\Sigma _{1}$ while $a$ and $b$ belong to $\Sigma _{0}$. Two other common
examples of terms are the expressions $Integral(\Omega ,f(x),x)$ and \textit{%
diff}$(f(x),x)$ which represent the expressions $\int_{\Omega }f(x)\,dx$ and
$\dfrac{df(x)}{dx}$. Notice that $Integral\in \Sigma _{3},$ \textit{diff}$%
\in \Sigma _{2},$ $f\in \Sigma _{1}$ and $x,\Omega \in \Sigma _{0}$. For the
sake of simplicity we often keep the symbolic mathematical notation to
express the rewriting rules. In the following we see a term as an oriented,
ranked and rooted tree as it is usual in symbolic computation. We recall
that in a ranked tree the child order is important. For instance the tree
associated to the term $Integral(\Omega ,f(x),x)$ has $Integral$ as its root
which has three children in the order $\Omega ,$ $f,$ $x$ and $f$ has one
child $x$.



\begin{definition}
A \emph{substitution} is a function $\sigma :\mathcal{X}\rightarrow \mathcal{%
T}(\Sigma ,\mathcal{X})$ such that $\sigma (x)\neq x$ for $x\in \mathcal{X}$%
. The set of variables that $\sigma $ does not map to themselves is called
the \emph{domain} of $\sigma $, i.e. $Dom(\sigma )=\{{x\in \mathcal{X}%
\;|\;\sigma (x)\neq x}\}$. If $Dom(\sigma )=\{{x_{1},\cdots ,x_{n}}\}$ then
we might write $\sigma $ as $\sigma =\{x_{1}\mapsto t_{1},\ldots
,x_{n}\mapsto t_{n}\}$ for some terms $t_{1},..,t_{n}$. Any substitution $%
\sigma $ can be extended to a mapping $\terms\rightarrow \terms$ as follows:
for $x\in \mathcal{X},$ $\hat{\sigma}(x)=\sigma (x)$, and for any
non-variable term $s=f(s_{1},\cdots ,s_{n})$, we define $\hat{\sigma}(s)=f(%
\hat{\sigma}(s_{1}),\cdots ,\hat{\sigma}(s_{n}))$. To simplify the notation
we do not distinguish between a substitution $\sigma :\mathcal{X}\rightarrow %
\terms$ and its extension $\hat{\sigma}:\terms\rightarrow \terms$.

The \emph{application} of a substitution $\sigma$ to a term $t$, denoted by $%
\sigma(t)$, simultaneously replaces all occurrences of variables in $t$ by
their $\sigma$-images.
\end{definition}

\noindent For instance, the maping $\sigma $ defined by $\sigma (x)=a$ is a
substitution and its extension $\hat{\sigma}$ maps $f(x)$ and $g(x)$ into $%
f(a)$ and $g(a)$. 

\bigskip

\noindent A \emph{rewriting rule}, is a pair $(l,r)$ where $l$ and $r$ are
terms in $\terms$; it will also be denoted by $l\leadsto r$. We observe that
for any two terms $s,t$, there exists at most one substitution $\sigma $
such that $\sigma (s)=t$. We mention that a rewriting rule stands for the
rule application at the top position. It is more useful to be able to apply
a rule at arbitrary position, and more generally to specify the way rules
are applied. For this purpose we next present a strategy language that
allows to built strategies out of basic constructors. To this end, we
introduce strategy constructor symbols $;,\leadsto ,\oplus ,\mu ,etc$ that
do not belong to $\Sigma \cup \mathcal{X}$. Informally, the constructor
\textbf{\ }$";"$ stands for the composition, $"\oplus "$ for the left
choice, $Some$ for the application of a strategy to the immediate subterms
of the input term, $\eta (x)$ for the fail as identity constructor, $%
Child(j,s)$ applies the strategy $s$ to the $j^{\text{th}}$ immediate
subterm, $X$ is a fixed-point variable, and $\mu $ is the fixed-point or the
iterator constructor, its purpose is to define recursive strategies. For
example, the strategy $\mu X.(s;X)$ stands for $s;s;\ldots $, that is, it is
the iteration of the application of $s$ until a fixed-point is reached. The
precise semantics of these constructors is given in Definition \ref%
{semantics:strategy:def}.

\begin{definition}
\label{strategy:def} (\textbf{Strategy}) Let $\mathcal{F}$ be a finite set
of fixed-point variables. A strategy is inductively defined by the following
grammar:
\begin{equation}
s::=l\leadsto r\;\;|\;\;s;s\;\;|\;\;s\oplus s\;\;|\;\;\eta
(s)\;\;|\;\;Some(s)\;\;|\;\;Child(j,s)\;\;|\;\;X\;\;|\;\;\mu X.s
\label{strategies0:grammar}
\end{equation}%
where $j\in \mathbb{N}$ and $X \in \mathcal{F}$. The set of strategies
defined from a set of rewriting rules in $\mathcal{T}(\Sigma ,\mathcal{X})\times \mathcal{T}(\Sigma ,\mathcal{X})$ is
denoted by $\mathcal{S}_{\mathcal{T}}$.
\end{definition}

\noindent We denote by $\mathbb{F}$ the failing result of a strategy and $%
\mathcal{T}^{\ast }(\Sigma ,\mathcal{X})=\terms\cup \mathbb{F}.$

\begin{definition}
\label{semantics:strategy:def} (\textbf{Semantics of a strategy}) The
semantics of a strategy is a function $[\![.]\!]:\mathcal{S}_{\terms%
}\rightarrow (\mathcal{T}^{\ast }(\Sigma ,\mathcal{X})\rightarrow \mathcal{T}%
^{\ast }(\Sigma ,\mathcal{X}))$ defined by its application to each grammar
component:

$[\![s]\!](\fail)=\fail$

$[\![l\leadsto r]\!](t)=%
\begin{cases}
\sigma (r) & \tif\sigma (l)=t \\
\fail & \text{otherwise}%
\end{cases}%
$

$[\![s_{1};s_{2}]\!](t)=[\![s_{2}]\!]([\![s_{1}]\!](t))$

$[\![s_{1}\oplus s_{2}]\!](t)=%
\begin{cases}
\lbrack \![s_{1}]\!](t) & \tif [\![s_1]\!](t)\neq \fail \\
\lbrack \![s_{2}]\!](t) & \text{otherwise}%
\end{cases}%
$

$[\![\eta (s)]\!](t)=%
\begin{cases}
t & \tif [\![s]\!](t)=\fail \\[0pt]
\lbrack \![s]\!](t) & \text{otherwise}%
\end{cases}%
$

$[\![Some(s)]\!](t)=%
\begin{cases}
\fail & \tif ar(t)=0 \\
f(\eta (s)(t_{1}),\ldots ,\eta (s)(t_{n})) & \tif t=f(t_{1},\ldots ,t_{n})%
\tand\exists i\in \lbrack 1..n]\text{ s.t. }[\![s]\!](t_{i})\neq \fail \\
\fail & \text{ otherwise}%
\end{cases}%
$

$[\![Child(j,s)]\!](t)=%
\begin{cases}
\fail\text{ }\tif ar(t)=0,\tor t=f(t_{1},\ldots ,t_{n})\tand j>n \\
f(t_{1},\ldots ,t_{j-1},[\![s]\!](t_{j}),t_{j+1},\ldots ,t_{n})\text{ }\tif %
t=f(t_{1},\ldots ,t_{n})\tand j\leq n.%
\end{cases}%
$
\end{definition}

\noindent The semantics of the fixed-point constructor is more subtle. One
would write:
\begin{equation}
\lbrack \![\mu X.s]\!]=[\![s[X/\mu X.s]]\!]  \label{fixedpoint:def}
\end{equation}%
but this equation cannot be directly used to define $[\![\mu X.s]\!]$, since
the right-hand side contains as a subphrase the phrase whose denotation we
are trying to define. Notice that the equation (\ref{fixedpoint:def})
amounts to saying that $[\![\mu X.s]\!]$ should be the least fixed-point of
the operator $F$:
\begin{equation*}
F(X)=\lambda X^{(\mathcal{T}^{\ast }(\Sigma ,\mathcal{X})\rightarrow
\mathcal{T}^{\ast }(\Sigma ,\mathcal{X}))}\;[\![s]\!]^{(\mathcal{T}^{\ast
}(\Sigma ,\mathcal{X})\rightarrow \mathcal{T}^{\ast }(\Sigma ,\mathcal{X}))}.
\end{equation*}%
Let $D=\mathcal{T}^{\ast }(\Sigma ,\mathcal{X})\rightarrow \mathcal{T}^{\ast
}(\Sigma ,\mathcal{X})$ and define $\sqsubseteq $ a partial order on $D$ as
follows:
\begin{equation*}
w\sqsubseteq w^{\prime }\text{ iff }graph(w)\subseteq graph(w^{\prime }).
\end{equation*}%
Let $\bot $ be the function of empty graph, and let
\begin{align*}
F_{0}& =\bot \\
F_{n}& =F(F_{n-1}).
\end{align*}%
One can show, using Knaster-Tarsky fixed-point theorem \cite{Tarski55}, that
$F_{\infty }$ is the least fixed-point of the operator $F$, that is
\begin{equation*}
F(w)=w\implies F_{\infty }\sqsubseteq w.
\end{equation*}%
Such fixed point equations arises very often in giving denotational
semantics to languages with recursive features, for instance the semantics
of the loop \textquotedblleft while" of the programming languages \cite[\S %
9, \S 10]{daglib:Kenn}.

\begin{example}
\label{UsualCompositeStrategies}

Out of the basic constructors of strategies given in Definition \ref%
{strategy:def}, we built up some useful strategies. The strategy $TopDown(s)
$ applies the strategy $s$ to an input term $t$ in a top down way starting
from the root, it stops when it succeeds. That is, if the strategy $s$
succeeds on some subterm $t^{\prime }$ of $t$, then it is not applied to the
proper subterms of $t^{\prime }$. The strategy $OuterMost(s)$ behaves
exactly like $TopDown(s)$ apart that if the strategy $s$ succeeds on some
subterm $t^{\prime }$ of $t$, then it is also applied to the proper subterms
of $t^{\prime }$. The strategy $BottomUp(s)$ (resp. $InnerMost(s)$) behaves
like $BottomUp(s)$ (resp. $InnerMost(s)$) but in the opposite direction,
i.e. it traverses a term $t$ starting from the leafs. The strategy $%
Normalizer(s)$ iterates the application of $s$ until a fixed-point is
reached. The formal definition of these strategies follows:
\begin{align*}
TopDown(s) &:=\mu X.(s\oplus Some(X)), \\
OuterMost(s) &:=\mu X.(s ; Some(X)), \\
BottomUp(s) &:=\mu X. (Some(X) \oplus s), \\
InnerMost(s) &:=\mu X.(Some(X); s), \\
Normalizer(s) &:=\mu X.(s;X).
\end{align*}
\end{example}


\begin{example}
\label{Ex of strategy}Let the variable set $\mathcal{X}=\{y,z,t,w\}$ and the
partition $\Sigma =\Sigma _{0}\cup \Sigma _{1}\cup \Sigma _{2}$ of the set
of function symbols with respect to their arity with $\Sigma
_{0}=\{x,x^{1},x^{2},\partial {\Omega ,\Omega ,\varepsilon }\},$ $\Sigma
_{1}=\{{u,v,n,O,B}\},$ $\Sigma _{2}=\{$derivative$\},$ $\Sigma _{3}=\{$%
Integral$\}$ with obvious definitions. We present the strategy that rewrites
the expression%
\begin{equation*}
\Psi =\int_{\partial {\Omega }}u(x)\text{ }n(x)\text{ }B(v(x^{1},x^{2}))%
\;dx-\int_{\Omega }u(x)\;\frac{d}{dx}({B(v(x^{1},x^{2})))}\text{ }%
dx+O(\varepsilon ),
\end{equation*}%
taking into account that $B(v)$ vanishes on the boundary $\partial {\Omega }$%
. This term is written under mathematical form for simplicity, but in
practice it is written from the above defined symbol of functions. Remark
that the expression $B(v(x^{1},x^{2}))$ is a function of the variable $x$
but this does not appear explicitly in this formulation. Such a case cannot
appear when the grammar for terms introduced in the next section is used. We
need the two rewriting rules%
\begin{align*}
r_{1}& :=\int_{\partial \Omega }w\text{ }dt\leadsto \int_{\partial \Omega }w
\text{ }dt, \\
r_{2}& :=B(v(z,y))\leadsto 0,
\end{align*}%
and the strategy $TopDown$ already defined. Notice that the rule $r_{1}$ has
not effect but to detect the presence of the integral over the boundary.
Finally, the desired strategy is:%
\begin{equation*}
F:=TopDown(r_{1};TopDown(r_{2})),
\end{equation*}%
and the result is
\begin{equation*}
\lbrack \![F]\!](\Psi )=\int_{\partial {\Omega }}u(x)\text{ }n(x)\text{ }%
B(0)\;dx-\int_{\Omega }u(x)\;\frac{d}{dx}({B(v(x}^{1},x^{2}){))}\text{ }%
dx+O(\varepsilon ).
\end{equation*}
\end{example}

\subsection{Rewriting modulo equational theories}

\noindent So far the semantics of strategies does not take into account the
properties of some function symbols, e.g. associativity and commutativity
equalities of "+". In particular the application of the rule $a+b\leadsto
f(a,b)$ to the term $(a+c)+b$ fails. More generally we next consider the
rewriting modulo an equational theory, i.e. a theory that is axiomatized by
a set of equalities.

\noindent For the sake of illustration, we consider the commutativity and
associativity theory of $+,$ $E=\{x+y=y+x,(x+y)+z=x+(y+z)\}$ and the rewrite
rule $f(x+y)\leadsto f(x)+f(y)$ applying the linearity rule of a function $f$%
. Its application to the term $f((a+b)+c)$ modulo $E$ yields the set of
terms $\{{f(a+b)+f(c),}$ ${f(a)+f(b+c),}$ ${f(b)+f(a+c)}\}.$ In the
following, we define part of the semantics of a strategy modulo a theory, we
use the notation $\mathcal{P}(\terms)$ to denote the set of subsets of $%
\terms$.

\begin{definition}
(\textbf{Semantics of a strategy modulo}) Let be $E$ be a finitary
equational theory, the semantics of a strategy modulo $E$ is a function ${%
[\![.]\!]}^{E}:\mathcal{S}_{\terms}\rightarrow (\mathcal{P(T}^{\ast }(\Sigma
,\mathcal{X}))\rightarrow \mathcal{P(T}^{\ast }(\Sigma ,\mathcal{X})))$ that
is partly defined by%
\begin{align*}
& {[\![s]\!]}^{E}(\{{t_{1},\ldots ,t_{n}}\})=\cup _{i=1}^{n}{[\![s]\!]}^{E}({%
t_{i}}) \\
& {[\![l\leadsto r]\!]}^{E}({t_{1}})=\cup _{j}\{{\sigma _{j}(r)}\}\tif %
E\implies \sigma _{j}(l)=t, \\
& [\![s_{1};s_{2}]\!]^{E}(t)=[\![s_{2}]\!]^{E}([\![s_{1}]\!]^{E}(t)) \\
& [\![s_{1}\oplus s_{2}]\!]^{E}(t)=%
\begin{cases}
\lbrack \![s_{1}]\!]^{E}(t) & \tif [\![s_1]\!](t)\neq \{{\fail}\} \\
\lbrack \![s_{2}]\!]^{E}(t) & \text{otherwise}%
\end{cases}
\\
& [\![\eta (s)]\!]^{E}(t)=%
\begin{cases}
\{{t}\} & \tif [\![s]\!]^{E}(t)=\{{\fail}\} \\[0pt]
\lbrack \![s]\!]^{E}(t) & \text{otherwise.}%
\end{cases}%
\end{align*}
\end{definition}

\noindent The semantics of $Some$ and $Child$ is more complex and we do not
detail it here. The semantics of the fixed-point operator is similar to the
one given in the rewriting modulo an empty theory.





\subsection{Conditional rewriting}

\noindent Rewriting with conditional rules, also known as conditional
rewriting, extends the basic rewriting with the notion of condition. A
conditional rewrite rule is a triplet:
\begin{equation*}
(l,r,c)
\end{equation*}%
where $c$ is a constraint  expressed in some logic. The semantics of the
rule application is given by
\begin{equation*}
{[\![(l,r,c)]\!]}^{E}(t)=%
\begin{cases}
\cup _{j}\{{\sigma _{j}(r)}\} & \tif \text{ the formula } \sigma_{j}(c)
\text{ can be derived from } E, \\
\fail & \text{otherwise.}%
\end{cases}%
\end{equation*}

The set of strategies defined over rewriting rules $(l,r,c)\in \mathcal{T}%
\times \mathcal{T}\times \mathcal{T}_{c}$ is denoted by $\mathcal{S}_{%
\mathcal{T},\mathcal{T}_{c}}.$

\subsection{Rewriting with memory}

Some definitions or computations require storing the history of the
transformations of some terms. To carry on, we introduce a particular
function symbol $\mathbb{M}\in \Sigma _{2}$ of arity two to represent the
memory. Intuitively the term $\mathbb{M}(t_{1},t_{2})$ represents the term $%
t_{1}$, besides the additional information that $t_{2}$ was transformed to $%
t_{1}$ at an early stage. From this consideration if follows that any
strategy applied to $\mathbb{M}(t_{1},t_{2})$ should only be applied to $%
t_{1}$. Formally, we define the semantics of strategy application taking
into account the memory as a partial function: ${[\![.]\!]}_{_{\mathbb{M}}}:%
\mathcal{S}_{\terms}\rightarrow (\mathcal{T}^{\ast }(\Sigma ,\mathcal{X}%
)\rightarrow \mathcal{T}^{\ast }(\Sigma ,\mathcal{X}))$ so that:

$[\![s]\!]_{\mathbb{M}}(t)= \mathbb{M}([\![s]\!]_{\mathbb{M}}(t_1), t_2) $
if $t=\mathbb{M}(t_1,t_2)$, and behaves like $[\![.]\!]$, otherwise. That is,

$[\![s]\!]_{\mathbb{M}}(\fail)=\fail$

$[\![l\leadsto r]\!]_{\mathbb{M}}(t)=%
\begin{cases}
\sigma (r) & \tif\sigma (l)=t \\
\fail & \text{otherwise}%
\end{cases}%
$

$[\![s_{1};s_{2}]\!]_{\mathbb{M}}(t)=[\![s_{2}]\!]_{\mathbb{M}%
}([\![s_{1}]\!]_{\mathbb{M}}(t))$

$[\![s_{1}\oplus s_{2}]\!]_{\mathbb{M}}(t)=%
\begin{cases}
\lbrack \![s_{1}]\!]_{\mathbb{M}}(t) & \tif [\![s_1]\!]_{\mathbb{M}}(t)\neq %
\fail \\
\lbrack \![s_{2}]\!]_{\mathbb{M}}(t) & \text{otherwise}%
\end{cases}%
$

\noindent etc.



\section{A Symbolic Computation Framework for Model Derivation \label{A
framework for modelderivation}}

\noindent In this section we propose a framework for the two-scale model
proofs. As in Example \ref{Ex of strategy}, the latter are formulated as
rewriting strategies. We notice that the following framework differs from
that used in Example \ref{Ex of strategy} in that it allows for the complete
representation of the data. It does not rely on external structures such as
hash tables. To this end, we define the syntax of the mathematical
expressions by means of a grammar $\EuScript{G}$.

\subsection{A Grammar for Mathematical Expressions\label{A Grammar}}

\noindent The grammar includes four rules to built terms for mathematical
functions $\EuScript{F}$, regions $\EuScript{R}$, mathematical variables $%
\EuScript{V}$, and boundary conditions $\EuScript{C}$. It involves $\Sigma
_{Reg}$, $\Sigma _{Var},$ $\Sigma _{Fun},$ $\Sigma _{Oper},$ and $\Sigma
_{Cons}$ which are sets of names of regions, variables, functions,
operators, and constants so subsets of $\Sigma _{0}$. Empty expressions in $%
\Sigma _{Reg}$ and $\Sigma _{Fun}$ are denoted by $\bot _{\EuScript{R}}$ and
$\bot _{\EuScript{F}}$. The set of usual algebraic operations $\Sigma
_{Op}=\{+,-,\times ,/,\symbol{94}\}$ is a subset of $\Sigma _{2}$. The
elements of $\Sigma _{Type}=\{$\textit{Unknown}$,$\textit{\ Test}$,$ \textit{%
Known}$,$\textit{\ }$\bot _{Type}\}\subset \Sigma _{0},$ $\bot _{Type}$
denoting the empty expression, are to specify the nature of a function,
namely an unknown function (as $u^{\varepsilon },$ $u^{0},$ $u^{1}$ in the
proof), a test function (as $v,$ $v^{0},$ $v^{1}$) in a weak formulation or
another known function (as $a^{\varepsilon },$ $f^{\varepsilon },$ $a^{0},$ $%
f^{0}$ or $n_{\Gamma ^{1}}$).\textbf{\ }The boundary conditions satisfied by
a function are specified by the elements of $\Sigma _{BC}=\{d,n,{pd,apd,t}%
\}\subset \Sigma _{0}$ to express that it satisfies Dirichlet, Neuman,
periodic, anti-periodic or transmission conditions. The grammar also involve
the symbols of functions $\mathtt{Reg}$, $\mathtt{Fun}$, $\mathtt{IndexedFun}
$, $\mathtt{IndexedReg}$, $\mathtt{IndexedVar}$, $\mathtt{Oper}$, $\mathtt{%
Var}$, and $\mathtt{BC}$ that define regions, mathematical functions,
indexed functions or regions or variables, operators, mathematical variables
and boundary conditions. The grammar reads as%
\begin{align*}
\EuScript{F}::=& \;\circledast (\EuScript{F},\EuScript{F})\;\;|\;\;d\;\;|\;\;%
\EuScript{V}\;\;\;| \\
& \mathtt{Fun}(f,[\EuScript{V},\ldots ,\EuScript{V}],[\EuScript{C},\ldots ,%
\EuScript{C}],K)\;\;| \\
& \mathtt{IndexedFun}(\EuScript{F},\EuScript{V})\;\;| \\
& \mathtt{Oper}(A,[\EuScript{F},\ldots ,\EuScript{F}],[\EuScript{V},\ldots ,%
\EuScript{V}],[\EuScript{V},\ldots ,\EuScript{V}],[d,\ldots ,d])\;\;| \\
& \bot _{\EuScript{F}}\;\;\;|\;\;\mathbb{M}(\EuScript{F},\EuScript{F}), \\
\EuScript{R}::=\;& \mathtt{Reg}(\Omega ,[d,\ldots ,d],\{{\EuScript{R},\ldots
,\EuScript{R}}\},\EuScript{R},\EuScript{F})\;\;| \\
& \mathtt{IndexedReg}(\EuScript{F},\EuScript{V})\;\;| \\
& \bot _{\EuScript{R}}\;\;|\;\;\mathbb{M}(\EuScript{R},\EuScript{R}), \\
\EuScript{V}::=\;& \mathtt{Var}(x,\EuScript{R})\;\;|\;\;\mathtt{IndexedVar}(%
\EuScript{V},\EuScript{V})\;\;|\;\;\mathbb{M}(\EuScript{V},\EuScript{V}), \\
\EuScript{C}::=\;& \mathtt{BC}(c,\EuScript{R},\EuScript{F})\;\;|\;\;\mathbb{M%
}(\EuScript{C},\EuScript{C}),
\end{align*}%
where the symbols $\Omega ,$ $d,$ $\circledast ,$ $f,$ $K,$ $A,$ $x$ and $c$
hold for any function symbols in $\Sigma _{Reg}$, $\Sigma _{Cons}$, $\Sigma
_{Op}$, $\Sigma _{Fun}$, $\Sigma _{Type}$, $\Sigma _{Oper}$, $\Sigma _{Var},$
and $\Sigma _{BC}$. The arguments of a region term are its region name, the
list of its space directions (e.g. [1,3] for a plane in the variables $%
(x_{1},x_{3}))$, the (possibly empty) set of subregions, the boundary and
the outward unit normal. Those of a function term are its function name, the
list of the mathematical variables that range over its domain, its list of
boundary conditions, and its nature. Those for an indexed region or variable
or function term are its function or variable term and its index (which
should be discrete). For an operator term these are its name, the list of
its arguments, the list of mathematical variable terms that it depends, the
list of mathematical variable terms of its co-domain (useful e.g. for $T$
when the image cannot be deduced from the initial set), and a list of
parameters. Finally, the arguments of a boundary condition term are its
type, the boundary where it applies and an imposed function if there is one.
For example, the imposed function is set to $0$ for an homogeneous Dirichlet
condition and there is no imposed function in a periodicity condition. We
shall denote by $\mathcal{T}_{\EuScript{R}}(\Sigma ,\emptyset ),$ $\mathcal{T%
}_{\EuScript{F}}(\Sigma ,\emptyset )$, $\mathcal{T}_{\EuScript{V}}(\Sigma
,\emptyset ),$ and $\mathcal{T}_{\EuScript{C}}(\Sigma ,\emptyset )$ the set
of terms generated by the grammar starting from the non-terminal $%
\EuScript{R},$ $\EuScript{F}$, $\EuScript{V},$ and $\EuScript{C}.$ The set
of all terms generated by the grammar (i.e. starting from $\EuScript{R},$ $%
\EuScript{F}$, $\EuScript{V},$ or $\EuScript{C}$) is denoted by $\mathcal{T}%
_{\EuScript{G}}(\Sigma ,\emptyset )$. Finally, we also define the set of
terms $\mathcal{T}_{\EuScript{G}}(\Sigma ,\mathcal{X})$ where each
non-terminal $\EuScript{R},$ $\EuScript{F}$, $\EuScript{V},$ and $%
\EuScript{C}$ can be replaced by a rewriting$\ $ variable in $\mathcal{X}$.
Equivalently, it can be generated by the extension of $\EuScript{G}$
obtained by adding " $|$ \ $x$" with $x\in \mathcal{X}$ in the definition of
each non-terminal term. Or, by adding $N::=x$, with $x\in \mathcal{X}$ for
each non-terminal $N$.

\begin{example}
Throughout this paper, an underlined symbol represents a shortcut whose
name corresponds to the term name. For instance,%
\begin{gather*}
\underline{\Omega }=\mathtt{Reg}(\Omega ,[2],\emptyset ,\underline{\Gamma },%
\underline{n}),\text{ where } \underline{\Gamma }=\mathtt{Reg}%
(\Gamma,[],\emptyset ,\bot _{\EuScript{R}},\bot _{\EuScript{F}})\text{, } \\
\underline{n}=\mathtt{Fun}(n,[\underline{x}^{\prime}],[],Known), \text{ }
\underline{x}^{\prime }=\mathtt{Var}(x,\underline{\Omega }^{\prime }) \text{
and } \underline{\Omega }^{\prime }=\mathtt{Reg}(\Omega ,[2],\emptyset ,%
\underline{\Gamma },\bot _{\EuScript{F}})\text{ }
\end{gather*}%
represents a region-term a one-dimensional domain named $\Omega $, oriented
in the direction $x_{2}$, with boundary $\underline{\Gamma }$ and with
outward unit normal $\underline{n}$. The shortcut $\underline{\Gamma }$ is
also for a region term representing the boundary named $\Gamma$. As it can
be understood from this example, except names all other fields can be void
terms or empty lists.
\end{example}

\begin{example}
An unknown function $u(x)$ defined on $\underline{\Omega }$ satisfying
homogeneous Dirichlet boundary condition $u(x)=0$ on $\underline{\Gamma }$
is represented by the function-term,
\begin{equation*}
\underline{u}(\underline{x})=\mathtt{Fun}(u,[\underline{x}],\mathtt{Cond}(d,%
\underline{\Gamma },0),\mathtt{Unknown})\text{ where }\quad \underline{x}=%
\mathtt{Var}(x,\underline{\Omega }).
\end{equation*}
\end{example}

\ucomment{
\begin{remark}%

We
shall
always
consider
that
two
parameterized
mathematical
objects
are
different once
they
differ by
at
least one of
their
parameters.
For
instance variable
term
$Var(x,Reg(\Omega
,1
,\{\bot_{\Omega}\},\bot_{\Omega},\bot_{Fun} ))$
and $%
Var(x,Reg(\Omega
^{\sharp
},,1
,\{\bot_{\Omega}\},\bot_{\Omega},%
\bot_{Fun}))$
correspond
to
two
different variables.
\end{remark}}

\subsection{Short-cut Terms\label{Short-cut Terms}}

\noindent For the sake of conciseness, we introduce shortcut terms that are
constantly used in the end of the paper: $\underline{\Omega }\in \mathcal{T}%
_{\EuScript{R}}(\Sigma ,\mathcal{X})$, $\underline{x}\in \mathcal{T}_{%
\EuScript{V}}(\Sigma ,\mathcal{X})$ defined in $\underline{\Omega }$, $%
\underline{I}\in \mathcal{T}_{\EuScript{R}}(\Sigma ,\mathcal{X})$ used for
(discrete) indices, $\underline{i}\in \mathcal{T}_{\EuScript{V}}(\Sigma ,%
\mathcal{X})$ used as an index defined in $\underline{I}$, $\underline{u}\in
\mathcal{T}_{\EuScript{F}}(\Sigma ,\mathcal{X})$ or $\underline{u}(%
\underline{x})\in \mathcal{T}_{\EuScript{F}}(\Sigma ,\mathcal{X})$ to
express that it depends on the variable $\underline{x}$ and $\underline{u}_{%
\underline{i}}$ the indexed-term of the function $\underline{u}$ indexed by $%
\underline{i}$. Similar definitions can be given for the other notations
used in the proof as $\underline{\Omega }^{\sharp },$ $\underline{x}^{\sharp
},$ $\underline{\Omega }^{1},$ $\underline{x}^{1},$ $\underline{\Omega
^{\prime }},$ $\underline{x^{\prime }},$ $\underline{v}(\underline{x}%
^{\sharp },\underline{x}^{1})$ etc. The operators necessary for the proof
are the integral, the derivative, the two-scale transform $T$, its adjoint $%
T^{\ast }$, and $B$. In addition, for some extensions of the reference proof
we shall use the discrete sum.

\noindent Instead of writing operator-terms as defined in the grammar, we
prefer to use the usual mathematical expressions. The table below establishes 
the correspondance between the two formulations.%
\begin{align*}
\int \underline{u}\,d\underline{x}& \equiv \mathtt{Oper}(\mathtt{Integral},%
\underline{u},[\underline{x}],[],[]), \\
\frac{\partial \underline{u}}{\partial \underline{x}}& \equiv \mathtt{Oper}(%
\mathtt{Partial},\underline{u},[\underline{x}],[\underline{x}],[]), \\
tr(\underline{u},\underline{x})(\underline{x^{\prime }})& \equiv \mathtt{Oper%
}(\text{Restriction},\underline{u},[\underline{x}],[\underline{x^{\prime }}%
],[]), \\
T(\underline{u},\underline{x})(\underline{x}^{\sharp },\underline{x}^{1})&
\equiv \mathtt{Oper}(T,\underline{u},[\underline{x}],[\underline{x}^{\sharp
},\underline{x}^{1}],[\varepsilon ]), \\
T^{\ast }(\underline{v},[\underline{x}^{\sharp },\underline{x}^{1}])(%
\underline{x})& \equiv \mathtt{Oper}(T^{\ast },\underline{v},[\underline{x}%
^{\sharp },\underline{x}^{1}],[\underline{x}],[\varepsilon ]), \\
B(\underline{v},[\underline{x}^{\sharp },\underline{x}^{1}])(\underline{x})&
\equiv \mathtt{Oper}(B,\underline{v},[\underline{x}^{\sharp },\underline{x}%
^{1}],[\underline{x}],[\varepsilon ]), \\
\sum_{\underline{i}}\underline{u}_{\underline{i}}& \equiv \mathtt{Oper}(%
\mathtt{Sum},\underline{u}_{\underline{i}},[\underline{i}],[],[]).
\end{align*}%
The multiplication and exponentiation involving two terms $f$ and $g$ are
written $fg$ and $f^{g}$ as usual in mathematics. All these conventions have
been introduced for terms in $\mathcal{T}(\Sigma ,\emptyset )$. For terms in
$\mathcal{T}(\Sigma ,X)$ as those encoutered in rewriting rules, the
rewriting variables can replace any of the above short cut terms.

\begin{example}
\label{Marker:GreenRule}The rewriting rule associated to the Green rule (\ref%
{GreenFormulas}) reads%
\begin{equation*}
\int \frac{\partial u}{\partial \underline{x}}v\text{ }d\underline{x}%
\leadsto -\int u\frac{\partial v}{\partial \underline{x}}\text{ }d\underline{%
x}+\int tr(u)\text{ }tr(v)\;n\text{ }d\underline{x^{\prime }}.
\end{equation*}%
with the short-cuts $\underline{\Gamma }=\mathtt{Reg}(\Gamma ,d1,\emptyset
,\bot _{\EuScript{R}},\bot _{\EuScript{F}})$, $\underline{\Omega }=\mathtt{%
Reg}(\Omega ,d2,\emptyset ,\underline{\Gamma },n)$, $\underline{x}=\mathtt{%
Var}(x,\underline{\Omega })$ and $\underline{x^{\prime }}=\mathtt{Var}(x,%
\underline{\Gamma })$. The other symbols $u,$ $v$, $x$, $\Omega ,$ $\Gamma ,$
$d1,$ $d2$, $n$ are rewriting variables, and for instance%
\begin{equation*}
\frac{\partial u}{\partial x}\equiv \mathtt{Oper}(\mathtt{Partial}%
,u,x,[],[]).
\end{equation*}%
Applying this rule according to an appropriate strategy, say the top down
strategy, to a term in $\mathcal{T}(\Sigma ,\emptyset )$ like%
\begin{equation*}
\Psi =\int \frac{\partial \underline{f}(\underline{z})}{\partial \underline{z%
}}\underline{g}(\underline{z})\text{ }d\underline{z},
\end{equation*}%
for a given variable term $\underline{z}$ and function terms $\underline{f},$
$\underline{g}$. As expected, the result is%
\begin{equation*}
-\int \underline{f}\text{ }\frac{\partial \underline{g}}{\partial \underline{%
z}}\text{ }d\underline{z}+\int \underline{f}\text{ }\underline{g}\;%
\underline{n}\text{ }d\underline{z^{\prime }}
\end{equation*}%
with evident notations for $\underline{n}$ and $\underline{z^{\prime }}$.
\end{example}

\subsection{A Variable Dependency Analyzer \label{A Variable Dependency
Analyzer}}

\noindent The \textit{variable dependency analyzer }$\Theta $\textit{\ }is
related to \textit{effect systems }in computer science \textcolor{blue}{%
\cite{Marino:2009:GTS}}. It is a function from $\mathcal{T}_{\EuScript{F}%
}(\Sigma ,\emptyset )$ to the set $\mathcal{P}(\mathcal{T}_{\EuScript{V}%
}(\Sigma ,\emptyset ))$ of the parts of $\mathcal{T}_{\EuScript{V}}(\Sigma
,\emptyset )$\textbf{. }When applied to a term $t\in \mathcal{T}_{%
\EuScript{F}}(\Sigma ,\emptyset )$, it returns the set of mathematical
variables on which $t$ depends. The analyzer $\Theta $ is used in the condition
part of some rewriting rules and is inductively defined by%
\begin{align*}
& \Theta (d)=\emptyset \text{ for }d\in \Sigma _{Cons}, \\
& \Theta (\underline{x})=\{\underline{x}\}\text{ for }\underline{x}\in
\mathcal{T}_{\EuScript{V}}(\Sigma ,\emptyset ), \\
& \Theta (\circledast (\underline{u},\underline{v}))=\Theta (\underline{u}%
)\cup \Theta (\underline{v})\text{ for }\underline{u},\underline{v}\in
\mathcal{T}_{\EuScript{F}}(\Sigma ,\emptyset )\text{ and }\circledast \in
\Sigma _{Op}, \\
& \Theta (\bot _{\EuScript{F}})=\emptyset \text{,} \\
& \Theta (\underline{u}(\underline{x^{1}},..,\underline{x^{n}}))=\{%
\underline{x^{1}},..,\underline{x^{n}}\}\text{ for }\underline{u}\in
\mathcal{T}_{\EuScript{F}}(\Sigma ,\emptyset )\text{ and }\underline{x^{1}}%
,..,\underline{x^{n}}\in \mathcal{T}_{\EuScript{V}}(\Sigma ,\emptyset ), \\
& \Theta (\underline{u}_{\underline{i}})=\Theta (\underline{u})\text{ for }%
\underline{u}\in \mathcal{T}_{\EuScript{V}}(\Sigma ,\emptyset )\text{ and }%
\underline{i}\in \mathcal{T}_{\EuScript{V}}(\Sigma ,\emptyset ), \\
& \Theta ([\underline{u^{1}},\dots ,\underline{u^{n}}])=\Theta (\underline{%
u^{1}})\cup \dots \cup \Theta (\underline{u^{n}})\text{ for }\underline{u^{1}%
},\dots ,\underline{u^{n}}\in \mathcal{T}_{\EuScript{F}}(\Sigma ,\emptyset ).
\end{align*}%
The definition of $\Theta $ on the operator-terms is done case by case,
\begin{align*}
& \Theta (\int \underline{u}\,d\underline{x})=\Theta (\underline{u}%
)\setminus \Theta (\underline{x}), \\
& \Theta (\frac{\partial \underline{u}}{\partial \underline{x}})=\left\{
\begin{array}{l}
\Theta (\underline{u})\text{ if }\Theta (\underline{x})\subseteq \Theta (u),
\\
\emptyset \text{ otherwise,}%
\end{array}%
\right. \\
& \Theta (tr(\underline{u},\underline{x})(\underline{x^{\prime }}))=\Theta (%
\underline{x^{\prime }}), \\
& \Theta (T(\underline{u},\underline{x})(\underline{x}^{\sharp },\underline{x%
}^{1}))=(\Theta (\underline{u})\setminus \Theta (\underline{x}))\cup \Theta
([\underline{x}^{\sharp },\underline{x}^{1}])\text{ if }\Theta (\underline{x}%
)\cap \Theta (\underline{u})\neq \emptyset , \\
& \Theta (T^{\ast }(\underline{v},[\underline{x}^{\sharp },\underline{x}%
^{1}])(\underline{x}))=(\Theta (\underline{v})\setminus \Theta ([\underline{x%
}^{\sharp },\underline{x}^{1}]))\cup \Theta (\underline{x})\text{ if }\Theta
([\underline{x}^{\sharp },\underline{x}^{1}])\cap \Theta (\underline{v})\neq
\emptyset , \\
& \Theta (B(\underline{v},[\underline{x}^{\sharp },\underline{x}^{1}])(%
\underline{x})))=(\Theta (\underline{v})\setminus \Theta ([\underline{x}%
^{\sharp },\underline{x}^{1}]))\cup \Theta (\underline{x})\text{ if }\Theta
([\underline{x}^{\sharp },\underline{x}^{1}])\cap \Theta (\underline{v})\neq
\emptyset , \\
& \Theta (\sum_{\underline{i}}\underline{u}_{\underline{i}})=\bigcup_{%
\underline{i}}\Theta (\underline{u}_{\underline{i}}).
\end{align*}%
We observe that these definitions are not very general, but they are
sufficient for the applications of this paper. To complete the definition of
$\Theta $, it remains to define it on memory terms,%
\begin{equation*}
\Theta (\mathbb{M}(\underline{u},\underline{v}))=\Theta (\underline{u}).
\end{equation*}

\begin{example}
For%
\begin{equation*}
\Psi =\int_{\underline{\Omega }^{\sharp }}[\int_{\underline{\Omega }^{1}}T(%
\underline{u}(\underline{x}),\underline{x})(\underline{x}^{\sharp },%
\underline{x}^{1})\frac{\partial \underline{v}(\underline{x}^{\sharp },%
\underline{x}^{1})}{\partial \underline{x}^{1}}d\underline{x}^{1}]d%
\underline{x}^{\sharp }\in \mathcal{T}_{\EuScript{F}}(\Sigma ,\emptyset ),
\end{equation*}%
the set $\Theta (\Psi )$ of mathematical variables on which $\Psi $ depends
is hence inductively computed as follows: $\Theta (\underline{u}(\underline{x%
}))=\{\underline{x}\}$, $\Theta (T(\underline{u}(\underline{x}),\underline{x}%
)(\underline{x}^{\sharp },\underline{x}^{1}))=\{\underline{x}^{\sharp },%
\underline{x}^{1}\}$, $\Theta (\underline{v}(\underline{x}^{\sharp },%
\underline{x}^{1}))=\{\underline{x}^{\sharp },\underline{x}^{1}\}$, $\Theta (%
\frac{\partial \underline{v}(\underline{x}^{\sharp },\underline{x}^{1})}{%
\partial \underline{x}^{1}})=\{\underline{x}^{\sharp },\underline{x}^{1}\}$,
$\Theta (T(\underline{u}(\underline{x}),\underline{x})$ $(\underline{x}%
^{\sharp },\underline{x}^{1})$ $\frac{\partial \underline{v}(\underline{x}%
^{\sharp },\underline{x}^{1})}{\partial \underline{x}^{1}})=\{\underline{x}%
^{\sharp },\underline{x}^{1}\}$, $\Theta (\int_{\underline{\Omega }^{1}}T(%
\underline{u}(\underline{x}),\underline{x})(\underline{x}^{\sharp },%
\underline{x}^{1})$ $\frac{\partial \underline{v}(\underline{x}^{\sharp },%
\underline{x}^{1})}{\partial \underline{x}^{1}}d\underline{x}^{1})=\{%
\underline{x}^{\sharp }\}$, and $\Theta (\Psi )=\emptyset $, that is, $\Psi $
is a constant function.
\end{example}

\subsection{Formulation of the Symbolic Framework for Model Derivation\label%
{Formulation of the Symbolic Framework for Model Derivation}}

\noindent Now we are ready to define the framework for two-scale model
derivation by rewriting. To do so, the rewriting rules are restricted to
left and right terms $(l,r)\in \mathcal{T}_{\EuScript{G}}(\Sigma ,\mathcal{X}%
)\times \mathcal{T}_{\EuScript{G}}(\Sigma ,\mathcal{X})$. Their conditions $%
c $ are formulas generated by a grammar, not explicited here, combining
terms in $\mathcal{T}_{\EuScript{G}}(\Sigma ,\mathcal{X})$ with the usual
logical operators in $\Lambda =\{\vee ,\wedge ,\rceil ,\in \}$. It also
involves operations with the dependency analyzer $\Theta $. The set of terms
generated by this grammar is denoted by $\mathcal{T}_{\EuScript{L}}(\Sigma ,%
\mathcal{X},\EuScript{G},\Theta ,\Lambda ).$

\bigskip

It remains to argue that, given a strategy $s$ in $\mathcal{S}_{\mathcal{T}_{%
\EuScript{G}}(\Sigma ,\mathcal{X}),\mathcal{T}_{\EuScript{L}}(\Sigma ,%
\mathcal{X},\EuScript{G},\Theta ,\Lambda )}$, the set of terms $\mathcal{T}_{%
\EuScript{G}}(\Sigma,\emptyset)$ is closed under the application of $s$. It
is sufficient to show that for each rewriting $r$ rule in $s$, the
application of $r$ to any term $t \in \mathcal{T}_{\EuScript{G}%
}(\Sigma,\emptyset)$ at any position yields a term in $\mathcal{T}_{%
\EuScript{G}}(\Sigma,\emptyset)$. As an example, $\mathcal{T}_{\EuScript{G}%
}(\Sigma,\emptyset)$ is not closed under the application of the rule $%
x\leadsto \underline{\Omega}$, where $x$ is a variable. But it is closed
under the application of the linearity rule $\int_{z} f+g\, dx \leadsto
\int_{z} f \, dx + \int_{z} g \, dx$ at any position, where $f,g,x,z$ are
rewriting variables. The argument is, since $\int_{z} f+g\, dx \in \mathcal{T%
}_{\EuScript{F}}(\Sigma,\emptyset)$, then $f+g \in \mathcal{T}_{\EuScript{F}%
}(\Sigma,\emptyset)$, and hence $f, g \in \mathcal{T}_{\EuScript{F}%
}(\Sigma,\emptyset)$. Thus, $\int_{z} f \, dx + \int_{z} g \, dx \in
\mathcal{T}_{\EuScript{F}}(\Sigma,\emptyset)$. That is, a term in $\mathcal{T%
}_{\EuScript{F}}(\Sigma,\emptyset)$ is replaced by a another term in $%
\mathcal{T}_{\EuScript{F}}(\Sigma,\emptyset)$. A more general setting that
deals with the closure of regular languages under specific rewriting
strategies can be found in \cite{Gascon:2009}.

\noindent A model derivation is divided into several intermediary lemmas.
Each of them is intended to produce a new property that can be expressed as
one or few rewriting rules to be applied in another part of the derivation.
Since dynamical creation of rules is not allowed, a strategy is covering one
lemma only and is operating with a fixed set of rewriting rules. The
conversion of a result of a strategy to a new set of rewriting rules is done
by an elementary external operation that is not a limitation for
generalizations of proofs. The following definition summarizes the framework
of symbolic computation developed in this paper.

\begin{definition}
The components of the quintuplet $\Xi =\langle \Sigma ,\mathcal{X},E,%
\EuScript{G},\Theta \rangle $ provide a framework for symbolic computation
to derive multi-scale models. A two-scale model derivation is expressed as a
strategy $\pi \in \mathcal{S}_{\mathcal{T}_{\EuScript{G}}(\Sigma ,\mathcal{X}%
),\mathcal{T}_{\EuScript{L}}(\Sigma ,\mathcal{X},\EuScript{G},\Theta
,\Lambda )}$ for which the semantics ${[\![\pi ]\!]}^{E}$ is applicable to
an initial expression $\Psi \in \mathcal{T}(\Sigma ,\emptyset )$.
\end{definition}

\noindent In the end of this section we argue that this framework is in the
same time relatively simple, it covers the \textit{reference model}
derivation and it allows for the extensions presented in the next section.

\bigskip

\noindent The grammar of terms is designed to cover all mathematical
expressions occuring in the proof of the \textit{reference model }as well as
of their generalizations. A term produced by the grammar includes locally
all useful information. This avoids the use of external tables and
facilitates design of rewriting rules, in particular to take into account
the context of subterms to be transformed. It allows also for local
definitions, for instance a same name of variable $x$ can be used in
different parts of a same term with different meaning, which is useful for
instance in integrals.\ A limitation regarding generalizations presented in
the next section, is that the grammar must cover by anticipation all needed
features. This drawback should be fixed in another work by supporting
generalization of grammars in the same time as generalization of proofs.

\bigskip

\noindent Each step in the proof consists in replacing parts of an
expression according to a known mathematical property. This is well done,
possibly recursively, using rewriting rules together with strategies
allowing for precise localization. Some steps need simplifications and often
use the second linearity rule of a linear operator, $A(\lambda u)=\lambda Au$
when $\lambda $ is a scalar (or is independent of the variables in the
initial set of $A$). So variable dependency of each subterm should be
determined, this is precisely what $\Theta $, the \textit{variable
dependency analyzer},\textit{\ }is producing. The other simplifications do
not require the use of $\Theta $. In addition to the grammar $\EuScript{G},$
the analyzer $\Theta $ must be upgraded in view of each new extension.

\bigskip

\noindent In all symbolic computation based on the grammar $\EuScript{G}$,
it is implicitely assumed that the derivatives, the integrals and the traces
(i.e. restriction of a function to the boundary) are well defined since the
regularity of functions is not encoded.

\bigskip

\noindent Due to the algebraic nature of the mathematical proofs, this
framework has been formulated by considering these proofs as a calculus
rather than formal proofs that can be formalized and checked with a proof
assistant \cite{L:BC04,Wong96aproof}. Indeed, this is far simpler 
 and allows, from a very small set of tools,
for building significant mathematical derivation. To cover broader proofs,
the framework must be changed by extending the grammar and the variable
dependency analyzer only. Yet, the language Tom \cite{BBK_Tom07} does not
provide a complete environment for the implementation of our framework since
it does not support the transformation of rewriting rules, despite it
provides a rich strategy language and a module for the specification of the
grammar.

\section{Transformation of Strategies as Second Order Strategies \label%
{Transformation of Strategies as Second Order Strategies}}

\noindent For a given rewriting strategy representing a model proof, one
would like to transform it to obtain a derivation of more complex models.
Transforming a strategy $\pi \in \mathcal{S}_{\terms}$ is achieved by
applying strategies to the strategy $\pi $ itself. For this purpose, we
consider two levels of strategies: the first order ones $\mathcal{S}_{\terms%
} $ as defined in Definition \ref{strategy:def}, and the strategies of
second order in such a way that second order strategies can be applied to
first order ones. That is, the second order strategies are considered as
terms in a set $\mathcal{T}(\overline{\Sigma },\overline{\mathcal{X}})$ of
terms where $\overline{\Sigma }$ and $\overline{\mathcal{X}}$ remain to be
defined. Given a set of strategies $\mathcal{S}_{\terms}$ that comes with a
set of fixed-point variables $\mathcal{F}$, we pose $\overline{\Sigma }%
\supset \Sigma \cup \{{\leadsto ,;,\oplus ,Some,Child,\eta, \mu }\}\cup
\mathcal{F} $. Let $\overline{\mathcal{X}}$ be a set of second order
rewriting variables such that $\overline{\mathcal{X}}\cap (\mathcal{X}\cup
\overline{\Sigma })=\emptyset $. Notice that first order rewriting variables
and fixed-point variables are considered as constants in $\mathcal{T}(%
\overline{\Sigma },\overline{\mathcal{X}})$, i.e. function symbols in $%
\overline{\Sigma }_{0}$. Notice also that the arity of the function symbols $%
\leadsto ,{;,}\oplus ,Child,\mu $ is two, and the arity of $Some$ and $\eta$
is one. In particular, the rule $l\leadsto r$ can be viewed as the term $%
\leadsto (l,r)$ with the symbol $\leadsto $ at the root, and the strategy $%
\mu X.s$ viewed as the term $\mu (X,s)$. This allows us to define second
order strategies $\overline{\mathcal{S}}_{\mathcal{T}(\overline{\Sigma },%
\overline{\mathcal{X}})}$ by the grammar
\begin{equation}
\bar{s}::=l\bar{\leadsto}r\;\;|\;\;\bar{s}\bar{;}\bar{s}\;\;|\;\;\bar{s}\bar{%
\oplus}\bar{s}\;\;|\;\;\bar{\eta}(\bar{s})\;\;|\;\;\overline{Some}(\bar{s}%
)\;\;|\;\;\overline{Child}(j,\bar{s})\;\;|\;\;X\;\;|\;\;\bar{\mu}X.\bar{s}
\label{strategies1:grammar}
\end{equation}%
Again we assume that the symbols $\bar{\leadsto},\overline{;},\overline{%
\oplus },\ldots $ of the second order strategies do no belong to $\overline{%
\Sigma }$. The semantics of the strategies in $\overline{\mathcal{S}}_{%
\mathcal{T}(\overline{\Sigma },\overline{\mathcal{X}})}$ are similar to the
semantics of first order strategies. In addition, we assume that second
order strategies transform first order strategies, to which they are
applied, into first order strategies. Composing several second order
strategies and applying such composition to a given first order strategy $s$
provide successive transformations of $s$.

\floatstyle{plain}\restylefloat{figure}
\begin{figure}[h]
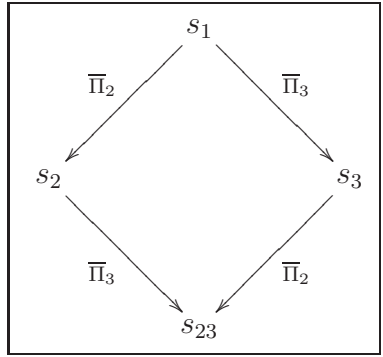

\begin{equation*}
\fbox{ \xygraph{ !{<0cm,0cm>;<2cm,0cm>:<0cm,1.5cm>} []*+{{s_1}}="S1"
[dl]*+{s_2}="S2" [rr]*+{s_3}="S3" [dl]*+{s_{23}}="S23"
"S1":"S2"_{\overline{\Pi}_2} "S1":"S3"^{\overline{\Pi}_3}
"S2":"S23"_{\overline{\Pi}_3} "S3":"S23"^{\overline{\Pi}_2} }}
\end{equation*}%
\caption{An example of the composition of transformations of strategies.}
\label{extensions:fig:1}
\end{figure}

\noindent In the following example we illustrate the extension of an
elementary strategy which is a rewriting rule.

\begin{example}
\label{extensionStrategy}For the set $\mathcal{X}=\{{i,j,x,x^{\sharp
},x^{1},u,\varepsilon }\}$ we define $s_{1},$ $s_{2},$ $s_{3},$ and $s_{23}$
four rewriting rules,
\begin{align*}
s_{1}& :=T(\frac{\partial u}{\partial x},x){(x^{\sharp },x^{1})}\leadsto
\frac{1}{\varepsilon }\frac{\partial T(u,x){(x^{\sharp },x^{1})}}{\partial {%
x^{1}}}\text{ for }x\in \Omega \text{ and }{(x^{\sharp },x^{1})}\in \Omega {%
^{\sharp }}\times \Omega {^{1},} \\
s_{2}& :=T(\frac{\partial u}{\partial x_{i}},x){(x^{\sharp },x^{1})}\leadsto
\frac{1}{\varepsilon }\frac{\partial T(u,x){(x^{\sharp },x^{1})}}{\partial {%
x_{i}^{1}}}\text{ for }x\in \Omega \text{ and }{(x^{\sharp },x^{1})}\in
\Omega {^{\sharp }}\times \Omega {^{1},} \\
s_{3}& :=T(\frac{\partial u}{\partial x},x){(x^{\sharp },x^{1})}\leadsto
\frac{1}{\varepsilon }\frac{\partial T(u,x){(x^{\sharp },x^{1})}}{\partial {%
x^{1}}}\text{ for }x\in \Omega _{j}\text{ and }{(x^{\sharp },x^{1})}\in
\Omega _{j}^{\sharp }\times \Omega _{j}^{1}{,} \\
s_{23}& :=T(\frac{\partial u}{\partial x_{i}},x){(x^{\sharp },x^{1})}%
\leadsto \frac{1}{\varepsilon }\frac{\partial T(u,x){(x^{\sharp },x^{1})}}{%
\partial {x_{i}^{1}}}\text{ for }x\in \Omega _{j}\text{ and }{(x^{\sharp
},x^{1})}\in \Omega _{j}^{\sharp }\times \Omega _{j}^{1}.
\end{align*}%
\noindent The rule $s_{1}$ is encountered in the reference proof, $s_{2}$ is
a (trivial) generalization of $s_{1}$ in the sense that it applies to
multi-dimensional regions $\Omega {^{1}}$ referenced by a set of variables $%
(x_{i}^{1})_{i}$, and $s_{3}$ is a second (trivial) generalization of $s_{1}$
on the number of sub-regions $(\Omega _{j})_{j},$ $(\Omega {_{j}^{\sharp }}%
)_{j}$ and $(\Omega _{j}^{1})_{j}$ in $\Omega $, $\Omega {^{\sharp }}$ and $%
\Omega {^{1}.}$ The rule $s_{23}$ is a generalization combining the two
previous generalizations. First, we aim at transforming the strategy $s_{1}$
into the strategy $s_{2}$ or the strategy $s_{3}$. To this end, we introduce
two second order strategies with $\overline{\mathcal{X}}=\{v,z\}$ and $%
\overline{\Sigma }\supset \{i,$ $j,$ $\Omega ,$ $\Omega ^{\sharp },$ $\Omega
^{1},$ $Partial,IndexedFun, IndexedVar, IndexedReg\},$%
\begin{align*}
\bar{\Pi}_{1}& :=\overline{OuterMost}(\frac{\partial v}{\partial z}\bar{%
\leadsto}\frac{\partial v}{\partial z_{i}}) \\
\bar{\Pi}_{2}& :=\overline{OuterMost}(\Omega \bar{\leadsto}\Omega _{j});%
\overline{OuterMost}(\Omega {^{\sharp }}\bar{\leadsto}\Omega _{j}^{\sharp });%
\overline{OuterMost}(\Omega {^{1}}\bar{\leadsto}\Omega _{j}^{1})
\end{align*}%
Notice that $\bar{\Pi}_{1}$ (resp. $\bar{\Pi}_{2}$) applies the rule $\dfrac{%
\partial v}{\partial z}\bar{\leadsto}\dfrac{\partial v}{\partial z_{i}}$
(resp. $\Omega \bar{\leadsto}\Omega _{j},$ $\Omega {^{\sharp }}\bar{\leadsto}%
\Omega _{j}^{\sharp }$, and $\Omega {^{1}}\bar{\leadsto}\Omega _{j}^{1}$) at
all of the positions \footnote{%
Notice the difference with $\overline{TopDown}$ which could not apply these
rules at any position.} of the input first order strategy so that
\begin{equation*}
\bar{\Pi}_{1}(s_{1})=s_{2}\text{ and }\bar{\Pi}_{2}(s_{1})=s_{3}.
\end{equation*}%
Once $\bar{\Pi}_{1}$ and $\bar{\Pi}_{2}$ have been defined, they can be
composed to produce $s_{23}:$%
\begin{equation*}
\bar{\Pi}_{2}\bar{\Pi}_{1}(s_{1})=s_{23}\text{ or }\bar{\Pi}_{1}\bar{\Pi}%
_{2}(s_{1})=s_{23}.
\end{equation*}%
The diagram of Figure 1 illustrates the application of $\bar{\Pi}_{1},$ $%
\bar{\Pi}_{2}$ and of their compositions.
\end{example}

The next example shows how an extension can not only change rewriting rules
but also to add new ones.

\begin{example}
To operate simplifications in the reference model, we use the strategy%
\begin{equation*}
s_{1}:=TopDown(\frac{\partial x}{\partial x}\leadsto 1).
\end{equation*}%
In the generalization to multi-dimensional regions, it is replaced by two
strategies involving the Kronecker symbol $\delta $, usually defined as $%
\delta (i,j)=1$ if $i=j$ and $\delta (i,j)=0$ otherwise,%
\begin{eqnarray*}
s_{2} &:&=TopDown\left( \frac{\partial x_{i}}{\partial y_{j}}\leadsto \delta
(i,j),\;x=y\right) , \\
s_{3} &:&=TopDown\left( \delta (i,j)\leadsto 1,\;i=j\right) , \\
s_{4} &:&=TopDown\left( \delta (i,j)\leadsto 0,\;i\neq j\right) .
\end{eqnarray*}%
The second order strategy that transforms $s_{1}$ into the strategy $%
Normalizer(s_{2}\oplus s_{3}\oplus s_{4})$ is
\begin{equation*}
\bar{\Pi}:=\overline{TopDown}(s_{1}\bar{\leadsto}s_{2}\oplus s_{3}\oplus
s_{4}).
\end{equation*}
\end{example}

\newpage

\section{Implementation and Experiments\label{Implementation and Experiments}%
}

\noindent The framework presented in Section \ref{Formulation of the
Symbolic Framework for Model Derivation} has been implemented in Maple$^{%
\registered }$. The implementation includes, the language \textit{Symbtrans}
of strategies already presented in \cite{BGL-JSC10}. The derivation of the
reference model presented in Section \ref{Skeleton:Sec}\textit{\ }has been
fully implemented. It starts from an input term which is the weak
formulation (\ref{Model1}) of the physical problem,%
\begin{equation}
\int \underline{a}\frac{\partial \underline{u}}{\partial \underline{x}}\frac{%
\partial \underline{v}}{\partial \underline{x}}\text{ }d\underline{x}=\int
\underline{f}\text{ }\underline{v}\text{ }d\underline{x},
\label{Skeleton Input}
\end{equation}%
where $\underline{a}=\mathtt{Fun}(a,[\underline{\Omega }],[$ $],Known),$ $%
\underline{u}=\mathtt{Fun}(u,[\underline{\Omega }],[\underline{Dirichlet}%
],Unknown),$ $\underline{v}=\mathtt{Fun}(u,[\underline{\Omega }],[\underline{%
Dirichlet}],Test),$ $\underline{\Omega }=\mathtt{Reg}(\Omega ,[1],\emptyset ,%
\underline{\Gamma },n_{\Omega })$, $\underline{\Gamma }=\mathtt{Reg}(\Gamma
,[$ $],\emptyset ,$ $\bot _{\EuScript{R}},$ $\bot _{\EuScript{F}}),$ $%
\underline{Dirichlet}=\mathtt{BC}(Dirichlet,\underline{\Gamma },0)$ and
where the short-cuts of the operators are those of Section \ref{Short-cut
Terms}. The information regarding the two-scale transformation is provided
through the test functions. For instance, in the first block the proof
starts with the expression%
\begin{equation*}
\Psi =\int \frac{\partial \underline{u}}{\partial \underline{x}}B(\underline{%
v}(\underline{x^{\sharp }},\underline{x^{1}})(\underline{x})\text{ }d%
\underline{x},
\end{equation*}%
where the test function $B(\underline{v}(\underline{x^{\sharp }},\underline{%
x^{1}})(\underline{x})$ is also an input, with $\underline{v}=\mathtt{Fun}%
(a,[\underline{x^{\sharp }},\underline{x^{1}}],[\underline{Dirichlet\sharp }%
],Test),$ $\underline{x^{\sharp }}=\mathtt{Var}(x^{\sharp },\underline{%
\Omega ^{\sharp }}),$ $\underline{x^{1}}=\mathtt{Var}(x^{1},\underline{%
\Omega ^{1}}),$ $\underline{\Omega }^{\sharp }=\mathtt{Reg}(\Omega ^{\sharp
},[1],\emptyset ,\underline{\Gamma ^{\sharp }},n_{\Omega ^{\sharp }}),$ $%
\underline{\Gamma ^{\sharp }}=\mathtt{Reg}(\Gamma ^{\sharp },[$ $],\emptyset
,\bot _{\EuScript{R}},\bot _{\EuScript{F}}),$ $\underline{\Omega }^{1}=%
\mathtt{Reg}(\Omega ^{1},[1],$ $\emptyset ,$ $\underline{\Gamma ^{1}}%
,n_{\Omega ^{1}}),$ $\underline{\Gamma ^{1}}=\mathtt{Reg}(\Gamma ^{1},[$ $%
],\emptyset ,\bot _{\EuScript{R}},\bot _{\EuScript{F}})$, and $\underline{%
Dirichlet\sharp }=\mathtt{BC}(Dirichlet\sharp ,\underline{\Gamma ^{\sharp }}%
,0).$

\bigskip

\noindent The proof is divided into five strategies corresponding to the
five blocks of the proof, each ending by some results transformed into
rewriting rules used in the following blocks. The rewriting rules used in
the strategies are FO-rules and can be classified into the three categories.

\begin{itemize}
\item \textit{Usual mathematical rules:} that represents the properties of
the derivation and integration operators, such as the linearity, the chain
rule, the Green rule, etc,

\item \textit{Specialized rules:} for the properties of the two-scale
calculus, as those of the two-scale transform, the approximation of $B$ by
the adjoint $T^{\ast }$ etc,

\item \textit{Auxiliary tools:} for transformations of expressions format
that are not related to operator properties such as the rule which
transforms $\psi _{1}=\psi _{2}$ into $\psi _{1}-\psi _{2}=0$.
\end{itemize}
\begin{table}[!h]
\begin{center}
\begin{tabular}{|l|c|c|c|}
\hline
& Usual Rules & Specialized Rules & Aux. Tools \\ \hline
Skeleton & 53 & 14 & 28 \\ \hline
\end{tabular}%
\end{center}
\caption{The number of first order rules used in the reference model.}
\label{stat:table1:fig}
\end{table}

\noindent The Table \ref{stat:table1:fig} summarizes the number of first
order (FO) rules, used in the reference model, by categories.

\bigskip

\noindent The reference model has been extended to cover three different
kinds of configurations. To proceed to an extension, the new model
derivation is established in a form that is as close as possible of the
\textit{reference proof. }The grammar and the dependency analyzer should be
completed. Then, the initial data is determined, and second order (SO)
strategies yielding the generalized model derivation are found and
optimized. As it has been already mentioned, $\EuScript{G}$ and $\Theta $
have already been designed to cover the three extensions.

\bigskip

\noindent The first generalization is to cover multi-dimensional regions,
i.e. $\Omega \subset \mathbb{R}^{n}$ with $n\geq 1$. When $n=2,$ the initial
term is

\begin{equation*}
\sum_{\underline{i}=1}^{n}\sum_{\underline{j}=1}^{n}\int \underline{a}_{%
\underline{i}\underline{j}}\frac{\partial \underline{u}}{\partial \underline{%
x}_{\underline{i}}}\frac{\partial \underline{v}}{\partial \underline{x}_{%
\underline{j}}}\text{ }d\underline{x}=\int \underline{f}\text{ }\underline{v}%
\text{ }d\underline{x},
\end{equation*}%
where $\underline{\Omega }=\mathtt{Reg}(\Omega ,[1,2],\emptyset ,\underline{%
\Gamma },n_{\Omega }),$ $\underline{a}_{\underline{i}\underline{j}}=\mathtt{%
Indexed}(\mathtt{Indexed}(\underline{a},\underline{j}),\underline{i}),$ $%
\underline{i}=\mathtt{Var}(i,\underline{I}),$ $\underline{I}=\mathtt{Reg}%
(I,[1,2],\emptyset ,\bot _{\EuScript{R}},\bot _{\EuScript{F}})$ and the
choice of the test function is trivially deduced. Then, the model derivation
is very similar to this of the \textit{reference model}, see \cite{LenSmi07}%
, so much so it is obtained simply by applying the SO strategy $\bar{\Pi}%
_{1} $defined in Example \ref{extensionStrategy}. This extension has been
tested on the four first blocks.

\bigskip

\noindent The second generalization transforms the \textit{reference model}
into a model with several adjacent one-dimensional regions (or intervals) $%
(\Omega _{k})_{k=1,..,m}$ so that $\Omega $ is still an interval i.e. $%
\Omega \subset \mathbb{R}$. For $m=2$, the initial term is the same as (\ref%
{Skeleton Input}) but with $\underline{\Omega }=\mathtt{Reg}(\Omega ,[1],$ $%
\{\underline{\Omega _{1}},\underline{\Omega _{2}}\},$ $\underline{\Gamma }%
,n_{\Omega })$, $\underline{\Omega _{1}}=\mathtt{Reg}(\Omega _{1},[1],$ $%
\emptyset ,$ $\underline{\Gamma _{1}},n_{\Omega _{1}}),$ and $\underline{%
\Omega _{2}}=\mathtt{Reg}(\Omega _{2},[1],$ $\emptyset ,$ $\underline{\Gamma
_{2}},n_{\Omega _{2}})$. The two-scale geometries, all variables, all kind
of functions and also the operators $B$ and $T$ are defined subregion by
subregion. All definitions and properties apply for each subregion, and the
proof steps are the same after spliting the integrals over the complete
region $\Omega $ into integrals over the subregions. The only major change
is in the fourth step where the equality $u_{1}^{0}=u_{2}^{0}$ at the
interface between $\Omega _{1}$ and $\Omega _{2}$ which is encoded as
transmission conditions in the boundary conditions of $u_{1}^{0}$ and $%
u_{2}^{0}.$

\bigskip

\noindent The third extension transforms the multi-dimensional model
obtained from the first generalization to a model related to thin
cylindrical regions, in the sense that the dimension of $\Omega $ is in the
order of $\varepsilon $ in some directions $i\in I^{\natural }$ and of the
order $1$ in the others $i\in I^{\sharp }$ e.g. $\Omega =(0,1)\times
(0,\varepsilon )$ where $I^{\natural }=\{2\}$ and $I^{\sharp }=\{1\}.$ The
boundary $\Gamma $ is split in two parts, the lateral part $\Gamma _{lat}$
and the other parts $\Gamma _{other}$ where the Dirichlet boundary
conditions are replaced by homogeneous Neuman boundary conditions i.e. $%
\frac{du^{\varepsilon }}{dx}=0$. In this special case the integrals of the
initial term are over a region which size is of the order of $\varepsilon $
so it is required to multiply each side of the equality by the factor $%
1/\varepsilon $ to work with expressions of the order of $1$. Moreover, the
macroscopic region differs from $\Omega $, it is equal to $\Omega ^{\sharp
}=(0,1)$ when the microscopic region remains unchanged. In general, the
definition of the adjoint $T^{\ast }$ is unchanged but $(Bv)(x)=v((x_{i})_{i%
\in I^{\sharp }},(x-x_{c}^{\sharp })/\varepsilon )$ where $x_{c}^{\sharp }$
is the center of the $c^{th}$ cell in $\Omega ^{\sharp }$. It follows that
the approximations (\ref{Inversion_Formula_1D}, \ref%
{First_Order_Approx_of_Bar}) are between $T^{\ast }$ and $\varepsilon B$
with $\sum_{i\in I^{\sharp }}x_{i}^{1}\frac{\partial v}{\partial
x_{i}^{\sharp }}$ instead of $\sum_{i=1}^{n}x_{i}^{1}\frac{\partial v}{%
\partial x_{i}^{\sharp }}$. With these main changes in the definitions and
the preliminary properties, the proof steps may be kept unchanged.

\begin{table}[!h]
\begin{center}
\begin{tabular}{|l|c|c|c|}
\hline
& Usual Rules & Specialized Rules & Aux. Tools \\ \hline
Multi-Dimension & 6 & 0 & 4 \\ \hline
Thin-Region & 2 & 0 & 0 \\ \hline
Multi-Region & 3 & 0 & 0 \\ \hline
\end{tabular}%
\end{center}
\caption{The number of first order rules used in the three extensions.}
\label{stat:table1-1:fig}
\end{table}

\bigskip

\noindent The mathematical formulation of the second and third extensions
has been derived. This allows for the determination of the necessary
SO-strategies, but they have not been implemented nor tested. To summarize
the results about the principle of extension of strategies, we show its
benefit through some statistics. In particular the main concerned is the
reusability and the extensibility of existing strategies. The Table 2 shows
an estimate of the number of new FO-rules for the three extensions in each
category and for the first four blocks.

\begin{table}[!h]
\begin{center}
\begin{tabular}{|l|c|c|c|}
\hline
& Usual Rules & Specialized Rules & Aux. Tools \\ \hline
Multi-Dimension & 9 & 2 & 3 \\ \hline
Thin-Region & 0 & 0 & 0 \\ \hline
Multi-Region & 1 & 0 & 0 \\ \hline
\end{tabular}%
\end{center}
\caption{The number of second order strategies used in the extension of
proofs. }
\label{stat:table2:fig}
\end{table}

\begin{table}[!h]
\par
\begin{center}
\begin{tabular}{|c|c|c|c|}
\hline
Input model & Resulting model & \% Modi. FO-rules & \% Modi. FO-strategies
\\ \hline
Reference & Multi-Dim. & 16.6\% & 5\% \\ \hline Multi-Dim. & Thin & 0 & 0 \\
\hline Thin & Multi-Reg. & 0 &  2.5\% \\ \hline
\end{tabular}%
\end{center}
\caption{The ratio of modified FO-rules and FO-strategies.}
\label{stat:table3:fig}
\end{table}

\noindent The Table 3 shows the number of SO-strategies used in each
extension. Finally, the Table 4 shows, the ratio of the modified FO-rules
and the ratio of the modified FO-strategies. The reusability ratio is high
since most of the FO-strategies defined in the skeleton model are reused.
Besides very little number of SO-strategies is used in the extensions. This
systematic way of the generation of proofs is a promising path that will be
further validated within more complex configurations for which the proofs
can not obtained by hand. In the future, we plan to introduce dedicated
tools to aid in the design of composition of several extensions.

\newpage

\newcommand{\etalchar}[1]{$^{#1}$}


\begin{thebibliography}{CKLW03}

\bibitem[ADH90]{ArbDou}
Todd Arbogast, Jim Douglas, Jr., and Ulrich Hornung.
\newblock Derivation of the double porosity model of single phase flow via
  homogenization theory.
\newblock {\em SIAM J. Math. Anal.}, 21:823--836, May 1990.

\bibitem[BB02]{BouBel}
G.~Bouchitte and M.~Bellieud.
\newblock Homogenization of a soft elastic material reinforced by fibers.
\newblock {\em Asymptotic Analysis}, 32(2):153, 2002.

\bibitem[BBK{\etalchar{+}}07]{BBK_Tom07}
Emilie Balland, Paul Brauner, Radu Kopetz, Pierre-Etienne Moreau, and Antoine
  Reilles.
\newblock {Tom: Piggybacking rewriting on Java}.
\newblock In {\em the proceedings of the 18th International Conference on
  Rewriting Techniques and Applications RTA 07}, pages 36--47, 2007.

\bibitem[BC04]{L:BC04}
Yves Bertot and Pierre Cast\'eran.
\newblock {\em Interactive Theorem Proving and Program Development. Coq'Art:
  The Calculus of Inductive Constructions}.
\newblock Texts in Theoretical Computer Science. Springer Verlag, 2004.

\bibitem[BGL]{BGL-JSC10}
W.~Belkhir, A.~Giorgetti, and M.~Lenczner.
\newblock A symbolic transformation language and its application to a
  multiscale method.
\newblock Submitted. December 2010, http://arxiv.org/abs/1101.3218v1.

\bibitem[BKKR01]{BKK+99}
{P}eter {B}orovansky, {C}laude {K}irchner, {H}{\'e}l{\`e}ne {K}irchner, and
  {C}hristophe {R}ingeissen.
\newblock {R}ewriting with strategies in {ELAN}: a functional semantics.
\newblock {\em {I}nternational {J}ournal of {F}oundations of {C}omputer
  {S}cience}, 12(1):69--95, 2001.

\bibitem[BLM96]{BouLuc}
Alain Bourgeat, Stephan Luckhaus, and Andro Mikelic.
\newblock Convergence of the homogenization process for a double-porosity model
  of immiscible two-phase flow.
\newblock {\em SIAM J. Math. Anal.}, 27:1520--1543, November 1996.

\bibitem[BLP78]{BenLio}
A.~Bensoussan, J.L. Lions, and G.~Papanicolaou.
\newblock {\em Asymptotic Methods for Periodic Structures}.
\newblock North-Holland, 1978.

\bibitem[BN98]{BaaNip}
F.~Baader and T.~Nipkow.
\newblock {\em Term rewriting and all that.}
\newblock Cambridge University Press, 1998.

\bibitem[CD99]{CioDon}
D.~Cioranescu and P.~Donato.
\newblock {\em An introduction to homogenization}.
\newblock Oxford University Press, 1999.

\bibitem[CD00]{Cas00}
J.~Casado-D{\'{\i}}az.
\newblock Two-scale convergence for nonlinear {D}irichlet problems in
  perforated domains.
\newblock {\em Proc. Roy. Soc. Edinburgh Sect. A}, 130(2):249--276, 2000.

\bibitem[CDG02]{CioDam02}
D.~Cioranescu, A.~Damlamian, and G.~Griso.
\newblock Periodic unfolding and homogenization.
\newblock {\em C. R. Math. Acad. Sci. Paris}, 335(1):99--104, 2002.

\bibitem[CDG08]{CioDam08}
D.~Cioranescu, A.~Damlamian, and G.~Griso.
\newblock The periodic unfolding method in homogenization.
\newblock {\em SIAM Journal on Mathematical Analysis}, 40(4):1585--1620, 2008.

\bibitem[CFK05]{Cirstea200551}
H.~Cirstea, G.~Faure, and C.~Kirchner.
\newblock A $\rho$-calculus of explicit constraint application.
\newblock {\em Electronic Notes in Theoretical Computer Science}, 117:51 -- 67,
  2005.
\newblock Proceedings of the Fifth International Workshop on Rewriting Logic
  and Its Applications (WRLA 2004).

\bibitem[CK01]{rhoCalIGLP-I+II-2001}
Horatiu Cirstea and Claude Kirchner.
\newblock The rewriting calculus --- {Part~I and II}.
\newblock {\em Logic Journal of the Interest Group in Pure and Applied Logics},
  9(3):427--498, May 2001.

\bibitem[CKLW03]{RewriteStrat_CHK2003}
Horatiu Cirstea, Claude Kirchner, Luigi Liquori, and Benjamin Wack.
\newblock Rewrite strategies in the rewriting calculus.
\newblock In Bernhard Gramlich and Salvador Lucas, editors, {\em {3rd
  International Workshop on Reduction Strategies in Rewriting and Programming
  }}, volume 86(4) of {\em {E}lectronic {N}otes in {T}heoretical {C}omputer
  {S}cience}, pages 18--34, Valencia, Spain, 2003. {E}lsevier.

\bibitem[GGJ09]{Gascon:2009}
Adri\`{a} Gasc\'{o}n, Guillem Godoy, and Florent Jacquemard.
\newblock Closure of tree automata languages under innermost rewriting.
\newblock {\em Electron. Notes Theor. Comput. Sci.}, 237:23--38, April 2009.

\bibitem[JZKO94]{JikKoz}
V.V. Jikov, V.~Zhikov, M.~Kozlov, and O.A. Oleinik.
\newblock {\em Homogenization of differential operators and integral
  functionals}.
\newblock Springer-Verlag, 1994.

\bibitem[Len97]{Len97}
M.~Lenczner.
\newblock Homog\'en\'eisation d'un circuit \'electrique.
\newblock {\em C. R. Acad. Sci. Paris S\'er. II b}, 324(9):537--542, 1997.

\bibitem[Len06]{Len06}
Michel Lenczner.
\newblock Homogenization of linear spatially periodic electronic circuits.
\newblock {\em NHM}, 1(3):467--494, 2006.

\bibitem[LS07]{LenSmi07}
M.~Lenczner and R.~C. Smith.
\newblock A two-scale model for an array of {AFM}'s cantilever in the static
  case.
\newblock {\em Mathematical and Computer Modelling}, 46(5-6):776--805, 2007.

\bibitem[MM09]{Marino:2009:GTS}
Daniel Marino and Todd Millstein.
\newblock A generic type-and-effect system.
\newblock In {\em Proceedings of the 4th international workshop on Types in
  language design and implementation}, TLDI '09, pages 39--50, New York, NY,
  USA, 2009. ACM.

\bibitem[SK95]{daglib:Kenn}
Kenneth Slonneger and Barry~L. Kurtz.
\newblock {\em Formal syntax and semantics of programming languages - a
  laboratory based approach}.
\newblock Addison-Wesley, 1995.

\bibitem[Tar55]{Tarski55}
Alfred Tarski.
\newblock A lattice-theoretical fixpoint theorem and its applications.
\newblock {\em The Journal of Symbolic Logic}, 5(4):370, 1955.

\bibitem[Ter03]{Terese03}
Terese.
\newblock {\em Term Rewriting Systems}, volume~55 of {\em Cambridge Tracts in
  Theor. Comp. Sci.}
\newblock Cambridge Univ. Press, 2003.

\bibitem[Won96]{Wong96aproof}
Wai Wong.
\newblock A proof checker for hol, 1996.

\end{thebibliography}
\end{document}